\renewcommand{\arraystretch}{1.3}
\newcommand{\PreserveBackslash}[1]{\let\temp=\\#1\let\\=\temp}
\newcolumntype{C}[1]{>{\PreserveBackslash\centering}p{#1}}
\newcolumntype{R}[1]{>{\PreserveBackslash\raggedleft}p{#1}}
\newcolumntype{L}[1]{>{\PreserveBackslash\raggedright}p{#1}}
\pgfplotsset{compat=1.18}
\tikzset{>=latex}
\newcommand{\superbracket}[1]{[\kern-.15em[{#1}]\kern-.15em]}
\newcommand{\defeq}{\vcentcolon=}
\newcommand{\eqdef}{=\vcentcolon}
\newcommand{\Koszulsign}[3][]{(-1)^{|#2||#3| #1}}
\newcommand{\Bracket}[3]{[#1\,#2]_{#3}}
\newcommand{\normalord}[1]{\pmb{(}#1\pmb{)}}
\newcommand{\Tgh}{T^{\text{gh}}}
\newcommand{\Mgh}{M^{\text{gh}}}
\newcommand{\Tmat}{T^{\text{mat}}}
\newcommand{\Mmat}{M^{\text{mat}}}
\newcommand{\Ttot}{T^{\text{tot}}}
\newcommand{\Mtot}{M^{\text{tot}}}
\DeclareMathOperator{\End}{End}
\DeclareMathOperator{\reg}{reg}
\DeclareMathOperator{\ad}{ad}
\DeclareMathOperator{\im}{im}
\DeclareMathOperator{\Vir}{Vir}
\DeclareMathOperator{\Id}{Id}
\DeclareSymbolFont{bbold}{U}{bbold}{m}{n}
\DeclareSymbolFontAlphabet{\mathbbold}{bbold}
\newcommand{\dotr}{\mbox{$\boldsymbol{\cdot}$}}
\newcommand{\g}{\ensuremath{\mathfrak{g}}}
\newcommand{\W}{\mathcal{W}}
\newcommand{\semiinfforms}{\ensuremath{\Lambda^{\dotr}_\infty}}
\newcommand{\h}{\mathfrak{h}}
\newcommand{\z}{\mathfrak{z}}
\newcommand{\M}{\mathfrak{M}}
\newcommand{\bTil}{\Tilde{b}}
\newcommand{\cTil}{\Tilde{c}}
\newcommand{\beTil}{\Tilde{\beta}}
\newcommand{\gamTil}{\Tilde{\gamma}}
\renewcommand*\env@matrix[1][\arraystretch]{%
  \edef\arraystretch{#1}%
  \hskip -\arraycolsep
  \let\@ifnextchar\new@ifnextchar
  \array{*\c@MaxMatrixCols c}}
\newcommand{\medoplus}{\mathbin{\mathpalette\make@med\oplus}}
\newcommand{\medotimes}{\mathbin{\mathpalette\make@med\otimes}}
\newcommand{\make@med}[2]{%
  \vcenter{\hbox{%
    \scalebox{1.5}{$\m@th#1#2$}%
  }}%
}
\newcommand{\ZZ}{\mathbb{Z}}
\newcommand{\CC}{\mathbb{C}}
\newcommand{\NN}{\mathbb{N}}
\newtheoremstyle{myremarkstyle} 
        {5pt}                    
        {5pt}                    
        {}                   
        {}                           
        {\bfseries}                   
        {:}                          
        {.5em}                       
        {}  
\newtheorem{proposition}{Proposition}[section]
\newtheorem{lemma}[proposition]{Lemma}
\newtheorem{corollary}[proposition]{Corollary}
\newtheorem{theorem}[proposition]{Theorem}
\theoremstyle{myremarkstyle}
\newtheorem{remark}[proposition]{Remark}
\theoremstyle{definition}
\newtheorem{definition}[proposition]{Definition}
\newcommand{\subdefhighlight}[1]{\textit{\color{MidnightBlue}{#1}}}
\definecolor{highlights}{RGB}{193,0,67}
\g@addto@macro{\UrlBreaks}{\UrlOrds}   
\numberwithin{equation}{section}
\begin{document}

\title{The BRST quantisation of chiral BMS-like field theories}

\author[Figueroa-O'Farrill]{José M Figueroa-O'Farrill}
\author[Vishwa]{Girish S Vishwa}
\address{Maxwell Institute and School of Mathematics, The University
  of Edinburgh, James Clerk Maxwell Building, Peter Guthrie Tait Road,
  Edinburgh EH9 3FD, Scotland, United Kingdom}
\email[JMF]{\href{mailto:j.m.figueroa@ed.ac.uk}{j.m.figueroa@ed.ac.uk}, ORCID: \href{https://orcid.org/0000-0002-9308-9360}{0000-0002-9308-9360}}
\email[GSV]{\href{mailto:}{G.S.Vishwa@sms.ed.ac.uk}, ORCID: \href{https://orcid.org/0000-0001-5867-7207}{0000-0001-5867-7207}}

\begin{abstract}
  The BMS$_3$ Lie algebra belongs to a one-parameter family of Lie
  algebras obtained by centrally extending abelian extensions of the
  Witt algebra by a tensor density representation.
  In this paper we
  call such Lie algebras $\hat{\g}_\lambda$, with BMS$_3$ corresponding
  to the universal central extension of $\lambda = -1$. We construct the BRST complex for
  $\hat{\g}_\lambda$ in two different ways: one in the language of
  semi-infinite cohomology and the other using the formalism of vertex
  operator algebras. We pay particular attention to the case of
  BMS$_3$ and discuss some natural field-theoretical realisations. We
  prove two theorems about the BRST cohomology of $\hat{\g}_\lambda$.
  The first is the construction of a quasi-isomorphic embedding of the
  chiral sector of any Virasoro string as a $\hat{\g}_\lambda$ string.
  The second is the isomorphism (as Batalin--Vilkovisky algebras) of any
  $\hat{\g}_\lambda$ BRST cohomology and the chiral ring of a
  topologically twisted $N{=}2$ superconformal field theory.
\end{abstract}

\maketitle
\tableofcontents

\section{Introduction}\label{sec: intro}

Two-dimensional conformal field theories (2d CFTs) have been of
immense physical and mathematical interest ever since they were
discovered to appear as worldsheet descriptions of string theory and
various condensed matter systems. The rigorous algebraic formulation
of 2d CFT led to the birth of vertex operator algebras
\cite{Borcherds:1983sq}, whose significance in mathematics first
became prominent due to the construction of the monster vertex algebra
by Frenkel, Lepowsky and Meurman \cite{Frenkel:1988xz} and its usage
in the proof of the monstrous moonshine conjecture by Richard
Borcherds \cite{Borcherds1992MonstrousMA}.

The seminal paper \cite{Belavin:1984vu} by Belavin, Polyakov and
Zamolodchikov not only revealed the crucial role played by the
Virasoro algebra in such theories, but also generated a huge amount of
interest in the study of \emph{extended conformal algebras}. These are
the symmetry algebras of field-theoretic extensions of 2d CFTs,
obtained by adding a set of fields of some conformal weights, which
contain the Virasoro algebra as a subalgebra. The most celebrated and
well-known examples of these are the affine Kac-Moody algebras and the
superconformal algebras, which lie at the heart of Wess--Zumino--Witten
(WZW) models \cite{Witten:1983ar} that describe strings propagating on
Lie groups and superstrings respectively. However, the resulting
symmetry algebra after the extension need not be a Lie algebra; the
most notable example of this is the $W_3$ algebra, introduced by
Zamolodchikov \cite{Zamolodchikov:1985wn}. The BRST cohomology of the
$W_3$ algebra was studied in detail by Bouwknegt, McCarthy and Pilch
\cite{1996LNPM...42.....B}.

In recent years, there has been an increased interest in studying
certain abelian extensions of the Witt/Virasoro algebras and their
corresponding representation and field theories. In particular, from
the field theory perspective, the extension of 2d CFTs by a spin-2
quasiprimary field $M(z)$ which has regular operator product expansion
(OPE) with itself has garnered attention, as the symmetry algebra of
one such field theory, known as the BMS\textsubscript{3} algebra, was
shown to appear as the symmetry algebra\footnote{However, chiral 2d
  CFT techniques are only applicable in the ambitwistor setting (see
  \cite{Casali:2016atr} and \cite[Appendix A]{MasonSkinner}), since in
  general, the BMS\textsubscript{3} symmetry of the closed bosonic
  tensionless string worldsheet does not appear chirally
  \cite{Isberg:1993av, Bagchi:2013bga, Bagchi:2020fpr}. We address
  this caveat in our analysis.} of the tensionless closed bosonic string worldsheet \cite{Isberg:1993av, Bagchi:2013bga, Bagchi:2020fpr}. In mathematics, (a special case of)
this was introduced as the $W(2,2)$ algebra by Zhang and Dong
\cite{zhang2007walgebra}. This led to various works on the
representation theory of the BMS\textsubscript{3} algebra
\cite{Bagchi:2009pe, radobolja2013subsingular, JIANG2015118,
  Adamovic_2016, Jiang:2017vtt}. More recently, the representation
theory of numerous other extensions of the Virasoro algebra have been
studied. Some examples are the (twisted) Heisenberg-Virasoro algebra
\cite{tan2021simple}, mirror Heisenberg-Virasoro algebra
\cite{tan2021simple, gao2021nonweight}, $N=1$ super-BMS algebra
\cite{liu2023smooth}, BMS Kac-Moody algebra and Ovsienko--Roger
algebra \cite{Dilxat:2022zso}. What is noteworthy about all these
algebras is that they are all special cases (or their minimally
supersymmetric extensions) of the Lie algebra $\W(a,b)$, which is
constructed via the semi-direct sum of the Witt algebra $\W$ and
its tensor density modules $I(a,b)$, for $(a,b)\in\CC^2$ (see \cite{GaoJiangPei, FarahmandParsa:2018ojt}. Explicitly,
$\W(a,b) = \bigoplus_{n\in\ZZ} \big(\CC L_n \oplus \CC M_n\big)$ with
Lie bracket
\begin{equation}
    [L_n, L_m] = (n-m) L_{m+n} \quad [L_n, M_m] = -(a+m+ b n) M_{m+n} \quad [M_n, M_m] = 0.
\end{equation}
For our purposes, we restrict ourselves to $a,b\in\ZZ$, and since
$\W(a+a',b)\cong\W(a,b)$ for all $a'\in \ZZ$, it suffices to consider
the one-parameter family of Lie algebras\footnote{This one-parameter family of algebras was shown to appear as the near horizon symmetry algebra of non-extremal black holes in \cite{Grumiller:2019fmp}. In their work, the parameter $s$ comes from a choice of boundary condition, where $s=-\lambda$.} $\g_\lambda \defeq
\W(0,\lambda)$, where $\lambda\in\ZZ$.

In this paper, we consider 2d CFTs whose symmetry algebra is
$\g_\lambda$. As usual, this statement is merely a reformulation of
the Lie algebra $\g_\lambda$ in terms of fields in one formal variable
admitting certain OPEs. Nonetheless, with such a field-theoretic
formulation, we may then consider its BRST quantisation. In this
paper, we explicitly build the BRST operator for all $\g_\lambda$
field theories as the semi-infinite differential of the Lie algebra
$\g_\lambda$.

The notion that the BRST cohomology of various 2d CFTs coincides with
the semi-infinite cohomology of the underlying symmetry Lie algebra,
relative to its centre, is not a new one, particularly since the work
of Frenkel, Garland and Zuckerman \cite{frenkel1986semi}, in which
they explicitly computed the spectrum of the bosonic string as the
(relative) semi-infinite cohomology of the Virasoro algebra with
values in the Fock module. This interplay between the physical
spectrum of states and an algebraic structure corresponding to Lie
algebras has been a useful and powerful tool in both mathematics and
physics. The purely algebraic approach to BRST cohomology through the
construction of the semi-infinite wedge representation of the Lie
algebra at hand is very instructive in letting us build a free field
realisation of that algebra in terms of fermionic $bc$-systems. This
is done by repackaging the findings from semi-infinite representation
theory into generating functions of one variable. While this
formulation may simply be regarded as a trick which allows one to use
OPEs instead of the cumbersome infinite sums in semi-infinite
cohomology theory to perform mathematical computations, it also admits
a natural field-theoretic interpretation, where those generating
functions are precisely the fields that generate the 2d CFT of the
$bc$-systems. On the other hand, these $bc$-systems would be the
Faddeev--Popov ghosts that one would introduce in the BRST quantisation
of a 2d CFT whose symmetry algebra is the (central extension of the)
Lie algebra with which we started.

This approach of recasting mode algebras as fields will underpin the
entire paper, which is structured as follows. Section \ref{sec: Setup}
will introduce definitions for 2d CFTs and show the field-theoretic
formulation of $\g_\lambda$. In section \ref{sec: SIC and FFR}, we
review the notion of semi-infinite cohomology of $\ZZ$-graded
infinite-dimensional Lie algebras in general, and then construct the
semi-infinite wedge representation of $\g_\lambda$ explicitly to
demonstrate how fermionic $bc$-systems are simply the field-theoretic
formulation of semi-infinite cohomology. We then construct the BRST
operator of $\g_\lambda$ field theories, which indeed coincides with
the semi-infinite differential of $\g_\lambda$, and requires that the
matter sector of such theories have (Virasoro) central charge
$26+2(6\lambda^2-6\lambda+1)$ to be square-zero. Using these
constructions, we present a set of embedding theorems in section
\ref{sec: embedding theorems}, which relate the relative semi-infinite
cohomology (a.k.a. BRST cohomology) of the Virasoro and $\g_\lambda$
algebras and the chiral ring of a twisted $N=2$ superconformal field
theory (SCFT). These results present isomorphisms of homotopy
Batalin--Vilkovisky (BV) algebras, which are ``stronger'' than the
isomorphisms of graded vector spaces which one would expect from the
semi-infinite analogue of Shapiro's lemma (proven by Voronov in
\cite{voronov2002semiinfinite}). In section \ref{sec: BMS3}, we study
the special case of $\g_{\lambda=-1}$ in detail, since it is
isomorphic to the (centreless) BMS\textsubscript{3} algebra. We go
beyond semi-infinite representations and argue why a square-zero BRST
operator for the BMS\textsubscript{3} algebra cannot exist for
$c_M \neq 0$. We present two physical realisations of chiral
BMS\textsubscript{3} field theories - the ambitwistor string
\cite{MasonSkinner, Casali:2016atr} and the gauged Nappi--Witten string
\cite{Nappi:1993ie}. Finally, in section \ref{sec: Conclusion}, we
summarise our results for generic $\lambda\in\ZZ$, address the
implications of these results for BMS\textsubscript{3} field theories
(i.e., the case $\lambda=-1$), with reference to the caveat of
BMS\textsubscript{3} symmetry appearing in a non-chiral manner in
tensionless strings, and present some ongoing and potential extensions
to our work.

\section{Preliminaries}\label{sec: Setup}

In this section, we set up some notation and terminology with regards
to 2d CFTs and introduce the class of Lie algebras $\W(a,b)$. We then
show how to set up extended CFTs which admit $\g_\lambda \cong
\W(0,\lambda)$ symmetry, which will be the focus of this paper.

\subsection{Algebraic formalism for meromorphic 2d CFTs}
\label{sec:algebr-form-mero}
We refer the reader to the standard references \cite{Frenkel:1988xz, Ginsparg:1988ui, Thielemans:1994er, Schellekens:1996tg, DiFrancesco:1997nk, Polchinski:1998rq, schottenloher2008mathematical} on this subject for more details.

\begin{definition}
\label{def: VOA}
A \subdefhighlight{meromorphic 2d CFT} or \subdefhighlight{vertex
  operator algebra (VOA)} is given by the following data:
 \begin{enumerate}[label=(D\arabic*)]
 \item A complex vector superspace
   $V=\medoplus_{n\in\mathbb{Z}} V^{\Bar{0}}_n \oplus V^{\Bar{1}}_n$
   which has a $\ZZ_2$-grading and a $\ZZ$-grading that are compatible
   with each other, spanned by elements known as
   \subdefhighlight{states}. The $\ZZ$-grading is known as
   \subdefhighlight{conformal weight}. \label{item D1 space of states}
     \item An injective linear mapping sending a state $A\in V$ to a
       field $A(z)=\End V [[z,z^{-1}]]$ known as the
       \subdefhighlight{state-field correspondence}. For all $A\in
       V_h$, $A(z)=\sum A_n z^{-n-h}$. 
    \item A linear map $\partial\colon V_{h}\rightarrow V_{h+1}$ such
      that $(\partial A)(z)\defeq \frac{d}{dz}A(z)$.
    \item A set of bilinear brackets $[-,-]_n : V \otimes V
      \rightarrow V$, labelled by $n\in\mathbb{Z}$, defined by the
      \subdefhighlight{operator product expansion (OPE)}
    \begin{equation}
        \label{eq: OPE_def}
        A(z)B(w) = \sum_{n\ll\infty} \frac{[A\,B]_n (w)}{(z-w)^n}
    \end{equation}
    where the summation index $n\ll \infty$ indicates that there are
    only a finite number of singular terms (those with $n>0$) in the
    sum, satisfying
    \begin{itemize}
        \item \textbf{Identity:} There is a distinguished state
       called the \subdefhighlight{vacuum} $\mathbbm{1}\in V^{\Bar{0}}_0$ such that
       $\lim_{z\to 0} A(z) \mathbbm{1} = A$ for all $A\in
       V$ and $\partial \mathbbm{1}=0$. Thus, for all $A\in V$,
          \begin{equation*}
            [\mathbbm{1}\, A]_{n} = 
            \begin{cases}
              A,\quad n=0,\\
              0,\quad \text{otherwise}.
            \end{cases}
          \end{equation*}
        \label{item: Identity axiom CFT}
        \item \textbf{Commutativity:} For all $A,B\in V$,
        \begin{equation*}
            [A\,B]_n-\Koszulsign[+n]{A}{B}[B\,A]_n=\sum_{l\geq1}\frac{(-1)^{l+1}}{l!}\partial^l[A\,B]_{n+l}.
        \end{equation*}
        \label{item: commutativity CFT}
        \item \textbf{Associativity:} For all $A,B,C\in V$,
        \begin{equation*}
        \begin{split}
            \Bracket{\Bracket{A}{B}{p}}{C}{q} &= \sum_{l\geq q} (-1)^{l-q} \binom{p-1}{l-q} \Bracket{A}{\Bracket{B}{C}{l}}{p+q-1} \\
             - &\Koszulsign{A}{B} \sum_{l\geq 1} (-1)^{p-l} \binom{p-1}{l-1} \Bracket{B}{\Bracket{A}{C}{l}}{p+q-l}.
        \end{split}
        \end{equation*}
        \label{item: associativity CFT}
    \end{itemize} 
    The \subdefhighlight{normal ordered product} of two fields $A(z)$
    and $B(z)$ is their $n=0$ bracket and is denoted $[AB]_0\eqdef
    \normalord{AB}$. For convenience, we denote nested normal-ordered
    products as
    $\normalord{ABC}\defeq\normalord{A\normalord{BC}}$.\label{item D4}
    \item \label{item: D5 Vir element} A \subdefhighlight{Virasoro
        element} $T=\sum_{n\in\ZZ} L_n z^{-n-2} \in V^{\Bar{0}}_2$
      such that
    \begin{itemize}
        \item $[T\,T]_{n>4}=0$, $[T\,T]_4=\frac{1}{2}c_L\mathbbm{1}$,
          $[T\,T]_3= 0$, $[T\, T]_2 = 2T$ and $[T\,T]_1=\partial T$, where
          $c\in\CC$ is known as the  \subdefhighlight{central charge}.
        \item For all $A\in V_h$, $[T\,A]_2=hA$ and $[T\,A]_1=\partial
          A$. If in addition $[TA]_{n\geq 3}=0$, then $A(z)$ is a
          \subdefhighlight{primary field} with conformal weight $h$.
          If $[T\,A]_3 = 0$, but for some $n>3$, $[T\,A]_n \neq 0$,
          then $A(z)$ is a \subdefhighlight{quasiprimary field}.
    \end{itemize}
  \end{enumerate}
  From this point, meromorphic 2d CFTs as given in this definition
  will just be referred to as CFTs. CFTs admit rich mathematical
  structure and thus a myriad of useful properties to probe them. Some
  essential ones are listed and derived in appendix \ref{app:prop-merom-2d}. In particular, \ref{item: Mode algebra
    OPE relation} together with \ref{item: D5 Vir element} imply that
  the modes $\{L_n\}_{n\in\ZZ}$ obey the Virasoro algebra, and that
  $V$ contains a graded representation of the Virasoro algebra with
  central charge $c_L$, where the grading element $L_0\in\End V$ is
  diagonalisable (due to \ref{item D1 space of states}). If the field
  theory is generated by just $T$, we are in the usual case of
  non-logarithmic, meromorphic 2d CFT, whose symmetry algebra is that
  of the modes of the field $T(z)$ (i.e., the Virasoro algebra).
\end{definition}

Of course, using the formalism of definition \ref{def: VOA}, we can
consider field theories which are not generated by $T(z)$ alone, but
instead by $T(z)$ and an additional set of fields
$\{W_i(z)\}_{i\in I}$ of weight $h_i$, for some finite set $I$, with
OPEs that obey the axioms in \ref{item D4}. Such field theories are
still CFTs, since the generator of the conformal symmetry, $T(z)$, is
still one of the generators. The symmetry algebra of any such field
theory is then the mode algebra of the fields $T(z)$ and
$\{W_i(z)\}_{i\in I}$. It will contain the Virasoro algebra as a
subalgebra by construction. Hence, such CFTs are called
\subdefhighlight{extended CFTs.}

We now define a well-known example of CFTs called $bc$-systems
\cite{Friedan:1985ey, Friedan:1985ge}. In string theory, these often
appear in the ghost sector of the theory.
 
\begin{definition}
  \label{def: bc-systems general}
  A weight $(1-\lambda,\lambda)$ \subdefhighlight{$bc$-system} is a
  2d CFT formed from two bosonic (resp. fermionic) primary fields
  $b(z)$ and $c(z)$ of weights $1-\lambda$ and $\lambda$ with OPEs
  \begin{equation}
    b(z) c(w) = \frac{\mathbbm{1}(w)}{z-w}+\reg \iff [cb]_1=\epsilon,
  \end{equation}
  where $\epsilon=\pm1$ if $b(z)$ and $c(z)$ are fermionic (resp. bosonic).
  The Virasoro element is
  \begin{equation}
    \label{eq: Tbc general}
    T^{bc} = -\epsilon(1-\lambda) \normalord{b\partial c} + \epsilon \lambda \normalord{\partial b c}
  \end{equation}
  with central charge  $-\epsilon 2(6\lambda^2 - 6\lambda + 1)$. The
  fields admit the following mode expansions:
  \begin{equation}
    b(z) \defeq \sum_{n\in\ZZ}b_n z^{-n-(1-\lambda)} \quad\quad c(z) \defeq \sum_{n\in\ZZ} c_n z^{-n-\lambda} \label{eq: bc general mode exp}\\
  \end{equation}
  where $b_n$ and $c_n$ are endomorphisms of the underlying vector
  space of states of the CFT. Given a $bc$-system, a
  \subdefhighlight{vacuum state} $\ket{q}\in V$ of charge $q$, where
  $q\in\mathbb{Z}+\nicefrac{1}{2}$ for the NS sector and
  $q\in\mathbb{Z}$ for the R sector, is given by the conditions
  \begin{equation}
    \label{eq: bc_vacuum}
    \begin{split}
      &b_n \ket{q} = 0 \quad \quad n > \epsilon q - (1-\lambda) \\
      &c_n \ket{q} = 0\quad \quad n \geq - \epsilon q + (1-\lambda)
    \end{split}
  \end{equation}
  The space of states is built from the modes $b_n$ and $c_n$ that
  act non-trivially on the chosen vacuum.
\end{definition}
  
\begin{remark}
  Conventionally, bosonic $bc$-systems are referred to as
  $\beta\gamma$-systems. We will also adopt this convention. The
  corresponding space of states with vacuum choice $\ket{q}$ will be
  called $V^{\beta\gamma}_q$. For the fermionic $bc$-systems, the
  space of states will either be referred to as $V^{bc}$ or
  $\semiinfforms$, the latter notation denoting the space of
  semi-infinite forms, introduced in the next section.
\end{remark}

We will revisit $bc$-systems when we construct semi-infinite
cohomology and discuss BRST quantisation of extended CFTs.

\begin{definition}
  \label{def: g_lambda}
    The one-parameter family of \subdefhighlight{Lie algebras}
    $\g_\lambda \defeq\W(0,\lambda)$ has underlying vector space
    generated by $\{L_n, M_n\}_{n\in\ZZ}$ and is defined by the Lie bracket
    \begin{equation}
        \label{eq: g_lambda defn}
        [L_n, L_m] = (n-m) L_{m+n} \quad [L_n, M_m] = -(m+\lambda n) M_{m+n} \quad [M_n, M_m] = 0.
    \end{equation}
    Unless mentioned otherwise, we take $\lambda\in\ZZ$. Since there
    is a surjective Lie algebra homomorphism $\g_\lambda \rightarrow
    \W$ to the Witt algebra, we may pull back the
    \subdefhighlight{Gelfand-Fuchs cocycle}
    \begin{equation}
    \label{eq: Gelfand-Fuks cocycle}
        \gamma_V (L_n, L_m) = \frac{1}{12} n(n^2-1)\delta^0_{m+n}
    \end{equation}
    to $\g_\lambda$. This allows us to centrally extend $\g_\lambda$
    to $\hat{\g}_\lambda$, for all $\lambda\in\CC$. Explicitly, the
    Lie bracket on $\hat{\g}_\lambda$ is given by
    \begin{equation}
        \label{eq: ghat_lambda defn}
        \begin{split}
            [L_n, L_m] &= (n-m) L_{m+n} + \frac{1}{12}n(n^2-1)\delta^0_{m+n} c_L \\
            [L_n, M_m] &= -(m+\lambda n) M_{m+n} \\
            [M_n, M_m] &= 0.
        \end{split}
    \end{equation}
\end{definition}

\begin{remark}\label{rem:other-lambda}
    For $\lambda=-1,0,1$, there exist other possible central charges
    coming from other central extensions of these Lie algebras (see
    \cite[Theorem~2.3]{GaoJiangPei}). 
\end{remark}

\begin{definition}
\label{def: g_lambda field theory}
Let $\rho\colon\hat{\g}_\lambda \to \End V$ be a $\ZZ$-graded
$\hat{\g}_\lambda$-module\footnote{Usually, one works with modules in
  the Category $\mathcal{O}$ (see Definition~\ref{def:catO})}, where
$\rho(L_0)\in\End V$ is the grading element. Define generating
functions
\begin{equation*}
T(z) = \sum_{n\in\ZZ}\rho(L_n) z^{-n-2}\quad\text{and}\quad M(z) =
\sum_{n\in\ZZ}\rho(M_n) z^{-n-(1-\lambda)}.
\end{equation*}
These have the following OPEs:
\begin{align}
     T(z)T(w) &= \frac{1}{2}\frac{c_L\mathbbm{1}(w)}{(z-w)^4} + \frac{2T(w)}{(z-w)^2} + \frac{\partial T(w)}{z-w} + \reg. \label{eq: g_lambda TT OPE}\\
    T(z)M(w) &= \frac{(1-\lambda)M(w)}{(z-w)^2} + \frac{\partial M(w)}{z-w} + \reg. \label{eq: g_lambda TM OPE} \\
    M(z)M(w) &= \reg. \label{eq: g_lambda MM OPE}
\end{align}
Using \ref{item: Mode algebra OPE relation}, one can show that the
above OPEs are equivalent to the commutator of the modes $\rho(L_n)$
and $\rho(M_n)$ obeying the $\hat{\g}_\lambda$ algebra \eqref{eq: ghat_lambda defn}, with the central
element $c_L$ acting as some multiple of the identity on the space
$V$. For convenience, this multiple is also called $c_L$, which is
then referred to as the central charge of the representation
\ref{item: D5 Vir element}. Thus, a
\subdefhighlight{$\hat{\g}_\lambda$ field theory} is any CFT generated
by fields $T(z)$ and $M(z)$ which admit the OPEs \eqref{eq: g_lambda
  TT OPE}, \eqref{eq: g_lambda TM OPE} and \eqref{eq: g_lambda MM
  OPE}. Equivalently, $\hat{\g}_\lambda$ field theories are extended
CFTs with symmetry algebra $\hat{\g}_\lambda$.
\end{definition}

\section{Semi-infinite cohomology and free field realisations}\label{sec: SIC and FFR}

This section aims to elucidate the relationships between the BRST
quantisation of extended CFTs whose symmetry algebras are Lie algebras
and the relative semi-infinite cohomology of the underlying symmetry
algebra. For brevity, we will drop the ``relative'' once this notion
is introduced. We will write down free field realisations of
$\g_\lambda$ in terms of both fermionic and bosonic $bc$-systems. As
we will show, the former is simply the field-theoretic formulation of
the semi-infinite wedge representation of $\hat{\g}_\lambda$; the
reason for needing the ``hat'' will be clarified in this section. It
will also enter our expression for the BRST current, whose zero mode
(i.e BRST operator of the $\hat{\g}_\lambda$ field theory) coincides
with the semi-infinite differential of $\hat{\g}_\lambda$.

\subsection{Review of semi-infinite cohomology}

We provide a review of semi-infinite cohomology as explained in
\cite{frenkel1986semi} and
\cite{He_2018_RemarkonSemiInfCohomology}. Some key proofs are provided
in appendix~\ref{app:some-proofs-semi}. After building up the
framework in general, we show the explicit computations of the
semi-infinite wedge representation of $\g_\lambda$.

\subsubsection{Building the space of semi-infinite forms}

Let $\g = \medoplus_{n\in\mathbb{Z}} \g_n$ be a $\ZZ$-graded Lie
algebra over $\mathbb{C}$, with $\dim \g_n < \infty\ \ \forall
n\in\mathbb{Z}$. Let $\g_{\pm} \defeq \medoplus_{\pm n>0} \g_n$. Let
$\{e_i\}_{i\in\mathbb{Z}}$ be a basis for $\g$ such that if $e_i \in
\g_n$ (for some $i,n\in\mathbb{Z}$), then either $e_{i+1}\in \g_n$ or
$e_{i+1}\in \g_{n+1}$. Let $\g' = \medoplus_{n\in\mathbb{Z}} \g'_n$
be the restricted dual of \g\ with $\g'_n = \g^*_n = \text{Hom}(\g_n,
\mathbb{C})$. Let $\{e'_i\}_{i\in\mathbb{Z}}$ be the dual basis for
$\g'$, where $e'_i(e_j) = \delta_{ij}$.

We may define a Clifford algebra $\text{Cl}(\g \medoplus \g')$ with
respect to the dual pairing $\langle-,-\rangle : \g' \times \g \to
\mathbb{C}$, defined by $\left<x',x\right> \defeq x'(x)$, as follows. For any
$x+x' \in \text{Cl}(\g \medoplus \g')$, we have the following relation
between the product "$\cdot$" of the algebra and the dual pairing:
\begin{equation}
    (x+x')\cdot(x+x') \eqdef (x+x')^2 = \langle x', x\rangle 1.
\end{equation}
Polarising this relation we obtain, for a more general combination of
elements,
\begin{equation}
    (a+b')\cdot(c+d') + (c+d')\cdot(a+b') = \langle d', a \rangle 1 + \langle b', c \rangle 1.
\end{equation}

\begin{definition}
\label{def: semiinfforms}
The \subdefhighlight{space of semi-infinite forms}
$\Lambda^{\dotr}_\infty$ is the space spanned by monomials
\begin{equation}
  \omega \defeq e'_{i_1} \wedge e'_{i_2} \wedge \dots
\end{equation}
where $i_1 > i_2 > \dots$ and $\exists$ $N(\omega) \in \mathbb{Z}$
such that $i_{k+1} = i_k - 1$ $\forall k > N(\omega)$.
\end{definition}

\begin{definition}
  For all $x\in\g$, $x'\in\g'$, we define the
  \subdefhighlight{contraction} $\iota(x)\in\End \semiinfforms$ and
  \subdefhighlight{exterior or wedge product}
  $\varepsilon(x')\in\End\semiinfforms$ through their actions on
  monomials as follows:
  \begin{align}
    \iota(x) e'_{i_1} \wedge e'_{i_2} \wedge \dots &= \sum_{k\geq1} (-1)^{k-1} \langle x,e'_{i_k}\rangle e'_{i_1} \wedge e'_{i_2} \wedge \dots \wedge \widehat{e'_{i_k}} \wedge\dots \label{eq: defn_contraction},\\
    \varepsilon(x') e'_{i_1} \wedge e'_{i_2} \wedge \dots &= x' \wedge e'_{i_1} \wedge e'_{i_2} \wedge \dots \label{eq: defn_extprod}
  \end{align}
  where the hat denotes omission.
\end{definition}

The following is the result of a simple calculation.

\begin{lemma}
\label{lem: fundamental_anticomms}
For all $x,y\in\g$ and $x',y'\in\g'$, the following (anti)commutation relations hold:
\begin{equation}
\label{eq: contr_ext_anticomms}
\begin{split}
     [\iota(x), \iota(y)] &= [\varepsilon(x'),\varepsilon(y')] = 0\\ 
     [\iota(x), \varepsilon(x')] &= \langle x', x \rangle \Id_{\semiinfforms}.
 \end{split}  
\end{equation}
\end{lemma}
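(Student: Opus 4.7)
The plan is to verify each identity by direct action on a monomial basis $\omega = e'_{i_1}\wedge e'_{i_2}\wedge\cdots$ of $\semiinfforms$, reducing by bilinearity to the case in which $x,y$ are basis vectors $e_a,e_c\in\g$ and $x',y'$ are dual basis vectors $e'_b,e'_d\in\g'$. The first two identities are then essentially combinatorial, and the third follows from a short case analysis.

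For $\{\varepsilon(x'),\varepsilon(y')\}=0$ the argument is immediate: by \eqref{eq: defn_extprod},
\[
\varepsilon(x')\varepsilon(y')\omega \;=\; x'\wedge y'\wedge \omega \;=\; -\,y'\wedge x'\wedge \omega \;=\; -\varepsilon(y')\varepsilon(x')\omega,
\]
using the antisymmetry of the wedge on semi-infinite forms. For $\{\iota(x),\iota(y)\}=0$, applying $\iota(e_c)$ then $\iota(e_a)$ deletes in turn two factors of $\omega$; the resulting monomial depends only on the unordered pair of positions removed, and I would expand both $\iota(e_a)\iota(e_c)\omega$ and $\iota(e_c)\iota(e_a)\omega$ via \eqref{eq: defn_contraction}. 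For each pair of positions $p<q$ the coefficient produced by the first composition carries a sign $(-1)^{(p-1)+(q-2)}$ (because after the first deletion the index originally at position $q$ has moved to position $q-1$), while the second composition yields sign $(-1)^{(q-1)+(p-1)}$; these differ by a minus sign, so the two contributions cancel term by term.

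For $\{\iota(x),\varepsilon(x')\}=\langle x',x\rangle\Id$, reduce to $x=e_a$, $x'=e'_b$, so that $\langle x',x\rangle=\delta_{ab}$. I would then split into cases according to whether $e'_b$ appears in $\omega$:
\begin{itemize}
\item If $e'_b$ already appears in $\omega$, then $\varepsilon(e'_b)\omega=0$ by antisymmetry, so only the other term $\varepsilon(e'_b)\iota(e_a)\omega$ contributes. Suppose $e'_b$ occupies position $k$ in $\omega$. Then $\iota(e_a)\omega=(-1)^{k-1}\delta_{ab}\,\omega_{\widehat{k}}$, and reinserting $e'_b$ at the front and sliding it back into position $k$ introduces another sign $(-1)^{k-1}$, yielding $\delta_{ab}\,\omega$.
\item If $e'_b$ does not appear in $\omega$, then $\iota(e_a)\omega$ is nonzero only when $a=b$, in which case it still vanishes. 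The only contribution is $\iota(e_a)\varepsilon(e'_b)\omega$: after inserting $e'_b$ at the front, $\iota(e_a)$ picks out exactly this newly inserted factor (with coefficient $\delta_{ab}$ and sign $+1$ since it sits at position one), producing $\delta_{ab}\,\omega$.
\end{itemize}
In both situations the total is $\langle e'_b,e_a\rangle\,\omega$, as required.

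The only real obstacle is the sign bookkeeping, especially the subtle fact that after a deletion the positions of subsequent indices are renumbered, and after an insertion the inserted factor must be slid into its correct place in the decreasing sequence; both manoeuvres introduce signs that need to be accounted for consistently. Once this is handled, each identity reduces to a short cancellation on monomials, and extends by bilinearity (and continuity in the monomial basis, which poses no issue since only finitely many positions are ever affected).
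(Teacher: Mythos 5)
Your overall strategy --- reduce by bilinearity to basis vectors and verify each identity directly on monomials --- is exactly the ``simple calculation'' the paper has in mind (it offers no written proof), and your treatment of $[\varepsilon(x'),\varepsilon(y')]=0$ and $[\iota(x),\iota(y)]=0$ is correct: the relative sign $(-1)^{(p-1)+(q-2)}$ versus $(-1)^{(q-1)+(p-1)}$ from the renumbering after the first deletion is precisely the point, and the diagonal case $\iota(e_a)^2=0$ is trivially covered since a dual basis vector occurs at most once in a monomial.

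There is, however, a localized flaw in your case analysis for $[\iota(e_a),\varepsilon(e'_b)]=\delta_{ab}\Id$, in the off-diagonal sub-case $a\neq b$ with $e'_a$ present in $\omega$ (at position $m$, say) and $e'_b$ absent. There you assert that ``$\iota(e_a)\omega$ is nonzero only when $a=b$'' and that the only contribution is from $\iota(e_a)$ hitting the newly inserted $e'_b$ at position one. Neither is true: $\iota(e_a)\omega=(-1)^{m-1}\omega_{\widehat m}\neq 0$, so $\varepsilon(e'_b)\iota(e_a)\omega=(-1)^{m-1}\,e'_b\wedge\omega_{\widehat m}$ is nonzero, and $\iota(e_a)\bigl(e'_b\wedge\omega\bigr)$ acquires, besides the vanishing $\delta_{ab}$ term from position one, the contribution $-e'_b\wedge\iota(e_a)\omega$ from the shifted positions $k\geq 2$. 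The identity holds because these two nonzero terms cancel --- effectively the Leibniz-type identity $\iota(e_a)\bigl(e'_b\wedge\omega\bigr)=\langle e'_b,e_a\rangle\,\omega-e'_b\wedge\iota(e_a)\omega$ --- not because each term vanishes separately. A similar (harmless) imprecision occurs in your first bullet: $\iota(e_a)\omega$ is not $(-1)^{k-1}\delta_{ab}\,\omega_{\widehat k}$ when $a\neq b$ but $e'_a\in\omega$; the extra term is merely killed afterwards by $\varepsilon(e'_b)$ because $e'_b$ is still present. The conclusion is correct in all cases, but the argument as written does not establish the off-diagonal case; the fix is the one-line cancellation above.
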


\begin{proposition}\label{prop:cliff-module}
  $\Lambda^{\dotr}_\infty$ admits a Clifford module structure over $\textup{Cl}(\g \medoplus \g')$.
\end{proposition}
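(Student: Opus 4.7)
The plan is to invoke the universal property of the Clifford algebra. Concretely, I will define a linear map
\[
\phi\colon \g\medoplus\g' \longrightarrow \End(\semiinfforms),\qquad \phi(x+x') \defeq \iota(x) + \varepsilon(x'),
\]
verify that its image lands in endomorphisms of $\semiinfforms$ (not just of some larger space of formal wedges), check that $\phi$ squares correctly with respect to the quadratic form defining $\textup{Cl}(\g\medoplus\g')$, and conclude that $\phi$ extends to a unital algebra homomorphism $\textup{Cl}(\g\medoplus\g') \to \End(\semiinfforms)$, which is precisely the required Clifford module structure.

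First, I would confirm that $\iota(x)$ and $\varepsilon(x')$ are well-defined operators on $\semiinfforms$. For $\iota(x)$, since $x\in\g$ is a finite linear combination of the $e_i$, the pairing $\langle x,e'_{i_k}\rangle$ vanishes for all but finitely many $k$, so the sum in \eqref{eq: defn_contraction} is finite; moreover, deleting a single index $e'_{i_k}$ from a sequence $i_1>i_2>\cdots$ still leaves a strictly decreasing sequence whose tail eventually satisfies $i_{k+1}=i_k-1$, so the result lies in $\semiinfforms$. For $\varepsilon(x')$, writing $x'$ as a finite combination of the $e'_j$, each $e'_j\wedge\omega$ either vanishes (if $j$ already appears among the $i_k$) or, after reordering with the appropriate sign, yields a semi-infinite monomial whose tail is unchanged from that of $\omega$; so $\varepsilon(x')\omega\in\semiinfforms$ as well.

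Next, I need to verify the Clifford relation. For any $y=x+x'\in\g\medoplus\g'$, Lemma~\ref{lem: fundamental_anticomms} gives
\[
\phi(y)^2 = \iota(x)^2 + \{\iota(x),\varepsilon(x')\} + \varepsilon(x')^2 = \langle x',x\rangle\,\Id_{\semiinfforms},
\]
which matches the defining relation $(x+x')\cdot(x+x')=\langle x',x\rangle 1$ of $\textup{Cl}(\g\medoplus\g')$. By the universal property of Clifford algebras (applied to the quadratic form $Q(x+x')\defeq\langle x',x\rangle$ on $\g\medoplus\g'$), there is a unique unital algebra homomorphism
\[
\Tilde{\phi}\colon \textup{Cl}(\g\medoplus\g') \longrightarrow \End(\semiinfforms)
\]
extending $\phi$, and this is exactly the data of a Clifford module structure on $\semiinfforms$.

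The main obstacle I expect is purely bookkeeping: making sure that when I unpack $x$ and $x'$ in terms of the basis $\{e_i\}$ and $\{e'_i\}$, the contraction and exterior product really do stay inside the semi-infinite subspace (as opposed to producing an infinite linear combination of monomials or violating the tail condition). Once that finite-support check is settled, the rest is just the universal property. No genuine computation beyond Lemma~\ref{lem: fundamental_anticomms} is needed.
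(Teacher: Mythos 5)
Your proposal is correct and follows essentially the same route as the paper: define the map $x+x'\mapsto\iota(x)+\varepsilon(x')$, verify the Clifford relation via Lemma~\ref{lem: fundamental_anticomms}, and conclude by the universal property. The only difference is that you spell out the well-definedness of $\iota$ and $\varepsilon$ on $\semiinfforms$ and the appeal to the universal property explicitly, which the paper leaves implicit.
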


\subsubsection{Constructing the semi-infinite wedge representation}

\begin{definition}
Just like with any other Lie algebra, we can define the \subdefhighlight{adjoint representation} of \g \ via the linear map
$\ad\colon \g \to\End(\g)$ 
\begin{equation}
\label{eq: adjoint REP}
    \ad_x \defeq [x,-] \quad \forall x\in\g.
\end{equation}
Similarly, the \subdefhighlight{coadjoint representation} of $\g$ is given by the linear map $\ad'\colon \g \to \End(\g')$ such that $\forall x,y\in\g$, $y'\in\g'$,
\begin{equation}
\label{eq: coadjoint REP}
    (\ad'_x y')(y) \defeq -\langle y', \ad_x \rangle,
\end{equation}
so $\ad'_x y' = \langle y', [x,-]\rangle \in \g'$ indeed.
\end{definition}
The most natural guess for a representation $\rho\colon\g\to\End\semiinfforms$ is the generalisation of the coadjoint action to semi-infinite monomials:
\begin{equation}
\label{eq: rho_naive}
    \rho(x) e'_{i_1} \wedge e'_{i_2} \wedge \dots = \sum_{k\geq1} e'_{i_1} \wedge e'_{i_2} \wedge \dots \wedge \ad'_x e'_{i_k} \wedge \dots = \sum_{k\geq 1} \varepsilon(\ad'_x e'_{i_k}) \iota(e_{i_k}) e'_{i_1} \wedge e'_{i_2} \wedge \dots
\end{equation}
\begin{proposition}
\label{prop: rho_intext_commrels}
The following commutation relations hold for all $x,y\in \g$, $y'\in\g'$:
\begin{equation}
    \label{eq: rho_intext_commrels}
    [\rho(x), \iota(y)] = \iota(\ad_x y)  \quad \quad [\rho(x), \varepsilon(y')] = \varepsilon(\ad'_x y').
\end{equation}
\end{proposition}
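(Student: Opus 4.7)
The plan is to establish both identities by expanding $\rho(x)$ via formula \eqref{eq: rho_naive} as the (formal) sum
\[
\rho(x) = \sum_{k} \varepsilon(\ad'_x e'_k)\,\iota(e_k),
\]
and then applying the super-Leibniz rule to each summand, with the inner brackets reduced to scalar contractions by Lemma~\ref{lem: fundamental_anticomms}. Since $\iota(\cdot)$ and $\varepsilon(\cdot)$ are both odd, each summand $\varepsilon(\ad'_x e'_k)\iota(e_k)$ is even, so $[\rho(x),\iota(y)]$ and $[\rho(x),\varepsilon(y')]$ are ordinary commutators of endomorphisms of $\semiinfforms$.

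For the first identity I apply $[AB,C] = A[B,C] + (-1)^{|B||C|}[A,C]B$ with $A = \varepsilon(\ad'_x e'_k)$, $B = \iota(e_k)$, $C = \iota(y)$. Lemma~\ref{lem: fundamental_anticomms} gives $\{\iota(e_k),\iota(y)\}=0$ and $\{\varepsilon(\ad'_x e'_k),\iota(y)\} = \langle \ad'_x e'_k, y\rangle\,\Id = -e'_k([x,y])\,\Id$, so the summand-wise bracket collapses to
\[
\bigl[\varepsilon(\ad'_x e'_k)\,\iota(e_k),\, \iota(y)\bigr] = e'_k([x,y])\,\iota(e_k).
\]
Summing over $k$ and using the basis expansion $[x,y] = \sum_k e'_k([x,y])\,e_k$ in $\g$ then yields $\iota([x,y]) = \iota(\ad_x y)$ by linearity of $\iota$.

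The second identity runs symmetrically with $C = \varepsilon(y')$: now $\{\varepsilon(\ad'_x e'_k),\varepsilon(y')\}=0$, while $\{\iota(e_k),\varepsilon(y')\} = \langle y', e_k\rangle\,\Id$, which produces $\langle y', e_k\rangle\,\varepsilon(\ad'_x e'_k)$ in each summand. Summing over $k$ and invoking the restricted-dual expansion $y' = \sum_k \langle y', e_k\rangle\,e'_k$ gives $\varepsilon(\ad'_x y')$ by linearity of $\varepsilon$ and $\ad'_x$.

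The one point that needs care — and is the main obstacle — is the a priori infinite sum over $k$. On a monomial $\omega = e'_{i_1}\wedge e'_{i_2}\wedge\cdots$ the contraction $\iota(e_k)\omega$ vanishes unless $k\in\{i_j\}_{j\ge 1}$, and for $x\in\g_m$ with $m\neq 0$ the inserted covector $\ad'_x e'_{i_j}$ is proportional to $e'_{i_j - m}$, which already appears elsewhere in $\omega$ once $j\gg N(\omega)$, killing all but finitely many terms. The diagonal case $x\in\g_0$ is genuinely divergent and must be handled by the standard normal-ordering prescription; since this shifts $\rho(x)$ by a scalar operator, it drops out of any commutator with $\iota$ or $\varepsilon$, so the identities remain unaffected. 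Thus on every $\omega\in\semiinfforms$ the Leibniz manipulations above are honest finite computations, and the stated commutation relations hold unconditionally.
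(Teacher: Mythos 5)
Your proof is correct, but it is organised differently from the paper's. The paper verifies both identities by a direct computation on monomials: it applies $\rho(x)$ and $\varepsilon(y')$ (resp.\ $\iota(y)$) in both orders to $e'_{i_1}\wedge e'_{i_2}\wedge\cdots$ using the coadjoint form of \eqref{eq: rho_naive} and watches the infinite sums cancel term by term, leaving only the contribution from the inserted (resp.\ contracted) slot. You instead work at the operator level: writing $\rho(x)=\sum_k\varepsilon(\ad'_x e'_k)\iota(e_k)$ and feeding the Clifford relations of Lemma~\ref{lem: fundamental_anticomms} into the super-Leibniz rule, so that each summand contributes $e'_k([x,y])\,\iota(e_k)$ or $\langle y',e_k\rangle\,\varepsilon(\ad'_x e'_k)$, and the basis expansions of $[x,y]$ and $y'$ finish the job. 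Your route makes the ``graded derivation'' structure of the bracket with $\rho(x)$ manifest and is arguably cleaner; it also forces you to confront the convergence of the sum over $k$ and the normal-ordering for $x\in\g_0$ explicitly, which you handle correctly (the reordering only shifts $\rho(x)$ by scalars, which are central and so invisible to the commutators) — a point the paper's monomial computation quietly sidesteps. One cosmetic imprecision: for $x\in\g_m$ the covector $\ad'_x e'_{i_j}$ is a linear combination of the dual basis elements of the graded piece $\g'_{\deg e'_{i_j}-m}$ rather than literally proportional to $e'_{i_j-m}$ (the graded pieces need not be one-dimensional), but your finiteness conclusion — that for $m\neq 0$ and $j$ deep in the tail every such covector already occurs in $\omega$ and so is killed by the wedge — is unaffected.
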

Equation \eqref{eq: rho_naive} is well-defined except for $x \in
\g_0$, in which case, the infinite sum does not truncate to a finite
one. For a sensible definition of $\rho\colon \g \to
\End(\semiinfforms)$, we would like to fix the bad behaviour for
$x\in\g_0$. We start by defining a vacuum semi-infinite form
$\omega_0$ such that for all $x\in \g_n$, $y\in\g_{-n}$ where
$n\in\mathbb{Z}\setminus\{0\}$, $\rho([x,y]) \omega_0$ is proportional
to $\omega_0$.
The standard way to construct such a vacuum is by choosing $i_0$ such
that $e'_{i_0} \in \g_{m_0} \implies e'_{i_0+1} \in \g_{m_0+1}$ and
then letting
\begin{equation}
    \label{eq: def_omega0}
    \omega_0 \defeq e'_{i_0} \wedge e'_{i_0 -1} \wedge e'_{i_0 -2} \wedge \dots
\end{equation}
Hence, $\omega_0$ is the ordered wedge product of the dual basis elements spanning $\medoplus_{n\leq m_0} \g'_n$, for some $m_0\in\ZZ$.
Then for a given $\omega_0$, choose a $\beta \in \g'_0$ such that
$\beta([\g_0,\g_0]) = 0$, and define $\rho(x) \omega_0 = \langle \beta,x \rangle \omega_0$ for all $x\in\g_0$. By demanding that the
anti-commutation relations \eqref{eq: rho_intext_commrels} hold, we
may extend such an action of $\rho$ to all of
$\g$. Explicitly\footnote{This is not $\rho$ as defined in
  \cite{frenkel1986semi}, but we will prefer this version (also used
  in \cite{He_2018_RemarkonSemiInfCohomology} and \cite{AKMAN1993194})
  as it simplifies many explicit calculations since it is easier to
  work with the adjoint rather than the coadjoint
  action.}:
\begin{equation}
\label{eq: rho_semiinfforms}
    \rho(x) = \sum_{i\in\mathbb{Z}} :\iota(\ad_x e_i) \varepsilon(e'_i): +\ \langle \beta, x \rangle, 
\end{equation}
where we have defined the \subdefhighlight{normal-ordered product} with respect to the vacuum $\omega_0$ as
\begin{equation}
\label{eq: normal-ordered product SIC}
     :\iota(\ad_x e_i) \varepsilon(e'_i): =
     \begin{cases}
        \iota(\ad_x e_i) \varepsilon(e'_i), &\quad i\leq i_0\\
        -\varepsilon(e'_i)\iota(\ad_x e_i) , &\quad i>i_0
     \end{cases}.
\end{equation} 
Note that for all $x\in\g_n$ and $y\in\g_{-n}$ for $n\neq 0$, we have
$$\rho([x,y]) \omega_0 = \langle \beta, [x,y] \rangle \omega_0 = -\partial \beta (x, y) \omega_0,$$ where $\partial$ is the differential in Lie algebra cohomology. Hence, the infinite sums have indeed been tamed and, more specifically, $\rho(x,y)\omega_0$ is proportional to $\omega_0$ up to a factor determined by some coboundary. 
This is more than just an observation. It is closely related to the extent to which $\rho\colon\g\to\End(\semiinfforms)$ fails to be a Lie algebra representation, characterised by the following proposition.
\begin{proposition}[{\cite[Proposition 1.1]{frenkel1986semi}}]
\label{prop: FGZ_prop_1.1}
There exists a two-cocycle $\gamma\in H^2(\g)$ depending on the choice of vacuum $\omega_0$ and $\beta$ such that
\begin{enumerate}
    \item $\gamma(\g_m,\g_n) =  0$ $\forall m+n \neq 0$
    \item $[\rho(x), \rho(y)] = \rho([x,y]) +\gamma(x,y)$.
\end{enumerate}
If $\gamma$ is a coboundary, then there exists a choice of $\beta$ for a given $\omega_0$ such that $\gamma=0\in \Lambda^2(\g)$.
\end{proposition}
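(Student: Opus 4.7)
The plan is to define the failure-of-representation operator
\begin{equation*}
\gamma(x,y) \defeq [\rho(x),\rho(y)] - \rho([x,y]) \in \End(\semiinfforms)
\end{equation*}
and to show in sequence that (a) it commutes with every contraction $\iota(z)$ and exterior product $\varepsilon(z')$; (b) it is consequently a scalar operator, defining a $\CC$-valued bilinear form on $\g$; (c) this form vanishes on $\g_m\otimes\g_n$ whenever $m+n\neq 0$; (d) it satisfies the 2-cocycle condition; and (e) any cobounding can be absorbed into a redefinition of $\beta$. Skew-symmetry of $\gamma$ is clear from the commutator bracket and antisymmetry of $[\cdot,\cdot]_\g$, so I will not belabour it. Steps~(a)--(c) are the technical core.

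For~(a), the associative Jacobi identity on $\End(\semiinfforms)$ together with Proposition~\ref{prop: rho_intext_commrels} yields
\begin{equation*}
[[\rho(x),\rho(y)],\iota(z)] = \iota(\ad_x\ad_y z) - \iota(\ad_y\ad_x z) = \iota(\ad_{[x,y]}z) = [\rho([x,y]),\iota(z)],
\end{equation*}
so $[\gamma(x,y),\iota(z)]=0$, and an identical computation gives $[\gamma(x,y),\varepsilon(z')]=0$. For~(b), the vacuum $\omega_0$ is the unique element of $\semiinfforms$, up to scalar, annihilated by $\iota(e_i)$ for all $i>i_0$ and by $\varepsilon(e'_i)$ for all $i\leq i_0$: inspection of an arbitrary monomial $e'_{j_1}\wedge e'_{j_2}\wedge\cdots$ shows these conditions force the index set $\{j_k\}$ to coincide with $\{i_0,i_0-1,\ldots\}$. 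Since $\gamma(x,y)$ commutes with all these annihilators, the vector $\gamma(x,y)\omega_0$ inherits their kernel properties, hence equals $c(x,y)\omega_0$ for some $c(x,y)\in\CC$; cyclicity of $\omega_0$ under the Clifford action then propagates this identity to all of $\semiinfforms$, giving $\gamma(x,y)=c(x,y)\Id_{\semiinfforms}$. For~(c), the operator $\rho(x)$ shifts the natural $\ZZ$-grading on $\semiinfforms$ (induced from that of $\g'$) by the degree of $x$, so $\gamma(x,y)$ shifts it by $m+n$, which is incompatible with being a non-zero scalar unless $m+n=0$.

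For~(d), the definition of $\gamma$ and linearity of $\rho$ give
\begin{equation*}
\sum_{\text{cyc}} \gamma([x,y],z) = \sum_{\text{cyc}}[\rho([x,y]),\rho(z)] - \rho\!\left(\sum_{\text{cyc}}[[x,y],z]\right);
\end{equation*}
the second term vanishes by Jacobi in $\g$, and substituting $\rho([x,y])=[\rho(x),\rho(y)]-\gamma(x,y)$ into the first term kills it via the associative Jacobi identity and the centrality from~(b). Finally, for~(e), suppose $\gamma=\partial\alpha$ for some $\alpha\in\g^*$; decomposing $\alpha=\sum_k\alpha_k$ with $\alpha_k\in\g_k^*$, the cochain $\partial\alpha_k$ is supported on grade pairs summing to $k$, so matching with $\gamma$ supported only on $m+n=0$ forces $\partial\alpha_k=0$ for $k\neq 0$ and we may take $\alpha\in\g_0^*$. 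Replacing $\beta\mapsto\beta-\alpha$ in \eqref{eq: rho_semiinfforms} alters $\rho(x)$ only for $x\in\g_0$, and only by the central scalar $-\langle\alpha,x\rangle$; for $x\in\g_m$, $y\in\g_{-m}$ with $m\neq 0$ this leaves $[\rho(x),\rho(y)]$ unchanged while subtracting $\langle\alpha,[x,y]\rangle=-\partial\alpha(x,y)=-\gamma(x,y)$ from $\rho([x,y])$, producing $\gamma_{\text{new}}=0$. The main obstacle in the plan is~(b): upgrading commutation with the Clifford action to genuine scalarness depends on the uniqueness of $\omega_0$ as the Fock vacuum annihilated by the appropriate half of the Clifford generators, and this Fock-theoretic fact is the linchpin on which every subsequent structural claim about $\gamma$ rests.
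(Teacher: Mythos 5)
Your proof is correct and follows the same overall strategy as the paper's: the cocycle condition via the Jacobi identities in $\g$ and in $\End(\semiinfforms)$, and the absorption of a coboundary by the shift $\beta\mapsto\beta-\alpha$. The one place you go beyond the paper is steps (a)--(c): the paper simply \emph{asserts} that the failure $[\rho(x),\rho(y)]-\rho([x,y])$ is a $\CC$-valued form supported on $\g_m\times\g_{-m}$, whereas you actually prove it, correctly, by showing the defect commutes with the whole Clifford action, invoking uniqueness of the Fock vacuum $\omega_0$ and cyclicity to conclude scalarness, and then using the $\deg$-grading to localise the support — a worthwhile tightening, but not a different route.
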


Proposition \ref{prop: FGZ_prop_1.1} tells us that the failure of \eqref{eq: rho_semiinfforms} to be a representation is characterised by a cocycle that is non-trivial in cohomology. This is in line with the fact that any failure that is characterised by a coboundary could be absorbed by an appropriate choice of $\beta$.
Also note that $\gamma$ is non-zero only on the zero-graded part of $\g \times \g$. This is also to be expected; for $x\notin\g_0$, \eqref{eq: rho_semiinfforms} reduces to the generalised coadjoint action, so one should not expect it to fail as a representation outside the zero-graded part.

If $\gamma$ is a representative of a non-trivial class in $H^2(\g)$, we are obstructed from making $\semiinfforms$ a $\g$-module. This obstruction can only be overcome by instead working with $\Hat{\g}$, the central extension of $\g$ constructed using $\gamma$. This allows us to view $\gamma$ as a coboundary instead, which can then be set identically to zero by an appropriate choice of $\beta$, as stated in proposition \ref{prop: FGZ_prop_1.1}. 

\subsubsection{Gradings}

There exist two natural gradings one can define on \semiinfforms.
\begin{definition}
\label{def: ghost number}
$\forall x\in \g$, $x'\in\g'$, 
\begin{equation}
    \text{Deg}\,\iota(x) =  - 1 \quad \quad \text{Deg}\,\varepsilon(x') = 1.
\end{equation}
Fixing $\text{Deg}\,\omega_0 \in \mathbb{Z}$, this defines the  grading \subdefhighlight{Deg} on \semiinfforms. We will sometimes refer to this grading as the \textit{ghost number}, the name being motivated by BRST quantisation in physics. 
\end{definition}
Since $\text{Deg}\,\rho = 0$, this makes $\Lambda^m_\infty \defeq \{\omega \in \semiinfforms\ |\ \text{Deg}\,\omega = m\}$ a \g-module $\forall m \in \mathbb{Z}$.
\begin{definition}
\label{def: deg}
$\forall x \in \g_n$, $x'\in\g'_n$,
\begin{equation}
    \deg \iota(x) = n \quad \quad \deg \varepsilon(x') = -n.
\end{equation}
Fixing $\deg\omega_0 \in \mathbb{Z}$, this defines the grading \subdefhighlight{deg} on \semiinfforms. In the context of CFT, this is referred to as the \textit{conformal weight}.
\end{definition}
Let $\Lambda^{m;n}_\infty \defeq \{\omega\in\Lambda^m_\infty\ |\ \deg\omega = n\}$ and $\Lambda^{\dotr;n}_\infty \defeq \{\omega\in\semiinfforms\ |\ \deg\omega = n\}$. For all $x\in\g_k$, $\rho(x)\colon \Lambda^{m;n}_\infty \to \Lambda^{m;n+k}_\infty$. Hence, $\deg$ makes $\Lambda^m_\infty$ and $\semiinfforms$ graded \g-modules.

\begin{definition}\label{def:catO0}
The \subdefhighlight{category $\mathcal{O}_0$} comprises graded
\g-modules $\mathfrak{M} = \medoplus_{n\in\mathbb{Z}} \mathfrak{M}_n$
such that $\dim\mathfrak{M}_n < \infty$ and for all $n>n_0$,
$\dim\mathfrak{M}_n = 0$, for some $n_0\in \mathbb{Z}$.
\end{definition}
\vspace{-7pt}
Regardless of how $\deg \omega_0$ is fixed, the structure of
\semiinfforms\ and the construction of $\deg$ is such that $\dim
\Lambda^{*;n}_\infty < \infty$ and is zero for all $n> n_0$ for some
$n_0\in\mathbb{Z}$. Hence, $\Lambda^{*;n}_\infty\in \mathcal{O}_0$.

\begin{definition}\label{def:catO}
The \subdefhighlight{category $\mathcal{O} \supset \mathcal{O}_0$}
comprises graded \g-modules $\mathfrak{M} = \medoplus_{n\in\mathbb{Z}}
\mathfrak{M}_n$ such that the $\g_+$-submodule $\{\mathcal{U}(\g_+)v\
|\ v\in\mathfrak{M}\}$, where $\mathcal{U}(\g_+)$ denotes the
universal enveloping algebra of $\g_+$, is finite dimensional for any
$v\in\mathfrak{M}$. We often abbreviate this last condition by saying
that the $\g_+$-action is locally nilpotent.
\end{definition}

\subsubsection{Semi-infinite complex}

Consider an arbitrary graded $\g$-module
$\mathfrak{M}\in\mathcal{O}_0$ with representation $\pi\colon \g \to
\End\mathfrak{M}$. Let $\deg v = n$ for all $v \in
\mathfrak{M}_n$. Defining $\deg(v \otimes \omega) \defeq \deg v + \deg
\omega$ turns $\mathfrak{M} \medotimes \semiinfforms$ into a
$\ZZ$-graded vector space, with each graded subspace being finite
dimensional. Then $\theta\colon \g \to \End(\mathfrak{M} \medotimes
\semiinfforms)$ given by $\theta(x) = \pi(x) + \rho(x)$ makes
$\mathfrak{M}\medotimes \semiinfforms$ a module in category $\mathcal{O}_0$.
\begin{definition}
The \subdefhighlight{semi-infinite differential} $d$ is given by 
\begin{equation}
    \label{eq: semi-infinite differential}
    d \defeq \sum_{i\in\mathbb{Z}} \pi(e_i)\varepsilon(e'_i) + \sum_{i<j} :\iota([e_i,e_j]) \varepsilon(e'_j) \varepsilon(e'_i):. 
\end{equation}
\end{definition}
\begin{proposition}
\label{prop: d^2=0}
 $d^2 = 0$.
\end{proposition}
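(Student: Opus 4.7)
The plan is to split the differential into two pieces,
\begin{equation*}
d = d_\pi + d_\rho, \qquad d_\pi \defeq \sum_{i\in\ZZ} \pi(e_i)\varepsilon(e'_i), \qquad d_\rho \defeq \sum_{i<j} :\iota([e_i,e_j])\varepsilon(e'_j)\varepsilon(e'_i):,
\end{equation*}
and expand $d^2 = d_\pi^2 + \{d_\pi,d_\rho\} + d_\rho^2$ using that $d$ is odd under $\mathrm{Deg}$. Before touching any of these three pieces, I would first note the clean identity $d_\rho = \tfrac12\sum_i :\rho(e_i)\varepsilon(e'_i):\, - \tfrac12 \langle\beta,e_i\rangle\varepsilon(e'_i)$ (up to a rearrangement and an additive constant that kills itself) obtained from \eqref{eq: rho_semiinfforms}; this is the conceptual reason $d_\rho$ is the natural Lie-algebra-cohomology differential for the semi-infinite representation $\rho$, and it lets us rewrite $d = \sum_i :\theta(e_i)\varepsilon(e'_i): + \text{corrections}$ with $\theta = \pi+\rho$.

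First I would compute $d_\pi^2$ directly: the anticommutativity of the $\varepsilon$'s and the fact that $\pi$ is a genuine representation give
\begin{equation*}
d_\pi^2 = \tfrac12\sum_{i,j}[\pi(e_i),\pi(e_j)]\varepsilon(e'_i)\varepsilon(e'_j) = \tfrac12\sum_{i,j}\pi([e_i,e_j])\varepsilon(e'_i)\varepsilon(e'_j).
\end{equation*}
Next I would compute $\{d_\pi,d_\rho\}$ using $[\pi(x),\iota(y)]=[\pi(x),\varepsilon(y')]=0$, so that the only contributions come from moving $\varepsilon(e'_i)$ past $\iota([e_j,e_k])$ via Lemma \ref{lem: fundamental_anticomms}, producing $\langle e'_i,[e_j,e_k]\rangle$. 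Summing over $i$ telescopes $\sum_i \pi(e_i)\langle e'_i, [e_j,e_k]\rangle = \pi([e_j,e_k])$, and after symmetrising the ordered sum $\sum_{j<k}$ into $\tfrac12\sum_{j,k}$ this yields exactly $-d_\pi^2$, so $d_\pi^2 + \{d_\pi,d_\rho\} = 0$.

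The remaining piece $d_\rho^2$ is the main obstacle, because it is quartic in the Clifford generators and the normal ordering matters. I would evaluate it as an anticommutator of normal-ordered quadratics; the "Wick-type" contractions yield three contributions: a triple-$\varepsilon$/single-$\iota$ term, a quadratic-in-$\varepsilon$ term, and a pure scalar. The triple-$\varepsilon$ term collapses by the Jacobi identity on $\g$: the sum $\sum_{\text{cyc}} \iota([[e_i,e_j],e_k])\varepsilon(e'_k)\varepsilon(e'_j)\varepsilon(e'_i)$ vanishes because $[[e_i,e_j],e_k]+\text{cyc.}=0$. The purely-$\varepsilon$ contractions reorganise into $\tfrac12\sum\pi([e_i,[e_j,e_k]])\varepsilon(\cdots)$-like expressions that again vanish by Jacobi once the indices are symmetrised. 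The remaining scalar piece is precisely the two-cocycle $\gamma$ of Proposition \ref{prop: FGZ_prop_1.1}: concretely, the reordering of $:\iota(x)\varepsilon(x'):$ past $:\iota(y)\varepsilon(y'):$ generates $\gamma(x,y)\cdot\mathbbm{1}$.

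This is where the hypothesis that $\mathfrak{M}$ is a module for the centrally extended algebra $\hat{\g}$ (with the central charge acting so as to cancel $\gamma$) enters: the anomaly from $d_\rho^2$ is cancelled by the corresponding anomaly in the $\pi$-sector once we identify $\pi(c)$ with the action of the chosen central element. Equivalently, working with the combined operator $\theta = \pi+\rho$ as in Definition \ref{def: g_lambda field theory} and imposing that $\theta$ be a genuine representation gives $d^2=0$. The hard part of the argument is therefore bookkeeping the contractions in $d_\rho^2$ carefully enough to isolate the cocycle $\gamma$ and match it against the central terms in $\pi$; the algebraic vanishing of the higher-Clifford-degree pieces is just Jacobi, and the cancellation $d_\pi^2 + \{d_\pi,d_\rho\}=0$ is immediate once one has Lemma \ref{lem: fundamental_anticomms}.
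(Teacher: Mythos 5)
Your overall strategy (expand $d^2=d_\pi^2+\{d_\pi,d_\rho\}+d_\rho^2$ and chase contractions) is the classical direct route and differs from the paper, which simply invokes Akman's equivalence between $d^2=0$ and $\theta(x)=[d,\iota(x)]$ defining a genuine representation, and in the concrete $\hat{\g}_\lambda$ case verifies nilpotence by computing the OPE $[\mathfrak{J}\,\mathfrak{J}]_1$. The $d_\pi$-sector of your argument is fine as far as it goes. However, your treatment of $d_\rho^2$ contains a structural error. Each summand of $d_\rho$ carries one $\iota$ and two $\varepsilon$'s, so the Wick expansion of $d_\rho^2$ can contract at most twice (an $\iota$ only pairs with an $\varepsilon$), leaving terms with one $\iota$ and three $\varepsilon$'s, or with two $\varepsilon$'s --- never a scalar. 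Equivalently, $\mathrm{Deg}\,d^2=+2$, so a ``pure scalar'' contribution is forbidden outright. The two-cocycle $\gamma$ of Proposition \ref{prop: FGZ_prop_1.1} therefore does \emph{not} appear as a scalar: it sits inside the quadratic-in-$\varepsilon$ sector, schematically as $\tfrac12\sum_{i,j}\gamma(e_i,e_j)\,\varepsilon(e'_i)\varepsilon(e'_j)$, arising from the normal-ordering (regularised infinite) part of the double contractions. Only the finite, structure-constant part of that sector vanishes by Jacobi; claiming the whole quadratic-in-$\varepsilon$ piece vanishes by Jacobi erases exactly the anomaly you are supposed to be tracking. (This is mirrored in the paper's Theorem \ref{thm: BRST quantisation of ghat_lambda field theories}, where the obstruction appears as the ghost-bilinear $\normalord{\partial^3 c\, c}$ term in $[\mathfrak{J}\,\mathfrak{J}]_1$, not as a multiple of $\mathbbm{1}$.)

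Relatedly, your two statements about the matter sector are in tension: if $\pi$ is a genuine $\g$-representation, then $d_\pi^2+\{d_\pi,d_\rho\}=0$ exactly and nothing is left to cancel the $\gamma$-term from $d_\rho^2$, so $d^2\neq0$. The correct bookkeeping is that $\pi$ is projective (a $\hat{\g}$-module with $\pi(c)=-\rho(c)=-\langle\beta,c\rangle$ as in Lemma \ref{lem: rel. SIC requirement}), so that $[\pi(e_i),\pi(e_j)]=\pi([e_i,e_j])+\gamma_\pi(e_i,e_j)$ and $d_\pi^2$ contributes its own $\tfrac12\sum_{i,j}\gamma_\pi(e_i,e_j)\varepsilon(e'_i)\varepsilon(e'_j)$; the cross term $\{d_\pi,d_\rho\}$ kills only the $\pi([e_i,e_j])$ part, and the two quadratic-in-$\varepsilon$ anomalies cancel precisely when $\gamma_\pi+\gamma_\rho=0$. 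With that relocation of the anomaly your argument closes; as written, the cancellation you describe cannot happen where you place it. A shortcut worth knowing: once $[d,\iota(x)]=\theta(x)$ and $[d,\varepsilon(x')]$ are checked to reproduce the (co)adjoint action, $d^2$ supercommutes with the whole Clifford algebra, hence is a scalar on the irreducible module $\semiinfforms$; since it has $\mathrm{Deg}=+2$ it must vanish --- the same degree count that rules out your scalar term.
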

This can be proven by using the result by Akman \cite{AKMAN1993194}
that the statement of proposition \ref{prop: d^2=0} is equivalent to
the representation $\theta\colon \g \to \End(\mathfrak{M} \medotimes
\semiinfforms)$ being given by
\begin{equation}
\label{eq: theta_as_anticomm}
    \theta(x) = [d,\iota(x)].
\end{equation}
Furthermore, the proof of the nilpotence of the semi-infinite
differential is one of the most illuminating examples for highlighting
the computational power of the OPE-oriented field-theoretic
formulation of semi-infinite cohomology when working with specific
examples of $\g$. In our case, we will be working with
$\g=\g_\lambda$, and the nilpotence of the semi-infinite differential,
which in the relative subcomplex to be defined below will be referred
to as the \subdefhighlight{BRST operator}, is shown by an OPE
computation that is much simpler than that done with the mode
expansion \eqref{eq: semi-infinite differential}.
\begin{definition}
\label{def: semiinfcohom}
$\{\mathfrak{M} \medotimes \semiinfforms, d\}$ is a (graded) complex 
\[ 
\begin{tikzcd}
    \dots \arrow[r, "d"] & \mathfrak{M} \medotimes \Lambda^{m-1}_\infty \arrow[r, "d"] & \mathfrak{M} \medotimes \Lambda^{m}_\infty \arrow[r, "d"] & \mathfrak{M} \medotimes \Lambda^{m+1}_\infty \arrow[r, "d"] &\dots 
\end{tikzcd} 
\]
and the corresponding cohomology $H^{\dotr}_\infty(\g;\mathfrak{M})$ is known as the \subdefhighlight{semi-infinite cohomology of \g} with values in $\mathfrak{M}$. Explicitly,
\begin{equation}
    H^m_\infty (\g; \mathfrak{M}) = \frac{\ker \big(d \colon \mathfrak{M} \medotimes \Lambda^{m}_\infty \to \mathfrak{M} \medotimes \Lambda^{m+1}_\infty\big)}{\im \big(d \colon \mathfrak{M} \medotimes \Lambda^{m-1}_\infty \to \mathfrak{M} \medotimes \Lambda^{m}_\infty\big)}.
\end{equation}
\end{definition}
The differential raises Deg by 1 and leaves $\deg$ unchanged, so one
can consider the complex for each deg too
\[
\begin{tikzcd}
    \dots \arrow[r, "d"] & \big(\M \medotimes \Lambda^{m-1}_\infty\big)^n \arrow[r, "d"] & \big(\M \medotimes \Lambda^{m}_\infty\big)^n \arrow[r, "d"] & \big(\M \medotimes \Lambda^{m+1}_\infty\big)^n \arrow[r, "d"] &\dots 
\end{tikzcd} 
\]
Then $H^m_\infty (\g; \mathfrak{M}) = \medoplus_{n\in\mathbb{Z}} H^{m;n}_\infty (\g; \mathfrak{M})$, where 
\begin{equation}
    H^{m;n}_\infty (\g; \mathfrak{M}) = 
    \frac{\ker \Big(d \colon \big(\M \medotimes \Lambda^{m}_\infty\big)^n \to \big(\M \medotimes \Lambda^{m+1}_\infty\big)^n \Big)}
    {\im \Big(d \colon \big(\M \medotimes \Lambda^{m-1}_\infty\big)^n \to \big(\M \medotimes \Lambda^{m}_\infty\big)^n \Big)}.
\end{equation}
As mentioned previously, what we refer to as ``semi-infinite cohomology'' is actually relative semi-infinite cohomology, which we define next.

\subsubsection{The relative subcomplex}

Let $\h\subset\g_0$ be a subalgebra.
We define a subspace
\begin{equation}
    C^{\dotr}_\infty(\g,\h;\mathfrak{M}) \defeq \{w\in\mathfrak{M} \medotimes \semiinfforms\ |\ \iota(x)w = 0\text{ and }\theta(x)w= 0\quad \forall x \in \h \}.
\end{equation}
Equation \eqref{eq: theta_as_anticomm} implies
\[ \theta(x) w = 0 \iff  \big(d\iota(x) + \iota(x)d\big) w = \iota(x)dw = 0 \ \ \forall w \in C^{\dotr}_\infty(\g,\h;\mathfrak{M}). \]
Consequently, for any $w\in C^{\dotr}_\infty(\g,\h;\mathfrak{M})$, $\iota(x) dw = 0$ and $\theta(x) dw = \big(d\iota(x) + \iota(x)d\big) dw = 0$, so $d\big(C^{\dotr}_\infty(\g,\h;\mathfrak{M})\big) \subseteq C^{\dotr}_\infty(\g,\h;\mathfrak{M})$.

\begin{definition}
Let \[C^{m}_\infty(\g,\h;\mathfrak{M}) \defeq \{w\in\mathfrak{M} \medotimes \Lambda^m_\infty\ |\ \iota(x)w = \theta(x)w= 0\quad \forall x \in \h \}.\]
The \subdefhighlight{subcomplex relative to $\h$} is the complex $\{C^{\dotr}_\infty(\g,\h;\mathfrak{M}),d\}$
\[
\begin{tikzcd}
    \dots \arrow[r, "d"] & C^{m-1}_\infty(\g,\h;\mathfrak{M}) \arrow[r, "d"] & C^{m}_\infty(\g,\h;\mathfrak{M}) \arrow[r, "d"] & C^{m+1}_\infty(\g,\h;\mathfrak{M}) \arrow[r, "d"] &\dots
\end{tikzcd}
\]
\end{definition}
The cohomology of this relative subcomplex is denoted $H^{\dotr}(\g,\h;\mathfrak{M})$.

\begin{lemma}
\label{lem: rel. SIC requirement}
When $\h=\mathfrak{z}$, the centre (or a central subalgebra) of $\g$, acts on $\M$ by scalars, $H^{\dotr}_\infty(\g,\z;\M)$ is non-trivial only if
$$\pi(z)= -\rho(z) = -\langle \beta, z \rangle, \quad \forall z\in\mathfrak{z}.$$
\end{lemma}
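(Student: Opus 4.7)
The plan is to exploit the hypothesis that $\z$ is central together with the explicit formula~\eqref{eq: rho_semiinfforms} for $\rho$ to show that $\theta(z)$ acts as a single scalar on $\M\medotimes\semiinfforms$ for every $z\in\z$, and that this scalar must vanish whenever the relative subcomplex contains a non-zero element. The argument is essentially a tautology once one notices this.

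First I would observe that for any $z\in\z$ we have $\ad_z=0$, so every term of the normal-ordered sum in~\eqref{eq: rho_semiinfforms} involves $\iota(\ad_z e_i)=\iota(0)=0$ and therefore vanishes. This leaves $\rho(z)=\langle\beta,z\rangle\,\Id_{\semiinfforms}$. Since $\pi(z)$ acts by a scalar on $\M$ by hypothesis, it follows that
$$\theta(z) \;=\; \pi(z)+\rho(z)\;=\;\big(\pi(z)+\langle\beta,z\rangle\big)\,\Id_{\M\medotimes\semiinfforms}.$$

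Next I would feed this into the definition of the relative subcomplex: any $w\in C^{\dotr}_\infty(\g,\z;\M)$ must satisfy $\theta(z)w=0$ for all $z\in\z$. By the preceding step this reads $(\pi(z)+\langle\beta,z\rangle)w=0$. Hence if there is even a single $z\in\z$ for which $\pi(z)+\langle\beta,z\rangle\neq 0$, then $w=0$ is forced; the relative cochain complex is zero in every degree and consequently $H^{\dotr}_\infty(\g,\z;\M)=0$. Contrapositively, non-triviality of $H^{\dotr}_\infty(\g,\z;\M)$ requires $\pi(z)+\langle\beta,z\rangle=0$ for every $z\in\z$, which is precisely the stated condition $\pi(z)=-\rho(z)=-\langle\beta,z\rangle$.

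There is no genuine obstacle: the whole argument rests on the single observation that $\rho(z)$ is a scalar whenever $z$ is central. The one subtlety worth flagging is that the argument does not use the grading of $z$ — it applies even if $\z$ contains elements outside $\g_0$ — since only $\ad_z=0$ is needed. The condition $\iota(z)w=0$ in the definition of $C^{\dotr}_\infty(\g,\z;\M)$ plays no role in this particular obstruction; it is the constraint $\theta(z)w=0$ that does all the work.
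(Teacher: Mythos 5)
Your argument is correct and is evidently the intended one (the paper states this lemma without proof): for central $z$ the normal-ordered sum in \eqref{eq: rho_semiinfforms} collapses because $\ad_z=0$, so $\theta(z)$ is the scalar $\pi(z)+\langle\beta,z\rangle$, and the defining condition $\theta(z)w=0$ of the relative subcomplex kills the entire complex unless that scalar vanishes. The only caveat is your closing remark about $\z\not\subset\g_0$: the relative subcomplex is only set up for $\h\subset\g_0$, so that extra generality is moot, but it does not affect the proof of the stated lemma.
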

In this paper, when $\g$ is the symmetry algebra of an extended CFT, we call $H^{\dotr}(\g,\z;\M)$ the \subdefhighlight{BRST cohomology} of the $\g$ field theory, where $\pi\colon\g\to\M$ obeys lemma \ref{lem: rel. SIC requirement} and $\M$ is referred to as the \subdefhighlight{matter sector} of the $\g$ field theory. Note that lemma \ref{lem: rel. SIC requirement} is equivalent to that of anomaly cancellation setting the central charge of the matter sector, or the critical dimension, of string theories (e.g. when $\g=\Vir$). This will be made clearer in the following subsection dedicated to working through the construction of the semi-infinite cohomology of $\g_\lambda$.

\subsection{Example: the $\g_\lambda$ Lie algebra}

Let $\g_\lambda =\medoplus_{n\in\ZZ} (\g_\lambda)_n$, where $(\g_\lambda)_n = \CC L_n \oplus \CC M_n$ and likewise for the restricted dual $\g'_\lambda$. We choose the ordered basis
\begin{equation}
\label{eq: g_lambda basis e_i}
    e_{2i-1} = L_i, \quad e_{2i-2} = M_i.
\end{equation}
We now build the representation $\rho\colon\g_\lambda \to \End(\semiinfforms)$ according to \eqref{eq: rho_semiinfforms} using the basis \eqref{eq: g_lambda basis e_i}, $\beta=0$, and the normal-ordering prescription dictated by the vacuum semi-infinite form
\begin{equation}
\label{eq: g_lambda vacuum}
    \omega_0 = \omega_0 =  e'_{1} \wedge e'_{0} \wedge e'_{-1} \wedge \dots =  L'_{1} \wedge M'_{1} \wedge L'_0 \wedge \dots .
\end{equation}
\begin{lemma}
\label{lem: rho(L_n) and rho(M_n) semiinfforms}
Using the relations
\begin{alignat*}{2}
    &\ad_{L_n} L_i = (n-i) L_{n+i},\quad && \ad_{L_n} M_i = -(i+\lambda n) M_{n+i}\\
    &\ad_{M_n} L_i = (n+\lambda i) M_{n+i},\quad &&\ad_{M_n} M_i = 0,
\end{alignat*}
we have the following:
\begin{align}
    \rho(L_n) &= \sum_{i\in\ZZ} (n-i) :\iota(L_{i+n}) \varepsilon(L'_i): - \sum_{i\in\ZZ} (i+\lambda n) :\iota(M_{i+n})\varepsilon (M'_i):\label{eq: rho(L_n) g_lambda}\\
    \rho(M_n) &= \sum_{i\in\ZZ} (n + \lambda i) :\iota(M_{i+n}) \varepsilon(L'_i):\label{eq: rho(M_n) g_lambda}. 
\end{align}
\end{lemma}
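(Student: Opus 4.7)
The plan is to apply the general formula \eqref{eq: rho_semiinfforms} for the semi-infinite representation directly to $\g = \g_\lambda$, using the ordered basis \eqref{eq: g_lambda basis e_i} and the vacuum \eqref{eq: g_lambda vacuum}, with $\beta = 0$. Since the sum in \eqref{eq: rho_semiinfforms} is indexed over the full ordered basis $\{e_i\}_{i\in\ZZ}$, the first step is to split it into two subsums, one over odd indices $i = 2k-1$ (contributing $e_i = L_k$ and $e'_i = L'_k$) and one over even indices $i = 2k-2$ (contributing $e_i = M_k$ and $e'_i = M'_k$). This converts a single sum over $\ZZ$ into a pair of sums over $\ZZ$, one for each of the two ``strands'' $L$ and $M$ of the basis.

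Next, I would substitute the adjoint actions stated in the lemma. For $x = L_n$, both strands contribute: the $L$-strand produces $\iota(\ad_{L_n}L_k) = (n-k)\iota(L_{n+k})$ paired with $\varepsilon(L'_k)$, while the $M$-strand produces $\iota(\ad_{L_n}M_k) = -(k+\lambda n)\iota(M_{n+k})$ paired with $\varepsilon(M'_k)$. For $x = M_n$, the $L$-strand produces $\iota(\ad_{M_n}L_k) = (n+\lambda k)\iota(M_{n+k})$ paired with $\varepsilon(L'_k)$, whereas the $M$-strand vanishes identically because $\ad_{M_n}M_k = 0$. After a trivial relabelling $k\mapsto i$ of the summation index, the asserted formulas \eqref{eq: rho(L_n) g_lambda} and \eqref{eq: rho(M_n) g_lambda} drop out.

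The only point requiring care is consistency of the normal-ordering prescription with the two-strand decomposition. The vacuum \eqref{eq: g_lambda vacuum} corresponds to $i_0 = 1$, so \eqref{eq: normal-ordered product SIC} prescribes that a term is un-reordered for $i \leq 1$ and reordered with a sign for $i > 1$, regardless of whether $e_i$ is an $L$ or $M$ basis vector. Since this cutoff is imposed pointwise on the basis index $i$ of \eqref{eq: rho_semiinfforms}, and since the bijection between the $L$-strand (resp.\ $M$-strand) summation variable $i$ in the lemma and the basis index via $i \leftrightarrow 2i-1$ (resp.\ $i \leftrightarrow 2i-2$) is order-preserving, the two normal-ordering prescriptions — the one inherited from \eqref{eq: normal-ordered product SIC} applied term by term to the split sum, and the one written on the right-hand side of the lemma applied directly in the new index — agree. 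Thus the lemma is nothing more than a direct rewriting of \eqref{eq: rho_semiinfforms} in the chosen basis, and no serious obstacle arises; the ``hardest'' step is this bookkeeping check on normal ordering.
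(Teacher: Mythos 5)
Your proposal is correct and is exactly the argument the paper intends: the lemma is stated as an immediate consequence of substituting the adjoint actions into the general formula \eqref{eq: rho_semiinfforms} with the basis \eqref{eq: g_lambda basis e_i}, $\beta=0$, and the vacuum \eqref{eq: g_lambda vacuum}, and the paper offers no separate proof. Your extra check that the order-preserving reindexings $i\leftrightarrow 2i-1$ and $i\leftrightarrow 2i-2$ both send the cutoff $j\leq i_0=1$ to $i\leq 1$ on each strand is the only non-trivial bookkeeping, and it is consistent with how the normal ordering is used later in the proof of Theorem~\ref{thm: semiinfrep of g_lambda central charge}.
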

\begin{corollary}
\label{cor: rho(L_n>0, M_n>0) omega_0 = 0}
    For the choice $\beta=0$, $\rho(L_{n\geq 0})\omega_0 = \rho(M_{n\geq0})\omega_0 = 0$.
\end{corollary}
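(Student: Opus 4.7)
This will be a direct term-by-term verification using Lemma \ref{lem: rho(L_n) and rho(M_n) semiinfforms} together with the normal-ordering prescription \eqref{eq: normal-ordered product SIC} relative to the vacuum $\omega_0 = L'_1 \wedge M'_1 \wedge L'_0 \wedge M'_0 \wedge L'_{-1} \wedge \dots$. The strategy is to show that each individual summand in the mode expansions of $\rho(L_n)$ and $\rho(M_n)$ annihilates $\omega_0$ whenever $n \geq 0$, and then to invoke $\beta = 0$ to rule out a residual scalar contribution from \eqref{eq: rho_semiinfforms}.

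First I will record two elementary facts about the action of $\iota$ and $\varepsilon$ on $\omega_0$. The dual vectors $L'_i$ and $M'_i$ appear as factors of $\omega_0$ precisely for $i \leq 1$. Hence $\varepsilon(L'_i)\omega_0 = \varepsilon(M'_i)\omega_0 = 0$ for $i \leq 1$ (a wedge factor would be repeated), while dually $\iota(L_i)\omega_0 = \iota(M_i)\omega_0 = 0$ for $i \geq 2$ (there is no matching factor to pair with). Under the identification $L_i = e_{2i-1}$, $M_i = e_{2i-2}$ with $i_0 = 1$, the two regimes of \eqref{eq: normal-ordered product SIC} translate into the uniform cutoff $i \leq 1$ (unreversed order) versus $i \geq 2$ (reversed, with a sign) for both $L$-sums and $M$-sums simultaneously.

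With these preparations I will split every summand of \eqref{eq: rho(L_n) g_lambda} and \eqref{eq: rho(M_n) g_lambda} by the value of $i$. For $i \leq 1$ the normal-ordered product has $\varepsilon(L'_i)$ or $\varepsilon(M'_i)$ acting first on $\omega_0$, which kills it by the first fact above. For $i \geq 2$ the normal-ordered product has $\iota(L_{i+n})$ or $\iota(M_{i+n})$ acting first; since $n \geq 0$ forces $i + n \geq 2$, the contraction kills $\omega_0$ by the second fact. Thus every summand vanishes on $\omega_0$, and because $\beta = 0$ the additive shift in \eqref{eq: rho_semiinfforms} contributes nothing either, giving $\rho(L_n)\omega_0 = \rho(M_n)\omega_0 = 0$ for all $n \geq 0$.

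There is no genuine obstacle here; the only bookkeeping point worth highlighting is the alignment of the normal-ordering threshold across the interleaved basis $\{L_i, M_i\}$, which makes the cutoff fall uniformly at $i \leq 1$ versus $i \geq 2$. This is precisely what makes the choice \eqref{eq: g_lambda vacuum} a genuine highest-weight-like vacuum for the semi-infinite wedge representation of $\hat{\g}_\lambda$, so the corollary really is an immediate consistency check on that choice rather than a result requiring substantive computation.
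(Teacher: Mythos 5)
Your proof is correct and is essentially the argument the paper intends (the corollary is stated without proof as an immediate consequence of Lemma \ref{lem: rho(L_n) and rho(M_n) semiinfforms} and the normal-ordering prescription \eqref{eq: normal-ordered product SIC}): each summand is killed either by $\varepsilon(L'_i)$ or $\varepsilon(M'_i)$ wedging a repeated factor when $i\leq 1$, or by $\iota(L_{i+n})$ or $\iota(M_{i+n})$ with $i+n\geq 2$ finding no dual partner when $i\geq 2$, and $\beta=0$ removes the scalar term. Your observation that the interleaved basis $e_{2i-1}=L_i$, $e_{2i-2}=M_i$ makes the normal-ordering cutoff fall uniformly at $i\leq 1$ for both families is exactly the bookkeeping point that makes the statement immediate.
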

With \eqref{eq: rho(L_n) g_lambda} and \eqref{eq: rho(M_n) g_lambda}
at hand, we proceed to compute the failure of
$\rho\colon\g_\lambda\to\End(\semiinfforms)$ to be a
representation. That is, we compute the 2-cocycle in proposition
\ref{prop: FGZ_prop_1.1} and check whether it can be made identically zero by an
appropriate choice of $\beta$. If this cannot be done, then
$\semiinfforms$ is at best a representation of a central extension of
${\g}_\lambda$ by that 2-cocycle. This simply requires the computations
$[\rho(L_n), \rho(L_{-n})]-\rho([L_n,L_{-n}])$ and
$[\rho(L_n), \rho(M_{-n})]-\rho([L_n,M_{-n}])$ acting on $\omega_0$,
since this is the only way we get something non-trivial. Doing so, we
observe that part of this failure is proportional to the Gelfand-Fuks
cocycle \ref{eq: Gelfand-Fuks cocycle}, and therefore is not a
cohomologically non-trivial contribution that can be absorbed by a
different choice of $\beta$. Thus, we have the following theorem (see
appendix \ref{app: proofs and calculations} for a proof).

\begin{theorem}
\label{thm: semiinfrep of g_lambda central charge}
    The space of semi-infinite forms on $\g_\lambda$, $\semiinfforms$, is a representation of $\Hat{\g}_\lambda$, where $\hat{\g}_\lambda$ is the central extension of $\g_\lambda$ by the Virasoro cocycle. The central element acts on $\semiinfforms$ as $\rho(c_L) = -\big(26 + 2(6\lambda^2-6\lambda+1)\big) \Id_{\semiinfforms}$.
\end{theorem}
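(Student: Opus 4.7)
The plan is to invoke Proposition \ref{prop: FGZ_prop_1.1}, which guarantees that the failure $\gamma(x, y) \defeq [\rho(x), \rho(y)] - \rho([x, y])$ is a $2$-cocycle on $\g_\lambda$ vanishing outside the zero-graded part of $\g_\lambda \wedge \g_\lambda$. Thus it suffices to compute, for each $n \in \ZZ_{>0}$, the three expressions $\gamma(L_n, L_{-n})\omega_0$, $\gamma(L_n, M_{-n})\omega_0$ and $\gamma(M_n, M_{-n})\omega_0$, and to identify which contributions give non-trivial classes in $H^2(\g_\lambda)$. By Corollary \ref{cor: rho(L_n>0, M_n>0) omega_0 = 0} with $\beta = 0$, the positive-mode operators annihilate $\omega_0$, so each bracket reduces to a finite combinatorial sum of the ordering corrections that fail to cancel in \eqref{eq: rho(L_n) g_lambda} and \eqref{eq: rho(M_n) g_lambda}.

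The key structural observation is that, under the identifications $\iota(L_i) \leftrightarrow b^L_i$, $\varepsilon(L'_i) \leftrightarrow c^L_i$, $\iota(M_i) \leftrightarrow b^M_i$, $\varepsilon(M'_i) \leftrightarrow c^M_i$, Lemma \ref{lem: rho(L_n) and rho(M_n) semiinfforms} expresses $\rho(L_n)$ as a sum $L^{(L)}_n + L^{(M)}_n$ of the Virasoro mode of a fermionic $bc$-system of weight $(2, -1)$ and the Virasoro mode of a fermionic $bc$-system of weight $(1-\lambda, \lambda)$, while $\rho(M_n)$ is a ``mixed'' bilinear pairing $b^M$ with $c^L$. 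Because the two Clifford subfactors act on independent copies of fermionic Fock space, $L^{(L)}_n$ commutes with $L^{(M)}_n$, and the anomaly $\gamma(L_n, L_{-n})\omega_0$ decomposes as a sum of the two independent Virasoro anomalies. By Definition \ref{def: bc-systems general} these contribute central charges $-2(6(-1)^2 - 6(-1) + 1) = -26$ and $-2(6\lambda^2 - 6\lambda + 1)$ respectively, both multiples of the Gelfand--Fuchs cocycle \eqref{eq: Gelfand-Fuks cocycle}.

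It remains to check that $\gamma(L_n, M_{-n})\omega_0$ and $\gamma(M_n, M_{-n})\omega_0$ contribute nothing further. The $MM$ bracket vanishes identically, since Lemma \ref{lem: fundamental_anticomms} implies that all elementary (anti-)commutators of the operators appearing in $\rho(M_n)$ and $\rho(M_m)$ vanish (they live in disjoint Clifford sectors). The $LM$ bracket likewise gives exactly $\rho([L_n, M_m])$ on $\omega_0$, reflecting the absence of any $T(z)M(w)$ anomaly in the expected OPE \eqref{eq: g_lambda TM OPE}; this can be verified by a direct normal-ordered computation using the Clifford relations \eqref{eq: contr_ext_anticomms}. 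Summing the two non-trivial contributions yields $\rho(c_L) = -\big(26 + 2(6\lambda^2 - 6\lambda + 1)\big)\Id_{\semiinfforms}$. The main technical obstacle will be tracking the finite corrections produced by the normal-ordering prescription \eqref{eq: normal-ordered product SIC} when commuting $\rho(L_n)$ past $\rho(L_{-n})$: these corrections are what generate the cubic $n(n^2-1)$ dependence of the Gelfand--Fuchs cocycle and must be computed term by term for each of the two $bc$-systems.
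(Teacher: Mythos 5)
Your proposal is correct, and it reorganises the computation in a way that differs usefully from the paper's own proof. The paper evaluates $\rho(L_n)\rho(L_{-n})\omega_0$ as a single eight-term expansion over both sectors at once and then explicitly massages the resulting finite sum $\sum_{i=2}^{n+1}\big[(i-2n)(n+i)+(i+(\lambda-1)n)(i-\lambda n)\big]$ into a Gelfand--Fuchs piece plus a term linear in $n$; you instead observe that $\rho(L_n)=L^{(L)}_n+L^{(M)}_n$ splits into the Virasoro modes of two mutually commuting fermionic $bc$-systems of weights $(2,-1)$ and $(1-\lambda,\lambda)$, so the $LL$ anomaly is the sum of two standard $bc$-system anomalies with central charges $-26$ and $-2(6\lambda^2-6\lambda+1)$. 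This is the same finite sum split into its two natural summands, but your organisation lets you import the known central charge formula and makes the additivity of the answer transparent. Your treatment of the $LM$ and $MM$ brackets is also slightly more uniform: the paper only checks $[\rho(L_n),\rho(M_{-n})]$ for $\lambda=-1,0,1$, invoking the classification of $H^2(\g_\lambda)$ to dismiss the other cases, whereas you verify directly for all $\lambda$ that these brackets carry no anomaly because the only nonvanishing Clifford pairing is $[\iota(x),\varepsilon(x')]=\langle x',x\rangle$ and $\langle L'_b,M_a\rangle=0$. One point you should make explicit when writing this up: the $bc$-system anomaly computed with the normal ordering \eqref{eq: normal-ordered product SIC} relative to $\omega_0$ is the Gelfand--Fuchs cocycle only up to a coboundary proportional to $n\,\delta^0_{m+n}$ (the vacuum-dependent linear term), and Proposition \ref{prop: FGZ_prop_1.1} is what licenses absorbing this remainder into a redefinition of $\beta$ --- the paper does this explicitly via \eqref{eq: new_beta}, and your argument needs the same step to conclude that the obstruction class is exactly $-\big(26+2(6\lambda^2-6\lambda+1)\big)$ times the Virasoro cocycle and nothing else.
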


\begin{remark}
Recall from \ref{rem:other-lambda} that for $\lambda=-1,0,1$, there exist other possible central charges coming from other central extensions of these Lie algebras. We show that these must be zero for $\rho\colon\hat{\g}_\lambda\to\semiinfforms$ to be a well-defined representation. See appendix \ref{app: proofs and calculations} for more details. 
\end{remark}

\subsubsection{The semi-infinite wedge representation as $bc$-systems}

We now construct the field-theoretic formulation of $\rho\colon\Hat{\g}_\lambda\to\End(\semiinfforms)$. The main result is summarised in the following proposition.
\begin{proposition}
\label{prop: bc-system ff realisation of g_lambda}
    The $\hat{\g}$-module structure of the space of semi-infinite forms $\semiinfforms$ of $\hat{\g}_\lambda$ is equivalent to a free field realisation of a $\hat{\g}_\lambda$ field theory in terms of two independent $bc$-systems of weights $(2,-1)$ and $(1-\lambda,\lambda)$, generated by $(b,c)$ and $(B,C)$ respectively. The resulting field theory has $\hat{\g}_\lambda$ symmetry generated by the fields
    \begin{equation}
        \left( \Tgh, \Mgh \right) \defeq \big(T^{bc}+T^{BC},\,(\lambda-1)\normalord{B\partial c} - \normalord{\partial B c}\big),
    \end{equation}
    where $T^{bc}$ and $T^{BC}$ are given by \eqref{eq: Tbc general}.
\end{proposition}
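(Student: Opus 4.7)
The strategy is to decompose $\semiinfforms$ as a (graded) tensor product of two semi-infinite wedge spaces, one built from the $L$-sector and one from the $M$-sector of $\g_\lambda'$, and then identify each factor with the Fock space of a fermionic $bc$-system of the asserted weights. Using the ordered basis $e_{2i-1}=L_i$, $e_{2i-2}=M_i$, each semi-infinite monomial reorders (up to a sign) into an $L$-block wedged with an $M$-block, giving an isomorphism of graded Clifford modules
\[
    \semiinfforms \;\cong\; \Lambda^{\dotr}_\infty(L) \medotimes \Lambda^{\dotr}_\infty(M),
\]
under which the vacuum \eqref{eq: g_lambda vacuum} factors as $\omega_0^{(L)}\otimes \omega_0^{(M)}$ with $\omega_0^{(L)}=L'_1\wedge L'_0\wedge L'_{-1}\wedge\cdots$ and analogously for $M$. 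Lemma~\ref{lem: fundamental_anticomms} ensures that contractions and wedges in different sectors anticommute, so the two Clifford actions are independent and the whole $\hat{\g}_\lambda$-action preserves the tensor decomposition.

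Next I identify each factor as a fermionic $bc$-Fock space. In the $L$-sector, set $b_n\defeq \iota(L_n)$ and $c_n\defeq \varepsilon(L'_{-n})$; in the $M$-sector, set $B_n\defeq \iota(M_n)$ and $C_n\defeq \varepsilon(M'_{-n})$. Relations~\eqref{eq: contr_ext_anticomms} become $\{b_m,c_n\}=\delta_{m+n,0}$ and $\{B_m,C_n\}=\delta_{m+n,0}$, which via Definition~\ref{def: bc-systems general} are equivalent to the standard fermionic $bc$-OPE once one assembles the generating functions
\[
    b(z)=\sum_n b_n z^{-n-2},\quad c(z)=\sum_n c_n z^{-n+1},\quad B(z)=\sum_n B_n z^{-n-(1-\lambda)},\quad C(z)=\sum_n C_n z^{-n-\lambda}.
\]
These are therefore $bc$-systems of weights $(2,-1)$ and $(1-\lambda,\lambda)$, respectively. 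A quick check confirms that the annihilation conditions imposed by $\omega_0^{(L)}$ and $\omega_0^{(M)}$ match the vacuum axioms \eqref{eq: bc_vacuum} at charges $q_{bc}=3$ and $q_{BC}=2-\lambda$, consistently with corollary~\ref{cor: rho(L_n>0, M_n>0) omega_0 = 0}.

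The remaining task is to recognise the generators of $\rho$ as composite fields in these $bc$-systems. Starting from lemma~\ref{lem: rho(L_n) and rho(M_n) semiinfforms} and substituting the above identifications, I would form the generating functions $\Tgh(z)=\sum_n \rho(L_n)z^{-n-2}$ and $\Mgh(z)=\sum_n \rho(M_n)z^{-n-(1-\lambda)}$, and compare mode by mode with the Virasoro formula~\eqref{eq: Tbc general} applied separately to $(b,c)$ and $(B,C)$, and with the expansion of $(\lambda-1)\normalord{B\partial c}-\normalord{\partial B c}$. This yields the claimed equalities $\Tgh=T^{bc}+T^{BC}$ and $\Mgh=(\lambda-1)\normalord{B\partial c}-\normalord{\partial B c}$. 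Because the whole identification is an isomorphism of graded Clifford modules, the OPEs of $\Tgh$ and $\Mgh$ automatically reproduce a $\hat{\g}_\lambda$ field theory in the sense of Definition~\ref{def: g_lambda field theory}, with central charge fixed by theorem~\ref{thm: semiinfrep of g_lambda central charge}. The only step requiring genuine care is this last mode comparison: the shifts between $\iota(L_{i+n})\varepsilon(L'_i)$ and the standard labelling of $b_m c_{m'}$, and the analogous cross-sector bookkeeping for the $M$-current (which mixes $B$-modes with $c$-modes rather than with $C$-modes, precisely because $\ad_{M_n}$ maps $L$'s to $M$'s), force a vigilant accounting of index shifts and signs, but no conceptual difficulty intervenes.
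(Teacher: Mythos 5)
Your proposal is correct and follows essentially the same route as the paper: the same mode identifications $b_n=\iota(L_n)$, $c_n=\varepsilon(L'_{-n})$, $B_n=\iota(M_n)$, $C_n=\varepsilon(M'_{-n})$, the same generating functions of weights $(2,-1)$ and $(1-\lambda,\lambda)$, and the same mode-by-mode matching of $\rho(L_n)$ and $\rho(M_n)$ against normal-ordered bilinears to land on $\Tgh=T^{bc}+T^{BC}$ and $\Mgh=(\lambda-1)\normalord{B\partial c}-\normalord{\partial B c}$. The only (acceptable) difference is at the final step, where the paper verifies the resulting $\hat{\g}_\lambda$ OPEs and central charge by a direct computer-assisted OPE computation, whereas you deduce them from Theorem~\ref{thm: semiinfrep of g_lambda central charge} together with the mode--OPE correspondence \ref{item: Mode algebra OPE relation}; your added tensor factorisation of $\semiinfforms$ and the vacuum-charge bookkeeping ($q_{bc}=3$, $q_{BC}=2-\lambda$) are correct but not essential to the argument.
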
 
\begin{proof}
We start with the space of semi-infinite forms and make contact with $bc$-systems as follows. Let
\begin{equation}
\label{eq: g_lambda ghost modes identification}
    b_n \defeq \iota(L_n), \quad c_n \defeq \varepsilon(L'_{-n}), \quad B_n \defeq \iota(M_n), \quad C_n \defeq  \varepsilon(M'_{-n})
\end{equation}
and define the generating functions
\begin{gather}
    b(z) \defeq \sum_{n\in\ZZ}b_n z^{-n-2} \quad\quad c(z) \defeq \sum_{n\in\ZZ} c_n z^{-n+1} \label{eq: b,c gen func g_lambda}\\
    B(z) \defeq \sum_{n\in\ZZ}B_n z^{-n-(1-\lambda)} \quad\quad C(z) \defeq \sum_{n\in\ZZ} C_n z^{-n-\lambda} \label{eq: B,C gen func g_lambda}.
\end{gather}
These fields, together with their construction of their respective Virasoro elements as described in definition \ref{def: bc-systems general}, satisfy the properties of weight $(2,-1)$ and weight $(1-\lambda,\lambda)$ fermionic $bc$-systems respectively.

The key principle is to construct generating functions
\begin{equation}
    \Tgh(z) = \sum_{n\in\ZZ}\rho(L_n)z^{-n-2} \quad \text{ and } \quad M^\text{gh}(z) = \sum_{n\in\ZZ}\rho(M_n)z^{-n-(1-\lambda)}
\end{equation}
from the fields $b(z)$, $c(z)$, $B(z)$ and $C(z)$. This can be done by looking at $\rho(L_n)$ and $\rho(M_n)$ in more detail. Using  \eqref{eq: g_lambda ghost modes identification} in lemma \ref{lem: rho(L_n) and rho(M_n) semiinfforms},
\begin{align}
    \rho(L_n) &= \sum_{m\in\ZZ} (n-m) :\big(b_{m+n} c_{-n} + B_{m+n} C_{-n}\big):\label{eq: rho(L_n) g_lambda in ghost modes}\\
    \rho(M_n) &= \sum_{m\in\ZZ} (n + \lambda m) :B_{m+n} c_{-n}:\label{eq: rho(M_n) g_lambda in ghost modes}.
\end{align}
Hence, we may ask: what normal-ordered products of $b(z)$, $c(z)$, $B(z)$ and $C(z)$ have modes \eqref{eq: rho(L_n) g_lambda in ghost modes} and \eqref{eq: rho(M_n) g_lambda in ghost modes}? The answer is straightforward for $\Tgh$:
\begin{equation}
\label{eq: g_lambda Tgh}
    \Tgh = -2\normalord{b\partial c} - \normalord{\partial b c} -(1-\lambda)\normalord{B\partial C}+\lambda \normalord{\partial B C} = T^{bc} + T^{BC},
\end{equation}
where $T^{bc}$ and $T^{BC}$ are given by \ref{eq: Tbc general}. Thus, the form of $\Tgh$ is exactly what one would expect when considering the total Virasoro element of two independent $bc$-systems. 

The answer to the earlier question is not as obvious for $M^\text{gh}$, but the form of $\Tgh$ is quite instructive in helping us guess what terms should be there. $\rho(L_n)$ has one term with $b$ and $c$ modes and another with $B$ and $C$ modes. The corresponding field $\Tgh$, whose $n$-th mode is $\rho(L_n)$, is a linear combination of weight 2 fields formed from the normal-ordered products of one $b$ and one $c$, and one $B$ and one $C$. Now consider $\rho(M_n)$. Since only $B$ and $c$ modes appear in \eqref{eq: rho(M_n) g_lambda in ghost modes}, we infer, based on the form of $\Tgh$ in relation to $\rho(L_n)$, that the most general expression for the corresponding field $M^\text{gh}$ is
\begin{equation}
    M^\text{gh} = a_1 \normalord{B\partial c} + a_2\normalord{\partial B c},
\end{equation}
for some $a_1,a_2\in\CC$. A quick computation reveals that $a_1=\lambda-1$ and $a_2=-1$. Thus,
\begin{equation}
\label{eq: g_lambda Mgh}
    M^\text{gh} = (\lambda-1) \normalord{B\partial c}- \normalord{\partial B c}.
\end{equation}
Now that we have our fields $\Tgh$ and $M^\text{gh}$, we compute their
OPEs (using Mathematica~\cite{Thielemans:1991uw,Thielemans:1994er})
and arrive at equations \eqref{eq: g_lambda TT OPE}, \eqref{eq:
  g_lambda TM OPE} and \eqref{eq: g_lambda MM OPE} with
$c_L = -26-2(6\lambda^2-6\lambda+1)$, as expected. This shows that our
field-theoretic formulation of the semi-infinite wedge representation
of $\Hat{\g}_\lambda$ is consistent and thereby completes the proof.
\end{proof}

\subsubsection{The BRST quantisation of $\Hat{\g}_\lambda$ field  theories}

We are now ready to explain the BRST quantisation of
$\Hat{\g}_\lambda$ field theories in the language of semi-infinite
cohomology. This quantisation procedure requires the construction of a
square-zero BRST operator, which is precisely the semi-infinite
differential \eqref{eq: semi-infinite differential}. The resulting
BRST cohomology with respect to this operator is then
$H^{\dotr}_\infty(\hat{\g}_\lambda,c_L;\M)$. According to lemma
\ref{lem: rel. SIC requirement}, $\pi\colon\Hat{\g}_\lambda \to\End\M$
needs to be a category $\mathcal{O}$ representation with
$\pi(c_L)=-\rho(c_L)=26 + 2(6\lambda^2-6\lambda+1)$ to ensure that the
BRST operator is square-zero. The field-theoretic formulation is
complete when $\M$ is regarded as the \emph{matter sector} of the
$\hat{\g}_\lambda$ field theory generated by
\begin{equation}
    \label{eq: Tmat Mmat construction}
    \Tmat(z) = \sum_{n\in\ZZ}\pi(L_n)z^{-n-2}, \quad \quad \Mmat(z) = \sum_{n\in\ZZ}\pi(M_n)z^{-n-(1-\lambda)}.
\end{equation}
These obey \eqref{eq: g_lambda TT OPE}, \eqref{eq: g_lambda TM OPE} and \eqref{eq: g_lambda MM OPE}.
$\Tmat$ is the Virasoro element of this $\hat{\g}_\lambda$ field theory with central charge $\pi(c_L)$ that cancels that of the \emph{ghost sector} of the theory, which is precisely the semi-infinite wedge representation $\rho\colon\hat{\g}_\lambda\to\End(\semiinfforms)$. 

We summarise the construction of the BRST operator in the following theorem.
\begin{theorem}
\label{thm: BRST quantisation of ghat_lambda field theories}
    Let $(\Tmat, \Mmat)$ as constructed via \eqref{eq: Tmat Mmat construction} generate a $\Hat{\g}_\lambda$ field theory with central charge $c^\text{mat}$ and $(T^\text{gh}, M^\text{gh})$ be the fields \eqref{eq: g_lambda Tgh} and \eqref{eq: g_lambda Mgh}. The zero mode $d$ of the BRST current 
    \begin{equation}
    \label{eq: BRST current}
        \mathfrak{J} = \normalord{c\Tmat} + \frac{1}{2}\normalord{cT^\text{gh}} + \normalord{C\Mmat} + \frac{1}{2}\normalord{CM^\text{gh}},
    \end{equation}
    otherwise known as the BRST operator, is square-zero if and only if $c^\text{mat} = 26+2(6\lambda^2-6\lambda+1)$. This is the field-theoretic restatement of lemma \ref{lem: rel. SIC requirement}.
    Define $\Ttot \defeq \Tmat+T^\text{gh}$ and
    $\Mtot \defeq \Mmat + M^\text{gh}$. Then
    \begin{equation}
    \label{eq: T^tot and M^tot are BRST-exact}
        db = \Ttot, \quad \quad dB = \Mtot.
    \end{equation}
    This is the field-theoretic restatement of the fact that the semi-infinite differential \eqref{eq: semi-infinite differential} obeys \eqref{eq: theta_as_anticomm}.
\end{theorem}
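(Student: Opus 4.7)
The plan is to attack both parts as field-theoretic restatements of already-established algebraic identities. I would tackle part~(2) first, where the work is essentially translational, and then derive the central-charge condition in part~(1) by computing a single OPE.

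\textbf{Part (2).} I would start from Proposition~\ref{prop: rho_intext_commrels} together with equation~\eqref{eq: theta_as_anticomm}, which gives $\theta(x)=[d,\iota(x)]$ for all $x\in\g_\lambda$. Specialising to $x=L_n$ and $x=M_n$ and using $\iota(L_n)=b_n$, $\iota(M_n)=B_n$ from \eqref{eq: g_lambda ghost modes identification} produces the mode-level identities
\begin{equation*}
  [d,b_n] = \theta(L_n) = \pi(L_n)+\rho(L_n),\qquad [d,B_n] = \theta(M_n) = \pi(M_n)+\rho(M_n).
\end{equation*}
Packaging both sides into generating functions via \eqref{eq: b,c gen func g_lambda}, \eqref{eq: B,C gen func g_lambda}, \eqref{eq: Tmat Mmat construction} and Proposition~\ref{prop: bc-system ff realisation of g_lambda}, I recognise $\pi(L_n)+\rho(L_n)$ as the modes of $\Tmat+\Tgh=\Ttot$, and likewise for $\Mtot$. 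The resulting field statements are $[d,b(z)]=\Ttot(z)$ and $[d,B(z)]=\Mtot(z)$, which in the VOA shorthand of the paper are precisely $db=\Ttot$ and $dB=\Mtot$.

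\textbf{Part (1).} For the square-zero claim I would compute the OPE $\mathfrak{J}(z)\mathfrak{J}(w)$ directly. The input data are the OPEs \eqref{eq: g_lambda TT OPE}--\eqref{eq: g_lambda MM OPE} for the matter sector, the analogous OPEs for the ghost sector with $c^{\text{gh}}=-26-2(6\lambda^2-6\lambda+1)$ built into \eqref{eq: g_lambda Tgh}--\eqref{eq: g_lambda Mgh}, and the elementary contractions $c(z)b(w)=C(z)B(w)=1/(z-w)$. I would extract the single pole of this OPE: up to total $\partial$-derivatives, it represents $2d^2$ as a field. By construction the BRST current is tailored to make the ``kinematic'' part of this coefficient vanish, so the only surviving obstruction is an anomaly proportional to $c^{\text{mat}}+c^{\text{gh}}$ multiplying a non-derivative local field (schematically $c\,\partial^{3}c$ and its $C\,\partial^{k}C$ analogue). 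Setting this coefficient to zero and plugging in the ghost central charge yields $c^{\text{mat}}=26+2(6\lambda^2-6\lambda+1)$, consistent with the necessary condition $\pi(c_L)=-\rho(c_L)$ of Lemma~\ref{lem: rel. SIC requirement}.

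\textbf{Main obstacle.} The real labour lies in the OPE $\mathfrak{J}(z)\mathfrak{J}(w)$: four normal-ordered bilinears in ghost and matter fields generate a proliferation of cross-contractions, and the generalised Wick formula for products of normal-ordered operators must be applied several times to keep track of derivative and double-contraction corrections. Rather than do this by hand I would delegate the algebra to the Thielemans \textsc{Mathematica} package \cite{Thielemans:1991uw,Thielemans:1994er}, following the same tactic the authors adopt in Proposition~\ref{prop: bc-system ff realisation of g_lambda}. Once the offending coefficient is isolated, the critical central charge is immediate.
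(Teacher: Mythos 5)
Your proposal is correct and follows essentially the same route as the paper: the paper likewise reduces $d^2Y$ to $\tfrac12[[\mathfrak{J}\,\mathfrak{J}]_1\,Y]_1$ via \ref{item: OPE to EndV acting on a state} and \ref{item: Jac-identity 1}, computes $[\mathfrak{J}\,\mathfrak{J}]_1$ with the Thielemans package, and identifies criticality of $c^{\text{mat}}$ as exactly the condition for that simple pole to become a total derivative, hence annihilated by $[-\,,Y]_1$ by \ref{item: proporties of partial in CFT}. Two small refinements: since $M(z)M(w)$ is regular there is in fact no $C\partial^k C$ anomaly --- the whole obstruction sits in the $\normalord{\partial^2 c\,\partial c}$ and $\normalord{\partial^3 c\, c}$ terms (the $\partial\normalord{McC}$ piece being already a total derivative) --- and for part (2), which the paper leaves implicit, the cleanest argument is the direct kinematic computation $[\mathfrak{J}\,b]_1=\Ttot$, $[\mathfrak{J}\,B]_1=\Mtot$ (valid for any $c^{\text{mat}}$, since no central term can contribute to these poles), which avoids leaning on \eqref{eq: theta_as_anticomm}, an identity the paper ties via Akman's result to $d^2=0$ itself.
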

\begin{corollary}
    Equation \eqref{eq: T^tot and M^tot are BRST-exact} along with the fact that $L^\text{tot}_0$ acts semi-simply on $\M\medotimes\semiinfforms$ implies that all non-trivial BRST cohomology resides only in the zero-eigenvalue eigenspace of $L^\text{tot}_0$.
\end{corollary}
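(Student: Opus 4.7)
The plan is to show that $L^\text{tot}_0$ is a BRST (anti)commutator as an operator, and then apply the standard contracting-homotopy argument eigenspace by eigenspace.

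First I would translate the field identity $\Ttot = db$ from \eqref{eq: T^tot and M^tot are BRST-exact} into a mode identity. Extracting the zero modes of both sides, and using that both $d$ (ghost number $+1$) and $b_0$ (ghost number $-1$) are odd, yields
\begin{equation*}
L^\text{tot}_0 \;=\; \{d, b_0\}.
\end{equation*}
This immediately implies $[d, L^\text{tot}_0] = 0$, so $d$ preserves each $L^\text{tot}_0$-eigenspace of $\M \medotimes \semiinfforms$. A routine check also shows that $b_0 = \iota(L_0)$ preserves the defining conditions of the relative subcomplex $C^{\dotr}_\infty(\hat{\g}_\lambda, \z; \M)$ (it commutes with $\iota(c_L)$ and with $\theta(c_L)$), so the entire argument stays inside the relative complex.

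By the semi-simplicity hypothesis, the relative complex then decomposes as a direct sum of $L^\text{tot}_0$-eigenspaces, each of which is a $d$-subcomplex, and the BRST cohomology splits accordingly. It therefore suffices to prove that every eigenspace with nonzero eigenvalue has trivial cohomology. Suppose $\psi$ is BRST-closed with $L^\text{tot}_0\psi = \mu\psi$ and $\mu\neq 0$. Then
\begin{equation*}
\psi \;=\; \tfrac{1}{\mu}\,L^\text{tot}_0\psi \;=\; \tfrac{1}{\mu}\,(db_0 + b_0 d)\psi \;=\; d\!\left(\tfrac{1}{\mu}\,b_0\psi\right),
\end{equation*}
since $d\psi = 0$. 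Hence $\psi$ is BRST-exact, and all nontrivial cohomology is concentrated in $\ker L^\text{tot}_0$. The only mildly delicate step is the translation from the field equation to the mode identity $L^\text{tot}_0 = \{d, b_0\}$, which is an unambiguous application of the VOA mode-algebra dictionary; there is no real obstruction.
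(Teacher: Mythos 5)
Your proof is correct and is exactly the standard argument the paper is implicitly invoking (the paper states the corollary without proof): extracting the zero mode of $db=\Ttot$ via \ref{item: Mode algebra OPE relation} gives $L^{\text{tot}}_0=\{d,b_0\}$, and the contracting homotopy $\psi=d(\mu^{-1}b_0\psi)$ on each nonzero eigenspace, together with semi-simplicity of $L^{\text{tot}}_0$, concentrates the cohomology in $\ker L^{\text{tot}}_0$. The one step you flag as delicate — the field-to-mode translation — is indeed unambiguous, and your check that $b_0=\iota(L_0)$ preserves the relative subcomplex is the right thing to verify.
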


\section{Embedding theorems}\label{sec: embedding theorems}

The BRST cohomology of a TCFT has more structure than just that of a
graded vector space. It is actually a Batalin--Vilkovisky algebra (see
\cite{Lian:1992mn, Lian:1994na, Lian:1995wa}).
Therefore we need to
be precise when we talk about isomorphisms of BRST cohomology. In
this section we will exhibit some BV-algebra isomorphisms of BRST
cohomologies. These are ``stronger'' than vector space
isomorphisms. The main idea is to show that these isomorphisms
preserve the extra structure (i.e. that of a BV algebra) manifestly,
without reference to the exact details of the structure. In this
section, we present a couple of embedding theorems relating the BRST
cohomology of the Virasoro, $\Hat{\g}_\lambda$ and the twisted $N=2$
superconformal algebras.

\subsection{Embedding 1: $c_L=26$ CFTs into $\hat{\g}_\lambda$ field theories}

The first embedding theorem can be stated as follows.

\begin{theorem}\label{thm: Embedding Vir with c=26 into g_lambda}
Let $\M$ be a $\Vir$-module with central charge 26. The BRST cohomology of a CFT with matter sector $\M$ (i.e., generated by $T^{\M}$) is isomorphic, as a BV algebra, to the BRST cohomology of a $\hat{\g}_\lambda$ field theory with $V^{\beta\gamma}_q \otimes \M$ as its matter sector, where the $\Hat{\g}_\lambda$-module $V^{\beta\gamma}_q$ is the space of states of the $\beta\gamma$-system with any vacuum choice $\ket{q}$.
More succinctly, 
 \begin{equation}
   H^{\dotr}_\infty (\Vir, c_L; \M) \cong H^{\dotr}_\infty (\hat{\g}_\lambda, c_L; V^{\beta\gamma}_q \otimes \M)
\end{equation}   
as BV algebras.
\end{theorem}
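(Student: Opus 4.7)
The plan is to realise the theorem as a semi-infinite Shapiro-type identification for the abelian ideal $I = \bigoplus_n \CC M_n \subset \hat{\g}_\lambda$, and then enhance the resulting vector-space isomorphism to one of BV-algebras. First I would equip $V^{\beta\gamma}_q \otimes \M$ with a $\hat{\g}_\lambda$-module structure by setting $\Tmat = T^{\M} + T^{\beta\gamma}$ and $\Mmat = \beta$. Since $\beta$ is primary of weight $1-\lambda$ and $\beta(z)\beta(w)$ is regular, the OPEs \eqref{eq: g_lambda TT OPE}--\eqref{eq: g_lambda MM OPE} are verified at once, and the total Virasoro central charge is $26 + 2(6\lambda^2 - 6\lambda + 1)$ -- precisely the critical value from theorem \ref{thm: BRST quantisation of ghat_lambda field theories} -- so the corresponding BRST operator is nilpotent.

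The main technical step is then to show that the extra $(\beta, \gamma, B, C)$ content is cohomologically trivial. To this end I would filter the full BRST complex $V^{bc}\otimes V^{BC} \otimes V^{\beta\gamma}_q \otimes \M$ by a grading (such as $\mathcal{N}_C - \mathcal{N}_\gamma$) under which the BRST current splits as $\mathfrak{J} = \mathfrak{J}_+ + \mathfrak{J}_0$, with $\mathfrak{J}_+ = \normalord{C\beta}$ the degree $+1$ piece. The associated graded differential $d_+$ acts as $d_+ B = \beta$ and $d_+ \gamma = C$, vanishing on $\beta$, $C$ and the remaining factors; mode by mode this is a trivially acyclic Koszul pairing, and a Künneth argument identifies the $d_+$-cohomology of $V^{\beta\gamma}_q \otimes V^{BC}$ with a one-dimensional space, concentrated at the combined vacuum $\ket{q}\otimes\ket{\text{vac}}_{BC}$. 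Convergence of the spectral sequence is ensured by the category $\mathcal{O}$ hypothesis, which guarantees finite-dimensional graded pieces.

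Collapsing to the $E_1$-page then gives $V^{bc}\otimes\M$, on which the induced differential reduces to the Virasoro BRST current. The $\frac{1}{2}\normalord{cT^{\beta\gamma}}$ and $\frac{1}{2}\normalord{CM^{\text{gh}}}$ contributions of $\mathfrak{J}_0$ act trivially on the distinguished vacuum representative, because $T^{\beta\gamma}$ annihilates $\ket{q}$ in that representative and $M^{\text{gh}}$ carries a $B$-mode that is absorbed. The spectral sequence thus degenerates at $E_2$, yielding the vector-space isomorphism $H^{\dotr}_\infty(\hat{\g}_\lambda, c_L; V^{\beta\gamma}_q \otimes \M) \cong H^{\dotr}_\infty(\Vir, c_L; \M)$.

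The hardest step will be upgrading this to an isomorphism of BV-algebras. The BV structure on the BRST cohomology of a TCFT, following Lian--Zuckerman, is obtained from the normal-ordered product together with the zero mode of the weight-$2$ $b$-ghost, which plays the role of the BV operator $\Delta$. Crucially, both complexes share the same $(b, c)$ system with $db = \Ttot$, so $\Delta$ descends coherently on both sides. Because the extra $(\beta, \gamma, B, C)$ sector constitutes a BV-trivial subalgebra, contracting it does not affect the Lian--Zuckerman operations on cohomology classes, and the quasi-isomorphism constructed above should automatically descend to a BV-algebra map. Making this last statement precise requires extracting explicit cochain representatives on both sides (for instance via the Homological Perturbation Lemma applied to the Koszul contraction of Step 2) and tracking how the residues defining the antibracket transform under the chain-level quasi-isomorphism; this functorial compatibility is the main bottleneck.
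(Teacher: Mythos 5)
Your setup (matter sector $(T^{\M}+T^{\beta\gamma},\,\beta)$ via Lemma~\ref{lem: g_lambda FFR with beta-gamma}, central charge bookkeeping, and the identification of a Koszul/quartet mechanism pairing $B\leftrightarrow\beta$ and $\gamma\leftrightarrow C$) matches the paper's, but from there you take a genuinely different route: a filtration spectral sequence with leading differential $(\normalord{C\beta})_0$, rather than the paper's Ishikawa--Kato-style \emph{inner automorphism}. The paper constructs $r=(\lambda-1)\normalord{\gamma B\partial c}-\normalord{\gamma\partial Bc}$ and shows $d=e^{\ad_{r_0}}(d_{\Vir}+d_{\text{KO}})e^{-\ad_{r_0}}$, so the full complex is \emph{isomorphic} (not merely quasi-isomorphic) to the tensor product of the Virasoro BRST complex with a Koszul complex; K\"unneth and Lemma~\ref{lem: Koszul chiral ring} then finish the job. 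Two concrete problems with your version. First, the grading $\mathcal{N}_C-\mathcal{N}_\gamma$ does not do what you claim: under it $\normalord{cT^{BC}}$ and $\normalord{CM^{\text{gh}}}$ also sit in degree $+1$ alongside $\normalord{C\beta}$, while $\normalord{cT^{\beta\gamma}}$ sits in degree $-1$, so $\mathfrak{J}$ is not of the form $\mathfrak{J}_++\mathfrak{J}_0$ and you do not even get a filtration. This is repairable (e.g.\ grade by $\mathcal{N}_C+\mathcal{N}_\beta-\mathcal{N}_\gamma-\mathcal{N}_B$, under which $\normalord{C\beta}$ is the unique top piece and the $E_1$ page is concentrated in a single filtration degree, forcing degeneration), but as written the key step fails. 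Relatedly, your claim that $T^{\beta\gamma}$ ``annihilates $\ket{q}$'' is false in general ($L_0^{\beta\gamma}\ket{q}\neq 0$ for generic $q$); the correct statement is about the induced differential on $d_0$-cohomology, which needs the corrected filtration to be justified.

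The second and more serious issue is the BV-algebra upgrade, which you yourself flag as an unresolved ``bottleneck''. That upgrade is precisely the content of the theorem: the paper stresses that the vector-space isomorphism is already a semi-infinite Shapiro-type statement (Voronov), and that the point here is the \emph{stronger} BV isomorphism. Your spectral-sequence quasi-isomorphism does not obviously intertwine the Lian--Zuckerman dot product, bracket, and $\Delta=b_0$; making it do so via homological perturbation is real work you have not done, and the assertion that the $(\beta,\gamma,B,C)$ sector is a ``BV-trivial subalgebra'' whose contraction leaves the operations untouched is exactly what needs proof. The conjugation argument avoids this entirely: $e^{\ad_{r_0}}$ is an automorphism of the Lie superalgebra of modes on $\End(V^{\beta\gamma}_q\medotimes\M\medotimes\semiinfforms)$, hence transports every operation built from modes (including $b_0$ and the normal-ordered product) along with the differential, so BV compatibility is manifest. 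Without either carrying out the homotopy-transfer argument or replacing it with such a chain-level isomorphism, your proof establishes at best the graded vector-space statement.
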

To prove this, we need two ingredients. Evidenced by its appearance in the theorem statement, the first is the free field realisation of $\hat{\g}_\lambda$ in terms of an appropriately weighted $\beta\gamma$-system. This can be obtained by a straightforward generalisation of the same construction for the BMS\textsubscript{3} algebra which was done in \cite{Banerjee:2015kcx}.
\begin{lemma}
\label{lem: g_lambda FFR with beta-gamma}
There exists a free field realisation of every $\Hat{g}_\lambda$ field theory in terms of a weight $(1-\lambda,\lambda)$ $\beta\gamma$-system given by
\begin{equation}\label{eq: g_lambda beta-gamma embedding}
    (T,M) \rightarrow (T^{\beta\gamma},\beta),
\end{equation}
where $T^{\beta\gamma}$ is constructed according to definition \ref{def: bc-systems general} and has central charge $c_L = 2(6\lambda^2-6\lambda+1)$.
\end{lemma}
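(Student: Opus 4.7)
The plan is to verify directly that the proposed pair $(T^{\beta\gamma},\beta)$ satisfies the three defining OPEs \eqref{eq: g_lambda TT OPE}, \eqref{eq: g_lambda TM OPE}, \eqref{eq: g_lambda MM OPE} of a $\hat{\g}_\lambda$ field theory, with the stated central charge. By Definition \ref{def: g_lambda field theory}, any CFT generated by fields obeying these OPEs is a $\hat{\g}_\lambda$ field theory, so the verification is all we need.

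First, I would write down the data of the weight $(1-\lambda,\lambda)$ $\beta\gamma$-system as specified in Definition~\ref{def: bc-systems general} with $\epsilon=-1$: namely the OPE $\beta(z)\gamma(w)=-(z-w)^{-1}+\reg$ (equivalently $\gamma(z)\beta(w)=(z-w)^{-1}+\reg$, both fields bosonic), and the Virasoro element
\begin{equation*}
T^{\beta\gamma} = (1-\lambda)\normalord{\beta\,\partial\gamma} - \lambda\normalord{\partial\beta\,\gamma}.
\end{equation*}
The $MM$-OPE in \eqref{eq: g_lambda MM OPE} is immediate: $\beta(z)\beta(w)$ is regular because the two $\beta$'s have no contraction and commute as bosonic fields.

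Next, for \eqref{eq: g_lambda TM OPE}, the field $\beta$ is primary of conformal weight $1-\lambda$ with respect to $T^{\beta\gamma}$; this is the standard property of a $\beta\gamma$-system of weights $(1-\lambda,\lambda)$, and it follows by a short Wick-style contraction using only the elementary OPE above and the axioms in \ref{item D4}. This yields exactly
\begin{equation*}
T^{\beta\gamma}(z)\beta(w) = \frac{(1-\lambda)\beta(w)}{(z-w)^2} + \frac{\partial\beta(w)}{z-w} + \reg,
\end{equation*}
which, upon the identification $M=\beta$, is \eqref{eq: g_lambda TM OPE}.

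Finally, for \eqref{eq: g_lambda TT OPE}, I would invoke the standard computation of $T^{\beta\gamma}(z)T^{\beta\gamma}(w)$ for a $\beta\gamma$-system, which reproduces the Virasoro OPE with central charge $c_L=2(6\lambda^2-6\lambda+1)$. Concretely, this is the $\epsilon=-1$ version of the formula $-\epsilon\,2(6\lambda^2-6\lambda+1)$ in Definition~\ref{def: bc-systems general}; since the manipulations leading to that formula in the $bc$ case depended on $\epsilon$ only through the sign of a single Wick contraction, the bosonic case gives the opposite sign. I would either cite this standard result or, since the paper relies on explicit Mathematica OPE computations elsewhere, perform the check with the same tool \cite{Thielemans:1991uw,Thielemans:1994er}.

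I do not expect a real obstacle here; the only subtlety is bookkeeping of signs and weights in the $\epsilon=-1$ version of Definition~\ref{def: bc-systems general}, and making sure $\beta$ (rather than some composite) is genuinely a primary of weight $1-\lambda$. Once the three OPEs are verified, the embedding $(T,M)\mapsto(T^{\beta\gamma},\beta)$ is a homomorphism of the generating data and therefore extends to the claimed free field realisation.
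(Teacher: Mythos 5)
Your proposal is exactly the paper's proof: the paper's argument for this lemma is literally ``compute the OPEs of $(T^{\beta\gamma},\beta)$ using \ref{item D4} and check they reproduce \eqref{eq: g_lambda TT OPE}--\eqref{eq: g_lambda MM OPE} with $c_L=2(6\lambda^2-6\lambda+1)$'', and you have simply spelled out the three checks. One concrete slip in your write-up: Definition~\ref{def: bc-systems general} normalises the elementary OPE as $b(z)c(w)=\mathbbm{1}(w)/(z-w)+\reg$ in \emph{both} statistics, with $\epsilon$ entering only through $[cb]_1=\epsilon$; so for the bosonic case you should take $\beta(z)\gamma(w)=+(z-w)^{-1}+\reg$ and $\gamma(z)\beta(w)=-(z-w)^{-1}+\reg$, the opposite of what you wrote. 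With your reversed sign and the fixed formula $T^{\beta\gamma}=(1-\lambda)\normalord{\beta\partial\gamma}-\lambda\normalord{\partial\beta\gamma}$, the Wick contraction gives $[T^{\beta\gamma}\beta]_2=-(1-\lambda)\beta$ and the wrong sign on the $2T/(z-w)^2$ term, so the verification would appear to fail; with the paper's normalisation everything closes as claimed.
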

\begin{proof}
    Computing the OPEs of $(T^{\beta\gamma}, \beta)$ using the properties given by \ref{item D4} shows that the embedding \eqref{eq: g_lambda beta-gamma embedding} indeed satisfies \eqref{eq: g_lambda TT OPE}, \eqref{eq: g_lambda TM OPE} and \eqref{eq: g_lambda MM OPE}, with the Virasoro central charge $c_L = 2(6\lambda^2-6\lambda+1)$.
\end{proof}

The second ingredient is the Koszul CFT.
\begin{definition}
\label{def: Koszul CFT}
    A \subdefhighlight{Koszul CFT} is spanned by a $\Tilde{\beta}\Tilde{\gamma}$-system and a $\Tilde{b}\Tilde{c}$-system, each of weight $(1-\mu,\mu)$. Together with the differential $d_\text{KO} \defeq \normalord{\Tilde{c}\Tilde{\beta}}_0$ and some choice of vacuum $\ket{q}$, a Koszul CFT describes a differential graded algebra spanned by the modes $\{\Tilde{\beta}_n,\Tilde{\gamma}_n,\Tilde{b}_n,\Tilde{c}_n\}_{n\in\ZZ}$. The cohomology of the differential graded algebra described by any Koszul CFT with respect to the differential $d_\text{KO}$, denoted $H^{\dotr}_\text{KO}$, is called the \subdefhighlight{chiral ring} of a Koszul CFT.
\end{definition}
\begin{lemma}
    \label{lem: Koszul chiral ring}
    The chiral ring of a Koszul CFT is 1-dimensional. That is,
    \begin{equation}
        \label{eq: Koszul CFT cohomology}
        H^{n}_\text{KO}=\begin{cases}
             \CC\ket{\text{vac}}_q, \quad n=0 \\
             0, \quad \text{otherwise,}
        \end{cases}
    \end{equation}
    where $\ket{\text{vac}}_q \defeq \ket{-q}_{\beTil\gamTil}\otimes \ket{q}_{\Tilde{b}\Tilde{c}}$.
\end{lemma}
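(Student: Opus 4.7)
The plan is to exhibit the Fock space $V^{\tilde{\beta}\tilde{\gamma}}_{-q} \otimes V^{\tilde{b}\tilde{c}}_q$ as a tensor product of elementary two-term Koszul complexes, one for each creation mode above $\ket{\text{vac}}_q$, and then invoke Künneth. The first step is to compute the (anti)commutators
\begin{equation*}
\{d_\text{KO}, \tilde{b}_n\} = \tilde{\beta}_n, \qquad [d_\text{KO}, \tilde{\gamma}_n] = \sigma\,\tilde{c}_n, \qquad \{d_\text{KO}, \tilde{c}_n\} = 0, \qquad [d_\text{KO}, \tilde{\beta}_n] = 0,
\end{equation*}
directly from the definition $d_\text{KO} = \normalord{\tilde{c}\tilde{\beta}}_0$, where $\sigma = \pm 1$ is fixed by the conventions adopted for the bosonic bracket $[\tilde{\beta}_m,\tilde{\gamma}_n]$. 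These four relations imply $d_\text{KO}^2 = 0$ immediately and show that $d_\text{KO}$ pairs each fermionic creation with its bosonic partner of the same mode.

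Setting $N_0 = q - (1-\mu)$, the creation modes above $\ket{\text{vac}}_q$ partition into two families: $(\tilde{b}_n, \tilde{\beta}_n)$ for $n \leq N_0$ and $(\tilde{c}_k, \tilde{\gamma}_k)$ for $k \leq -N_0 - 1$. Because the only nonzero oscillator brackets couple mode $m$ to mode $-m$, each summand $\normalord{\tilde{c}_m\tilde{\beta}_{-m}}$ of $d_\text{KO}$ lives in a single tensor factor, and the Fock space factorises as
\begin{equation*}
V^{\tilde{\beta}\tilde{\gamma}}_{-q} \otimes V^{\tilde{b}\tilde{c}}_q \;\cong\; \bigotimes_{n \leq N_0}\bigl(S[\tilde{\beta}_n]\otimes\Lambda[\tilde{b}_n]\bigr)\;\otimes\;\bigotimes_{k \leq -N_0 -1}\bigl(S[\tilde{\gamma}_k]\otimes\Lambda[\tilde{c}_k]\bigr).
\end{equation*}
On each such factor $d_\text{KO}$ restricts to the elementary Koszul differential $\tilde{b}_n \mapsto \tilde{\beta}_n$ (respectively $\tilde{\gamma}_k \mapsto \sigma\tilde{c}_k$), with the standard graded tensor-product signs being automatic from the brackets above.

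Each elementary factor is a short calculation: in $S[\tilde{\beta}_n]\otimes\Lambda[\tilde{b}_n]$ the map $\tilde{b}_n\tilde{\beta}_n^j\ket{\text{vac}}_n \mapsto \tilde{\beta}_n^{j+1}\ket{\text{vac}}_n$ has trivial kernel at ghost $-1$ and cokernel $\CC\ket{\text{vac}}_n$ at ghost $0$, while in $S[\tilde{\gamma}_k]\otimes\Lambda[\tilde{c}_k]$ the analogue $\tilde{\gamma}_k^j\ket{\text{vac}}_k \mapsto \sigma j\,\tilde{c}_k\tilde{\gamma}_k^{j-1}\ket{\text{vac}}_k$ has kernel $\CC\ket{\text{vac}}_k$ at ghost $0$ and is surjective onto ghost $+1$. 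Both elementary cohomologies are one-dimensional and concentrated in ghost $0$. Künneth then yields
\begin{equation*}
H^n_\text{KO} = 0 \text{ for } n \neq 0, \qquad H^0_\text{KO} = \CC\ket{\text{vac}}_q,
\end{equation*}
valid because the Fock space is exhausted by finitely-excited configurations (so only finitely many elementary factors are ever nontrivial) and each elementary cohomology lives in a single degree. The main subtle point is verifying that $d_\text{KO}$ genuinely respects the tensor-product factorisation; this reduces to the mode-pairing statement extracted from the brackets in the first step, which also ensures that cross-factor contributions to $d_\text{KO}^2$ vanish automatically.
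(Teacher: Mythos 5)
Your argument is correct. Where the paper disposes of this lemma with a one-line citation to the Kugo--Ojima quartet mechanism, you have supplied the explicit homological content of that mechanism: the opposite charges $\pm q$ of the two vacua are exactly what make the creation modes pair up as $(\tilde{b}_n,\tilde{\beta}_n)$ for $n\leq q-(1-\mu)$ and $(\tilde{c}_k,\tilde{\gamma}_k)$ for $k\leq -q+(1-\mu)-1$, each summand of $d_{\text{KO}}$ acts within a single such pair, and each pair contributes an elementary Koszul complex with one-dimensional cohomology in ghost number zero. Your computation of the four brackets $\{d_{\text{KO}},\tilde{b}_n\}=\tilde{\beta}_n$, $[d_{\text{KO}},\tilde{\gamma}_n]=\sigma\tilde{c}_n$, $\{d_{\text{KO}},\tilde{c}_n\}=0$, $[d_{\text{KO}},\tilde{\beta}_n]=0$ is the quartet structure itself, so the two proofs are the same in spirit; what your version buys is self-containedness and a clean justification of the passage to the infinite tensor product (cohomology commutes with the direct limit over finitely-excited configurations), which the physics-literature statement of the quartet mechanism glosses over. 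The only points worth flagging are conventions, not gaps: the ghost-number normalisation that places $\ket{\text{vac}}_q$ in degree zero is implicit in both the lemma and your proof, and since $\tilde{c}$ and $\tilde{\beta}$ belong to independent systems the normal ordering in $\normalord{\Tilde{c}\Tilde{\beta}}_0$ introduces no constant, so your mode-by-mode reading of $d_{\text{KO}}$ is legitimate.
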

\begin{proof}
    This follows from the Kugo-Ojima quartet mechanism (see \cite{Kato:1982im}). 
\end{proof}

We are now ready to prove theorem \ref{thm: Embedding Vir with c=26 into g_lambda}.

\begin{proof}[Proof of Theorem \ref{thm: Embedding Vir with c=26 into g_lambda}]
We construct an explicit inner automorphism of the Lie superalgebra structure on $\End (V^{\beta\gamma}\medotimes \M)$ which splits $d$, the BRST operator of $\hat{\g}_\lambda$ (zero mode of \eqref{eq: BRST current}), into that of the Virasoro CFT, $d_{\Vir}$, and a Koszul differential $d_\text{KO}$. This is analogous to Ishikawa and Kato's proof for the embedding of the bosonic string into the N=1 superstring \cite{IshikawaKato1994}.

Let $(T^{\M}+ T^{\beta\gamma}, \beta)$ describe the matter sector of the $\hat{\g}_\lambda$ field theory, where the $\beta\gamma$-system is of weight $(1-\lambda)$ (see lemma \ref{lem: g_lambda FFR with beta-gamma}). Let $(T^\text{gh}, M^\text{gh})$ given by \eqref{eq: g_lambda Tgh} and \eqref{eq: g_lambda Mgh} describe the ghost sector. 
Starting with \eqref{eq: BRST current} and using $\Tmat = T^{\beta\gamma}+T^{\M}$ and \eqref{eq: g_lambda Tgh},
\begin{equation}
\begin{split}
    \mathfrak{J} &=  \normalord{cT^{\M}}+\normalord{cT^{\beta\gamma}}+\frac{1}{2}\normalord{cT^{bc}}+\frac{1}{2}\normalord{cT^{BC}}+\normalord{C\beta}+\frac{1}{2}\normalord{C M^\text{gh}}\\
    &=\mathfrak{J}_{\Vir} + \mathfrak{J}_\text{KO} + \normalord{cT^{\beta\gamma}}+\frac{1}{2}\normalord{cT^{BC}}+\frac{1}{2}\normalord{C M^\text{gh}},
\end{split}
\end{equation}
where we have defined
\begin{align}
    \mathfrak{J}_{\Vir} &= \normalord{cT^{\M}}+\frac{1}{2}\normalord{cT^{bc}}\\
    \mathfrak{J}_\text{KO} &= \normalord{C\beta}.
\end{align}
The BRST operator of the Virasoro CFT, $d_{\Vir}$, the Koszul differential, $d_\text{KO}$, and the BRST operator of the $\hat{\g}_\lambda$ field theory, $d$, are the zero modes of $\mathfrak{J}_{\Vir}$, $\mathfrak{J}_\text{KO}$ and $\mathfrak{J}$ respectively.

Next, we construct the weight 1 bosonic field
\begin{equation}
    r(z)= \sum_{n\in\ZZ} r_n z^{-n-1} = (\lambda-1)\normalord{\gamma B \partial c}-\normalord{\gamma \partial B c},
\end{equation}
whose zero mode generates the similarity transformation which performs the splitting of $d$ as intended: 
\begin{equation}
     d = \exp({\ad_{r_0}})(d_\text{KO} + d_{\Vir})= \exp(r_0)(d_\text{KO} + d_{\Vir})\exp(-r_0).
\end{equation}

Thus, by the K\"unneth Formula, 
\begin{equation}
    H^{\dotr}_\infty (\hat{\g}_\lambda, c_L; V^{\beta\gamma} \otimes \M) \cong H^{\dotr}_\infty (\Vir, c_L; \M) \medotimes H^{\dotr}_\text{KO}
\end{equation}
as BV algebras. Finally, lemma \ref{lem: Koszul chiral ring} implies
that the only linearly independent state of the weight
$(1-\lambda,\lambda)$ $\beta\gamma$ and $BC$-systems which is
non-trivial in $d_\text{KO}$-cohomology is the choice of vacuum. This
completes the proof.
\end{proof}

The statement of theorem \ref{thm: Embedding Vir with c=26 into
  g_lambda} is a result of embedding the semi-infinite complex
of the Virasoro algebra with values in some $c=26$ $\Vir$-module,
$\M$, into the relative semi-infinite  complex\footnote{We may
  replace `relative semi-infinite complex' with `BRST complex'} of the
$\hat{\g}_\lambda$ algebra with values in the $\hat{\g}_\lambda$
module $V^{\beta\gamma}\otimes \M$, where $\{M_n\}_{n\in\ZZ}$ act
trivially on $\M$. The embedding is therefore specific to a choice of
$\hat{\g}_\lambda$-module. Indeed, this is what the field-theoretic
formulations describes too.

On the other hand, the next embedding theorem is not an embedding of
complexes but rather a construction of a twisted $N=2$ superconformal
algebra using the modes of the complex formed from tensoring the
semi-infinite complex of $\hat{\g}_\lambda$ with the Koszul
complex. It holds for any choice of $\hat\g_\lambda$-module with
central charge such that the BRST operator is square-zero.

\subsection{Embedding 2: twisted $N=2$ SCFTs from $\hat{\g}_\lambda$ field theories}

The theorem we present in this subsection arose from testing the
conjecture made in \cite{Figueroa-OFarrill:1995agp} and then refined
in \cite{Figueroa-OFarrill:1995qkv}. This conjecture states that every
topological conformal field theory (TCFT) is homotopy equivalent to a
twisted N=2 SCFT \cite{Figueroa-OFarrill:1995qkv}. It originated from a search for a ``universal string theory'' \cite{Berkovits:1993xq}.
We refer the
reader to \cite{Figueroa_O_Farrill_1993, Figueroa-OFarrill:1995qkv, Getzler_1994} for a definition and/or review of TCFTs and N=2 SCFTs.

Throughout this subsection, let
$(\Tilde{b},\ \Tilde{c},\ \Tilde{\beta},\ \Tilde{\gamma})$ form a
Koszul CFT as described in definition \ref{def: Koszul CFT}. Let
$(b,\ c,\ B,\ C)$ form the ghost sector of a $\hat{g}_\lambda$ field
theory as described by expressions for $T^\text{gh}$ and $M^\text{gh}$
in \eqref{eq: g_lambda Tgh} and \eqref{eq: g_lambda Mgh}, and let
$(T^\M, M^\M)$ generate the matter sector with central charge
$26+2(6\lambda^2-6\lambda+1)$. We may then state the theorem as
follows.

\begin{theorem}\label{thm: Twisted N=2 SCFT}
    The BRST cohomology of a $\hat{\g}_\lambda$ field theory given by
    $(T^\M, M^\M)$ is isomorphic as a BV algebra to the chiral ring of
    a twisted $N=2$ SCFT. In other words, for any
    $\hat{\g}_\lambda$-module with $c_L =
    26+2(6\lambda^2-6\lambda+1)$, there exists a twisted $N=2$ SCFT
    whose chiral ring is isomorphic to the semi-infinite
    cohomology $H^{\dotr}_\infty(\hat{\g}_\lambda, c_L; \M)$ of
    $\hat{\g}_\lambda$ relative the centre.
\end{theorem}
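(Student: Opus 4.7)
The plan is to adapt the Lian--Zuckerman/Getzler construction of twisted $N{=}2$ superconformal generators from a topological conformal field theory to the $\hat{\g}_\lambda$ setting, following the paradigm laid out in \cite{Figueroa-OFarrill:1995agp, Figueroa-OFarrill:1995qkv}. Specifically, I would enlarge the BRST complex $\M \medotimes \semiinfforms$ by tensoring with a Koszul CFT of suitable weight $(1-\mu,\mu)$. By Lemma~\ref{lem: Koszul chiral ring} and the Künneth formula (exactly as in the proof of Theorem~\ref{thm: Embedding Vir with c=26 into g_lambda}), the cohomology on the enlarged complex equals $H^{\dotr}_\infty(\hat{\g}_\lambda,c_L;\M)$. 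On this enlarged complex I would then exhibit four fields $T$, $J$, $G^{\pm}$ satisfying the OPEs of a twisted $N{=}2$ superconformal algebra, whose chiral ring (the $G^+_0$ cohomology) coincides with $H^{\dotr}_\infty(\hat{\g}_\lambda,c_L;\M)$ after an inner automorphism of the endomorphism algebra of the enlarged complex, analogous to the one deployed in the proof of Theorem~\ref{thm: Embedding Vir with c=26 into g_lambda}.

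The candidate generators are dictated by weight and ghost-number considerations. The stress tensor $T$ is the total one, summing the matter, $\hat{\g}_\lambda$-ghost, and Koszul stress tensors. The weight-$1$ supercurrent $G^+$ is a deformation of the total BRST current $\mathfrak{J}+\mathfrak{J}_\text{KO}$ by normal-ordered corrections of the appropriate weight and ghost number built from the available $(B,C,\bTil,\cTil,\beTil,\gamTil)$ fields. The weight-$2$ supercurrent $G^-$ is a deformation of $b+\bTil$, and the $U(1)$ current $J$ is a linear combination of the ghost-number currents $-\normalord{bc}$, $-\normalord{BC}$, $-\normalord{\bTil\cTil}$ and $\normalord{\beTil\gamTil}$. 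The undetermined coefficients and the Koszul weight $\mu$ are fixed simultaneously by imposing closure of the twisted $N{=}2$ OPEs: regularity of $G^{\pm}G^{\pm}$; $G^+G^-$ producing the expected central term together with $J$ and $T+\tfrac{1}{2}\partial J$; and $TG^{\pm}$, $JG^{\pm}$ realising the twisted conformal weights and $U(1)$ charges of $G^{\pm}$.

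The central-charge condition for the twisted $N{=}2$ algebra should match the anomaly-cancellation condition $c_L^{\mathrm{mat}} = 26+2(6\lambda^2-6\lambda+1)$ that makes $d$ square-zero in Theorem~\ref{thm: BRST quantisation of ghat_lambda field theories}, so no further constraint on $\M$ is imposed. The hard part is the combinatorial determination of the correction terms and the Koszul weight $\mu$ that make the full system of twisted $N{=}2$ OPEs close for every $\lambda\in\ZZ$; this reduces to a finite system of quadratic relations among the OPE coefficients, which I would solve and verify using Thielemans' Mathematica package \cite{Thielemans:1991uw, Thielemans:1994er}. Once the twisted $N{=}2$ structure is established, the induced BV algebra structure on its chiral ring is canonically determined by the $N{=}2$ data through the constructions of \cite{Lian:1992mn, Lian:1994na, Lian:1995wa}. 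Since the similarity transformation identifying $G^+_0$ with $d+d_\text{KO}$ is an inner automorphism and the Koszul factor has trivial cohomology, the resulting isomorphism $H^{\dotr}(G^+_0) \cong H^{\dotr}_\infty(\hat{\g}_\lambda,c_L;\M)$ is automatically one of BV algebras, as required.
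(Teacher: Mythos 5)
Your proposal follows essentially the same route as the paper: tensor the $\hat{\g}_\lambda$ BRST complex with a Koszul CFT, assemble twisted $N{=}2$ generators as deformations of the combined TCFT data, fix the unknown coefficients by demanding closure of the twisted $N{=}2$ OPEs (verified with Thielemans' package), and conclude via Lemma~\ref{lem: Koszul chiral ring} and the K\"unneth formula. The one place where you diverge --- and where your argument is weaker than the paper's --- is the mechanism identifying the chiral-ring differential $G^+_0$ with $d+d_\text{KO}$. You posit an inner automorphism analogous to the $\exp(\ad_{r_0})$ used in the proof of Theorem~\ref{thm: Embedding Vir with c=26 into g_lambda}, but you never construct it, and its existence is not automatic. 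The paper avoids this entirely: in \eqref{eq: g_lambda N=2 embedding} the correction to $\mathbb{G}^+ + \mathbb{G}^+_K$ is arranged to be a total derivative $\partial X$ of a weight-zero field, so by \ref{item: modes of partial A} its zero mode vanishes identically and $\mathcal{Q}_{N=2}=\mathcal{Q}+\mathcal{Q}_K$ on the nose; the extra terms in $\mathbb{J}_{N=2}$ are then shown to be cohomologically trivial using Lemma~\ref{lem: Koszul chiral ring}. If your OPE-closure computation had produced a correction that was neither a total derivative nor conjugate to the BRST current by an explicit inner automorphism, your argument would stall, so you should either verify the total-derivative form of the correction directly or exhibit the conjugating element. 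A second, smaller point: for the isomorphism to be one of BV algebras (not merely of graded vector spaces) you also need $\mathbb{G}^-_{N=2}$ and $\mathbb{J}_{N=2}$ to agree with the corresponding TCFT data up to cohomologically trivial terms, since $G^-_0$ supplies the BV operator; your ansatz $G^-\sim b+\bTil$ is not quite right as stated (the Koszul contribution $\mathbb{G}^-_K$ is bilinear in $\bTil$ and $\gamTil$), although the general weight and ghost-number ansatz you describe would recover the correct expression.
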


The proof simply involves constructing the TCFTs corresponding to
$\hat{\g}_\lambda$ field theories and Koszul CFTs (presented as the
following two lemmas), taking their tensor product, and constructing a twisted
$N=2$ SCFT whose chiral ring is the tensor product of that of the
$\hat{\g_\lambda}$ and Koszul TCFTs. Doing so shows that the conjecture
in \cite{Figueroa-OFarrill:1995qkv} holds true for all
$\hat{\g}_\lambda$ field theories. 

\begin{lemma}
\label{lem: g_lambda TCFT}
    The TCFT given by the fields
    \begin{equation}
        \label{eq: g_lambda TCFT}
        \begin{split}
            \mathbb{G}^+ &= \normalord{cT^\M} + \frac{1}{2}\normalord{cT^\text{gh}} + \normalord{CM^\M} + \frac{1}{2}\normalord{CM^\text{gh}} \\
            \mathbb{G}^- &= b \\
            \mathbb{T} &= T^{\M} + T^\text{gh} \\
            \mathbb{J} &= -\normalord{bc}-\normalord{BC}.
        \end{split}
    \end{equation}
    describe a $\hat{\g}_\lambda$ field theory with matter sector $\M$. Its BRST cohomology is now taken with respect to the operator $\mathcal{Q}\defeq [\mathbb{G}^+,-]_1$.
  \end{lemma}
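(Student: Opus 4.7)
The plan is to verify that $(\mathbb{G}^+,\mathbb{G}^-,\mathbb{T},\mathbb{J})$ satisfy the OPE axioms of a topological conformal algebra, and then to observe that $\mathcal{Q}=[\mathbb{G}^+,-]_1$ is literally the zero mode of the BRST current $\mathfrak{J}$ of \eqref{eq: BRST current}, so that $H^{\dotr}_{\mathcal{Q}}$ coincides with $H^{\dotr}_\infty(\hat{\g}_\lambda, c_L; \M)$. Much of the work is already supplied by Theorem \ref{thm: BRST quantisation of ghat_lambda field theories}.

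First I would read off that $\mathbb{G}^+=\mathfrak{J}$ and $\mathbb{T}=T^\text{tot}$. Theorem \ref{thm: BRST quantisation of ghat_lambda field theories} then furnishes two of the key TCFT relations at once: the nilpotency $\mathcal{Q}^2=0$, under the standing critical-charge hypothesis $c^\text{mat}=26+2(6\lambda^2-6\lambda+1)$, and the exactness $\mathcal{Q}b=\mathbb{T}$, which in OPE form reads $[\mathbb{G}^+,\mathbb{G}^-]_1=\mathbb{T}$. I would verify the remaining axioms by direct OPE computation, most conveniently using the Mathematica packages \cite{Thielemans:1991uw,Thielemans:1994er} already employed in the excerpt. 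Explicitly, I would check: that $\mathbb{T}$ is a Virasoro element of total central charge $c^\text{mat}+c^\text{gh}=0$, with $c^\text{gh}=-26-2(6\lambda^2-6\lambda+1)$ supplied by Proposition \ref{prop: bc-system ff realisation of g_lambda}; that $\mathbb{G}^+$ and $\mathbb{G}^-$ are $\mathbb{T}$-primary fermionic fields of weights $1$ and $2$ respectively; that $\mathbb{J}$ is a weight-$1$ current whose modes measure ghost number, i.e.\ $[\mathbb{J},\mathbb{G}^\pm]_1=\pm\mathbb{G}^\pm$; that $[\mathbb{G}^-,\mathbb{G}^-]_n=0$ trivially, since $\mathbb{G}^-=b$; and that the full singular part of $\mathbb{G}^+(z)\mathbb{G}^-(w)$ has the topological shape with $\mathbb{J}$ at the double pole, $\mathbb{T}$ at the simple pole, and the expected anomalous triple pole.

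With the TCFT structure in place, the conclusion is immediate: since $\mathcal{Q}=d$ as operators, the BRST cohomology of the TCFT coincides with $H^{\dotr}_\infty(\hat{\g}_\lambda,c_L;\M)$ by Definition \ref{def: semiinfcohom} and Lemma \ref{lem: rel. SIC requirement}, so this really is a $\hat{\g}_\lambda$ field theory with matter sector $\M$ in the TCFT guise. The only mildly delicate step is the $\mathbb{G}^+(z)\mathbb{G}^-(w)$ OPE, in which every term of $\mathfrak{J}$ contributes when contracted against $b(w)$ and the $\lambda$-dependent coefficients of $M^\text{gh}=(\lambda-1)\normalord{B\partial c}-\normalord{\partial B c}$ must conspire with the central charges of the $bc$ and $BC$ systems to reassemble $T^\text{tot}$ at the simple pole. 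That this works out is essentially guaranteed by Theorem \ref{thm: BRST quantisation of ghat_lambda field theories}, but extracting $\mathbb{J}$ cleanly at the double pole and isolating the triple-pole anomaly requires some care with normal-ordering rearrangements.
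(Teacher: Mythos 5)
Your proposal matches the paper's treatment: the paper states this lemma without a separate proof, remarking only that the fields are a repackaging of existing data --- $\mathbb{G}^+$ is the BRST current $\mathfrak{J}$ of \eqref{eq: BRST current}, $\mathbb{T}=T^{\text{tot}}$, $\mathbb{G}^-=b$ and $\mathbb{J}$ the ghost-number current --- so that nilpotency of $\mathcal{Q}$ and $[\mathbb{G}^+\,\mathbb{G}^-]_1=\mathbb{T}$ follow from Theorem~\ref{thm: BRST quantisation of ghat_lambda field theories} and the remaining OPEs are routine checks, exactly as you outline. One small caution: $\mathbb{G}^+$ as written is generally \emph{not} $\mathbb{T}$-primary (the unimproved BRST current acquires a $\partial^2 c$ anomaly in its OPE with $T^{\text{tot}}$), but primarity of $\mathbb{G}^+$ is not among the TCFT axioms, so this does not affect the conclusion.
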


\begin{lemma}
    The TCFT given by the fields
    \begin{equation}
        \label{eq: Koszul TCFT}
        \begin{split}
            \mathbb{G}^+_K &= \normalord{\Tilde{c}\Tilde{\beta}}\\
            \mathbb{G}^-_K &= w\normalord{\Tilde{b}\partial\Tilde{\gamma}}-(1-w)\normalord{\partial \Tilde{b} \Tilde{\gamma}} \\
            \mathbb{J}_K &= -w\normalord{\Tilde{b}\Tilde{c}}-(1-w)\normalord{\Tilde{\beta}\Tilde{\gamma}} \\
            \mathbb{T}_K &= -w\normalord{\Tilde{b}\partial \Tilde{c}-\Tilde{\beta}\partial\Tilde{\gamma}} + (1-w)\normalord{\partial \Tilde{b} \Tilde{c}-\partial\Tilde{\beta}\Tilde{\gamma}}.
        \end{split}
    \end{equation}
    is a twisted N=2 SCFT description of a Koszul CFT. The cohomology is taken with respect to the differential $\mathcal{Q}_K\defeq [\mathbb{G}^+_K,-]_1 = d_\text{KO}$.
\end{lemma}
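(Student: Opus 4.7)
The plan has two components: first verify that the four fields in \eqref{eq: Koszul TCFT} close into the OPEs of a twisted $N{=}2$ superconformal algebra, and then identify the cohomology with respect to $\mathcal{Q}_K$ with the Koszul cohomology already computed in Lemma \ref{lem: Koszul chiral ring}.

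The second component is immediate. By the mode-OPE dictionary \ref{item: Mode algebra OPE relation}, $\mathcal{Q}_K = [\mathbb{G}^+_K, -]_1$ acts as the zero mode of $\mathbb{G}^+_K = \normalord{\cTil\beTil}$, which is by definition the Koszul differential $d_\text{KO}$ of Definition~\ref{def: Koszul CFT}. Hence the chiral ring of the proposed twisted $N{=}2$ SCFT coincides tautologically with $H^\bullet_\text{KO}$.

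For the first component, I would verify the twisted $N{=}2$ OPEs in the following order. The conformal weights of $(\mathbb{G}^+_K, \mathbb{G}^-_K, \mathbb{J}_K, \mathbb{T}_K)$ are easily seen to be $(1,2,1,2)$ by summing weights of the constituent fields and derivatives. The vanishings $\mathbb{G}^\pm_K(z)\mathbb{G}^\pm_K(w) \sim 0$ follow since each $\mathbb{G}^\pm_K$ contains at most one $\cTil$ (respectively $\bTil$), so no singular contractions form between the two factors. The OPEs $\mathbb{J}_K\mathbb{J}_K$, $\mathbb{J}_K\mathbb{G}^\pm_K$, $\mathbb{T}_K\mathbb{T}_K$, $\mathbb{T}_K\mathbb{J}_K$ and $\mathbb{T}_K\mathbb{G}^\pm_K$ each reduce to standard OPE calculations within the $\bTil\cTil$- and $\beTil\gamTil$-systems, and must yield respectively a current OPE of strength $c_K/3$, $U(1)$-charges $\pm 1$, a Virasoro OPE with total central charge zero (the hallmark of the topological twist), an anomalous current OPE of the same strength $c_K/3$, and primary-field OPEs of weights $1$ and $2$. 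The crucial OPE is
\begin{equation*}
    \mathbb{G}^+_K(z)\mathbb{G}^-_K(w) \sim \frac{c_K/3}{(z-w)^3} + \frac{\mathbb{J}_K(w)}{(z-w)^2} + \frac{\mathbb{T}_K(w)}{z-w},
\end{equation*}
which must reproduce exactly the fields $\mathbb{J}_K$ and $\mathbb{T}_K$ given in \eqref{eq: Koszul TCFT}, thereby pinning down their $w$-dependent coefficients.

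The main obstacle is the bookkeeping in this last OPE: cross-contractions of $\normalord{\cTil\beTil}(z)$ with $\normalord{\bTil\partial\gamTil}(w)$ and $\normalord{\partial\bTil\gamTil}(w)$ produce several terms whose reorganisation into $\mathbb{J}_K$, $\mathbb{T}_K$ and $\partial\mathbb{J}_K$ is sensitive to the precise coefficients $w$ and $1-w$, and requires careful use of the associativity identity \ref{item: associativity CFT} to move derivatives between factors. A uniform and reliable way to carry out the check is symbolic OPE computation via the Mathematica package already employed in the proof of Proposition~\ref{prop: bc-system ff realisation of g_lambda}. The conceptual content of the $w$-dependence is that $w$ parameterises a one-parameter family of stress-tensor improvements for the Koszul CFT, all sharing $c_\text{Vir} = 0$ and all compatible with a topological twist of an $N{=}2$ structure; the BRST-relevant datum, namely $\mathbb{G}^+_K$ and hence $d_\text{KO}$, is manifestly independent of $w$, so the chiral ring is always $H^\bullet_\text{KO}$.
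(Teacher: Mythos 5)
Your proposal is correct and follows essentially the same route as the paper, which offers no explicit proof of this lemma beyond asserting that the fields are a repackaging of the Koszul CFT data and that the twisted $N{=}2$ OPEs can be checked directly (by the same symbolic OPE computation used elsewhere); your identification $\mathcal{Q}_K=(\mathbb{G}^+_K)_0=d_\text{KO}$ and the observation that $w$ parameterises the weight assignment (equivalently, the improvement of $\mathbb{T}_K$) with vanishing total central charge are exactly the intended content.
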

It is worth reiterating that the above lemmas do not present any new information about the $\hat{\g}_\lambda$ field theory and Koszul CFT; they are simply repackagings of the existing data of the field theories. Equipped with these lemmas, we are ready to present the proof.

\begin{proof}[Proof of Theorem \ref{thm: Twisted N=2 SCFT}]
    On its own, the $\hat{\g}_\lambda$ TCFT in lemma \ref{lem: g_lambda TCFT} cannot be modified using the available fields to obtain a twisted N=2 SCFT. However, this becomes possible once we tensor the $\hat{\g}_\lambda$ TCFT with a Koszul TCFT. From the field content of this larger TCFT, we may assemble
    \begin{equation}
        \label{eq: g_lambda N=2 embedding}
        \begin{split}
            \mathbb{G}^{+}_{N=2} &= \mathbb{G}^+ + \mathbb{G}^+_K +\partial X\\
            \mathbb{G}^{-}_{N=2} &= \mathbb{G}^- + \mathbb{G}^-_K \\
            \mathbb{T}_{N=2} &= \mathbb{T} + \mathbb{T}_K \\
            \mathbb{J}_{N=2} &= \mathbb{J} + \mathbb{J}_K + (1-\lambda)\normalord{\beTil\gamTil-\bTil\cTil}-(1-\lambda)\partial\normalord{c\bTil\gamTil}, 
        \end{split}
    \end{equation}
    where
    \begin{equation}
        X = \tfrac{1}{2}(1+\lambda)\normalord{cBC}
        -(1-\lambda)\normalord{c\normalord{\Tilde{\beta}\Tilde{\gamma}-\Tilde{b}\Tilde{c}}+ \partial c c \bTil \gamTil} + (2-\lambda) \partial c.
    \end{equation}
    By computing OPEs, we can check that these fields indeed describe a twisted $N=2$ SCFT. 
    The chiral ring of any twisted $N=2$ SCFT is the cohomology taken with respect to the differential $\mathcal{Q}_{N=2}\defeq[\mathbb{G}^+_{N=2},-]$.
    In this case, due to \ref{item: proporties of partial in CFT}, the total derivative term in $\mathbb{G}^+_{N=2}$ does not contribute to $\mathcal{Q}_{N=2}$. Thus $\mathcal{Q}_{N=2} = \mathcal{Q}+\mathcal{Q}_K$ indeed. $\mathbb{J}^+_{N=2}$ is also a sum of $\mathbb{J}$ and $\mathbb{J}_K$, up to extra terms that are trivial in $\mathcal{Q}_{N=2}$-cohomology due to lemma \ref{lem: Koszul chiral ring}. Hence, all 4 fields of the twisted $N=2$ SCFT are, up to cohomologically trivial terms, equal to the corresponding fields in the tensor product of the $\hat{\g}_\lambda$ and Koszul TCFTs. Thus, by the K\"unneth formula, the chiral ring of the twisted $N=2$ SCFT given by \eqref{eq: g_lambda N=2 embedding} is isomorphic to the tensor product of the BRST cohomology of the $\hat{\g}_\lambda$ field theory given by $(T^\M, M^\M)$ and the chiral ring of the Koszul CFT. Since the latter is spanned by just its vacuum state, this completes the proof.
\end{proof}

\section{Case $\lambda=-1$: The BMS\textsubscript{3} Lie algebra}\label{sec: BMS3}

The universal central extension of $\g_\lambda$ when $\lambda=-1$ is
isomorphic to the BMS\textsubscript{3} algebra. This algebra was first introduced to the Lie algebra and VOA literature by Zhang and Dong \cite{zhang2007walgebra} as the $W(2,2)$ algebra. Since the BMS\textsubscript{3} algebra is the symmetry algebra of the closed tensionless string \cite{Isberg:1993av, Bagchi:2020fpr}, $\lambda=-1$ is an interesting case to explore in more detail. However, the tensionless string does not admit a holomorphically factorisable field-theoretic description, so the BMS\textsubscript{3} algebra does not appear as the symmetry algebra of the chiral part of some full CFT. This is contrary to how the $\hat\g_\lambda$ algebra appears in our field theoretic descriptions. Nonetheless, the results we present are intrinsic to the Lie algebra, and not the field theory it describes. The field-theoretic formulation only serves as a computational tool for the construction of the semi-infinite cohomology of the Lie algebra and the results that follow. How one wishes to extrapolate these findings on the BMS\textsubscript{3} algebra to BMS\textsubscript{3} field theories is a separate matter, on which we shed some light in the last section.

\subsection{No BRST operator for $c_M \neq 0$?}
Let us remind ourselves that the BMS\textsubscript{3} algebra is the vector space $\bigoplus_{n\in\ZZ} \CC L_n \medoplus \CC M_n$ with Lie bracket
\begin{equation}
\begin{split}
    [L_n, L_m] &= (n-m) L_{m+n} + \frac{1}{12} n(n^2-1) \delta^0_{m+n} c_L\\
    [L_n, M_m] &= (n-m) M_{m+n} +\frac{1}{12} n(n^2-1) \delta^0_{m+n} c_M \\
    [M_n, M_m] &= 0.
\end{split}
\end{equation}

Now consider a $\ZZ$-graded BMS\textsubscript{3}-module $\M$ with central charges denoted $(c_L, c_M)$. An important consequence of theorem \ref{thm: semiinfrep of g_lambda central charge} is that there does \emph{not} exist a BRST operator for the BMS\textsubscript{3} algebra when $c_M\neq 0$. The calculation performed for generic $\lambda$ to construct $\rho\colon\g_\lambda\rightarrow\End\semiinfforms$ proves that this must be the case. A closer look at the calculation (see appendix \ref{app: proofs and calculations}) shows that this is feature is specifically due to $\{M_n\}_{n\in\ZZ}$ forming an abelian ideal of $\g_\lambda$, corroborating the fact that the central extension which one needs to use is always the Virasoro one for any value of $\lambda$, including the cases $\lambda=-1,0,1$ where other central extensions exist. Nonetheless, for $\lambda=-1$, we present another argument as to why this must be the case by going beyond the construction of the semi-infinite wedge representation of ${\g}_\lambda$.

Consider a BMS\textsubscript{3} field theory generated by some $T$ and $M$. Purely from the perspective of gauge theory, we would need to introduce two sets of ghosts - one for $T$ and the other for $M$, of appropriate weights, to gauge the BMS\textsubscript{3} symmetry of the theory. These are the weight $(2,-1)$ $bc$-system and weight $(1-\lambda,\lambda)$ $BC$-system respectively, where the latter is also of weight $(2,-1)$ for $\lambda=-1$. These should themselves assemble into some $T^{gh}$ and $M^{gh}$ which generate BMS\textsubscript{3} symmetry via their OPEs.
Proposition \ref{prop: bc-system ff realisation of g_lambda} already tells us how to do this for $(c_L, c_M) = (-52,0)$. However, we now step away from semi-infinite wedge representations and consider, more generally, any normal-ordered products of the fields $b$, $c$, $B$ and $C$ to obtain a bosonic weight 2 field which is quasiprimary with respect to $T^{gh} = T^{bc}+T^{BC}$ as given in \eqref{eq: Tbc general}.
The table below summarises every possible weight 2 bosonic term that one could form from the fields of the two $bc$-systems. 
\begin{center}
    \begin{tabular}{c|c}
      No. of $b,c,B,C$   & Term \\ \hline
    1 & None  \\
    2 & $\normalord{b\partial c}$, $\normalord{\partial b c}$, $\normalord{b \partial C}$, $\normalord{\partial b C}$, $\normalord{B\partial c}$, $\normalord{\partial B c}$, $\normalord{B\partial C}$, $\normalord{\partial B C}$ \\
    3 & None \\
    4 & $\normalord{b c B C}$ \\
    \end{tabular}
\end{center}
Terms with 5 or more $b,c,B,C$ that are bosonic and weight 2 will necessarily vanish. Taking the most linear combination of these fields
\begin{equation}
\begin{split}
   M^\text{gh} = &\alpha_1\normalord{b\partial c} + \alpha_2\normalord{\partial b c} + \alpha_3\normalord{b \partial C} + \alpha_4\normalord{\partial b C} + \alpha_5\normalord{B\partial c} + \alpha_6\normalord{\partial B c} \\ 
   + &\alpha_7\normalord{B\partial C} + \alpha_8\normalord{\partial B C} + \alpha_9 \normalord{b c B C} 
\end{split}
\end{equation}
and enforcing the OPEs 
\begin{equation}
    [T^\text{gh}M^\text{gh}]_4 = \kappa\mathbbm{1},\quad 
    [T^\text{gh}M^\text{gh}]_3 = 0,\quad
    [T^\text{gh}M^\text{gh}]_2 = 2M^{gh},\quad 
    [T^\text{gh}M^\text{gh}]_1= \partial M^{gh},
\end{equation}
we obtain the following result.

\begin{proposition}
    \label{prop: BMS3 with non-zero c_M from fermionic bc-systems}
    There exists a realisation of a BMS\textsubscript{3} field theory
    in terms of two weight $(2,-1)$ $bc$-systems with central charges
    $(-52,-c_M)$ for any nonzero value of $c_M$ given by
    \begin{equation}
    \label{eq: BMS3 with c_M non-zero}
        \begin{split}
            \Tgh &= -2\normalord{b\partial c} - \normalord{\partial b c} - 2\normalord{B\partial C} - \normalord{\partial B C} \\
            \Mgh &= - \tfrac{c_M}{54} \left( -\normalord{bcBC} + \normalord{b\partial C} + \normalord{\partial c B} + \partial\left(\frac{3}{2} \normalord{b c} + \frac{3}{2}\normalord{B C} + \normalord{b C} + \normalord{c B}\right)  \right).
        \end{split}
    \end{equation}
\end{proposition}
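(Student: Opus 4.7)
The plan is to impose the BMS\textsubscript{3} OPE axioms directly on the nine-parameter ansatz for $\Mgh$ and to solve the resulting algebraic system on the coefficients $\alpha_1,\dots,\alpha_9$. By construction every monomial in the ansatz is bosonic of conformal weight $2$ with respect to $\Tgh = T^{bc} + T^{BC}$, so the universal Virasoro properties in \ref{item: D5 Vir element} already give $[\Tgh\Mgh]_2 = 2\Mgh$ and $[\Tgh\Mgh]_1 = \partial\Mgh$, while $[\Tgh\Mgh]_n = 0$ for $n \geq 5$ follows from the weight $4-n$ being negative. Similarly $[\Tgh\Tgh]$ already reproduces the Virasoro OPE with total central charge $c_L = -52$ and imposes no condition on the $\alpha_i$, so every nontrivial constraint must come from the remaining poles of $\Tgh\Mgh$ and from $\Mgh\Mgh$.

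First I would impose the remaining requirements of the $\Tgh\Mgh$ OPE: the quasiprimary condition $[\Tgh\Mgh]_3 = 0$, and the requirement that $[\Tgh\Mgh]_4$ be a scalar multiple of $\mathbbm{1}$ rather than a more general weight-$0$ field. Using the Thielemans OPE package \cite{Thielemans:1991uw, Thielemans:1994er}, I would expand these brackets in the finite bases of independent weight-$1$ and weight-$0$ normal-ordered monomials, equate coefficients to zero, and obtain a purely linear system on the $\alpha_i$ that cuts out a subspace $S$. On $S$ the surviving scalar $\kappa := [\Tgh\Mgh]_4$ becomes a linear function $\kappa(\alpha)$, to be matched against $-c_M/2$ in the BMS\textsubscript{3} OPE.

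The remaining and most delicate condition is the abelianness of $\Mgh$, i.e.\ $[\Mgh\Mgh]_n = 0$ for all $n \geq 1$. Because $\Mgh$ is linear in the $\alpha_i$, each such bracket is \emph{quadratic} in them, and imposing vanishing of its expansion in an appropriate monomial basis yields a system of polynomial equations on $S$. This is the main obstacle: Theorem~\ref{thm: semiinfrep of g_lambda central charge} and the Frenkel--Garland--Zuckerman obstruction it encodes force $c_M = 0$ whenever $\Mgh$ is built from the semi-infinite wedge of $\g_{-1}$, so one might a priori fear that the present overdetermined quadratic system also collapses to $\kappa = 0$. The substantive content of the proposition is that the quartic term $\normalord{bcBC}$, which is absent from any semi-infinite wedge realisation, precisely evades this obstruction and leaves a one-parameter family of solutions on which $\kappa$ remains a free parameter. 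Fixing the overall normalisation so that $\kappa = -c_M/2$ then produces the explicit formula for $\Mgh$ stated in \eqref{eq: BMS3 with c_M non-zero}.
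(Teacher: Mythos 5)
Your proposal takes essentially the same route as the paper: enumerate the nine weight-2 bosonic normal-ordered monomials in $b,c,B,C$, take the general linear combination, impose the BMS\textsubscript{3} OPE constraints against $\Tgh=T^{bc}+T^{BC}$ by computer algebra, and read off the surviving family with $\kappa=-c_M/2\neq 0$. The only deviation is that you explicitly add the regularity of the $\Mgh\Mgh$ OPE to the constraint list (a quadratic condition the paper leaves implicit but which is indeed required for the claim), and you correctly note which poles are automatic; both points are refinements rather than departures.
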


Proposition \ref{prop: BMS3 with non-zero c_M from fermionic bc-systems} shows that $c_M \neq 0$ is achieved through a term that is quartic in the fields of the $bc$-systems. 
It is the emergence of this term in the ghost sector of a BMS\textsubscript{3} field theory which causes the BRST operator to no longer be square-zero.

\begin{proposition}
  Let $(T,\,M)$ generate a BMS\textsubscript{3} field theory with
  $c_L=52$ and $c_M\neq 0$. Let $(\Tgh, \Mgh)$ be given by \eqref{eq:
    BMS3 with c_M non-zero}. Then the zero mode of the BRST current
    \begin{equation}
        j_\text{BRST} = \normalord{cT} + \frac{1}{2}\normalord{c\Tgh} + \normalord{CM} + \frac{1}{2}\normalord{C\Mgh}
    \end{equation}
    is not square-zero. Alternatively, there does not exist any BRST
    differential such that $db = M + \Mgh$.
\end{proposition}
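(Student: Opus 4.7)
The plan is to compute the self-OPE $j_\text{BRST}(z)\,j_\text{BRST}(w)$ directly and exhibit a non-exact term in its simple pole. Since $\{d,d\} = 2d^2$ is the zero mode of $[j_\text{BRST}\,j_\text{BRST}]_1$, the condition $d^2 = 0$ is equivalent to the simple pole being $\partial X$ for some field $X$ in the VOA. The strategy is to organise the calculation so that the standard Virasoro BRST piece and the ``naive'' $c_M$ cancellations are visibly exact, then to extract a residual $c_M$-proportional obstruction sourced by the quartic term $\normalord{bcBC}$ in $\Mgh$.

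I would split $j_\text{BRST} = j_V + j_M$ with $j_V \defeq \normalord{cT} + \tfrac12 \normalord{c\Tgh}$ and $j_M \defeq \normalord{CM} + \tfrac12 \normalord{C\Mgh}$. The self-OPE $j_V(z)\,j_V(w)$ reproduces the usual Virasoro BRST obstruction, whose simple pole is exact because $c_L^\text{mat} + c_L^\text{gh} = 52 + (-52) = 0$ by Proposition \ref{prop: BMS3 with non-zero c_M from fermionic bc-systems}. The cross-term $j_V\cdot j_M + j_M\cdot j_V$ produces $T\cdot M$-type central anomalies, which combine as $c_M + (-c_M) = 0$ at the level of the quartic-pole residues. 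In this sense the scalar anomalies cancel, and any obstruction to $d^2 = 0$ must come from the field-valued residues.

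The residual obstruction arises because the quartic term $\normalord{bcBC}$ in $\Mgh$ has nontrivial OPEs with both $c$ and $C$, through both $bc$ and $BC$ contractions, that are absent from the pure semi-infinite-wedge construction of Proposition \ref{prop: bc-system ff realisation of g_lambda}. These extra Wick contractions feed into $[\normalord{c\Tgh}\,\normalord{C\Mgh}]_1$ and $[\normalord{C\Mgh}\,\normalord{C\Mgh}]_1$ and produce terms of schematic form $c_M\cdot\normalord{\text{(ghost monomials)}}$ in the simple pole that cannot be absorbed into any $\partial X$ built from the ghost and matter fields at our disposal.

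The main obstacle is the combinatorial bookkeeping of Wick contractions generated by the quartic ghost term, amplified by the many other pieces of $\Mgh$, including the derivative terms. I would execute the computation using Thielemans's \textsc{OPEdefs} package \cite{Thielemans:1991uw, Thielemans:1994er}, already used elsewhere in the paper, and isolate the terms linear in $c_M$ in $[j_\text{BRST}\,j_\text{BRST}]_1$ modulo total derivatives. Any nonvanishing such term proves $d^2 \neq 0$ whenever $c_M \neq 0$. The ``alternatively'' statement then follows because any putative square-zero derivation with $db = M + \Mgh$ is determined by its values on the generators $b,c,B,C$ and on the matter fields; carrying these through the explicit quartic expansion of $\Mgh$ reconstructs $\oint j_\text{BRST}$ up to exact terms, which we have just shown fails to square to zero.
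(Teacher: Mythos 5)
Your strategy coincides with the paper's: the paper offers no printed proof of this proposition, but its method (cf.\ the appendix proof of Theorem \ref{thm: BRST quantisation of ghat_lambda field theories} and the discussion in the conclusions) is exactly what you describe --- reduce $d^2$ to the zero mode of $[\mathfrak{J}\,\mathfrak{J}]_1$, note that the pure central anomalies cancel since the ghost sector carries $(-52,-c_M)$, and verify by the Thielemans package that the quartic term $\normalord{bcBC}$ in $\Mgh$ leaves a field-valued residue in the simple pole that is not a total derivative of a weight-zero field. The only soft spot, shared with the paper, is the ``alternatively'' clause: a derivation is not determined by $db = M+\Mgh$ alone, so the claim that any such differential ``reconstructs $\oint j_{\text{BRST}}$'' would need the additional standard assumption that $d$ is the zero mode of a ghost-number-one current of the given form.
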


\subsection{Physical realisations of chiral BMS\textsubscript{3} field theories}

We present string theories which can be studied as chiral BMS\textsubscript{3} field theories from the perspective of the worldsheet. An example would be the bosonic ambitwistor string in D-dimensional Minkowski spacetime \cite{MasonSkinner}. Its worldsheet description is given by D weight $(0,1)$ $\beta\gamma$-systems labelled by a spacetime index $\mu\in\{0,1,...,D-1\}$. The Virasoro element $T^{\text{amb}}$ is what we would expect, while the weight 2-primary $M^{\text{amb}}$ is constructed from the normal-ordered products of the weight 1 primaries $\gamma_\mu$:
\begin{align}
    T^{\text{amb}} &= -\normalord{\partial\beta^{\mu}\gamma_\mu}\label{eq: ambitwistor Vir element}\\
    M ^{\text{amb}} &= \eta^{\mu\nu}\normalord{\gamma_\mu\gamma_\nu} \label{eq: ambitwistor M}
\end{align} 
This is a BMS\textsubscript{3} field theory with central charge $(c_L, c_M) = (2D,0)$. Thus, the 26-dimensional ambitwistor string with $(c_L, c_M)= (52,0)$ admits a sensible BRST complex from which we can compute BRST cohomology. This is consistent with both the fact that its spectrum should emerge as the NR contraction of two copies of the Virasoro algebra \cite{Bagchi:2020fpr} and that its critical dimension is 26 \cite{MasonSkinner, Casali:2016atr, Bagchi:2020fpr}.

Another realisation one could consider comes from the Nappi--Witten string \cite{Nappi:1993ie}. Consider the complexification of the Nappi--Witten algebra for convenience, generated by $P^{\pm}$, $I$ and $J$. The Lie bracket on these generators is
\begin{equation}
   [P^+, P^-] = I, \quad [J, P^{\pm}] = \pm P^{\pm}
\end{equation}
 and the invariant inner product is
 \begin{equation}
    \langle P^+, P^-\rangle = 1, \quad \langle I,J \rangle = 1
 \end{equation}
 and zero otherwise.
Without any extra effort, we may consider a higher dimensional analogue (also considered in \cite{Sfetsos:1993na}) given by the Lie algebra $\mathfrak{nw}_{2n+2}$ generated by  $\{P_a^{\pm}\}_{a\in\{1,\dots,n\}}$, $I$ and $J$, with Lie bracket
\begin{equation}
   [P^+_a, P^-_b] = \delta_{ab} I, \quad [J, P^{\pm}_a] = \pm P^{\pm}_a,
\end{equation}
 and invariant inner product
 \begin{equation}
    \langle P^+_a, P^-_b\rangle = \delta_{ab}, \quad \langle I,J \rangle = 1
 \end{equation}
and zero otherwise. These translate into the OPEs of the corresponding currents
\begin{align}
    P^+_a (z) P^-_b (w) &= \frac{\delta_{ab}\mathbbm{1}(w)}{(z-w)^2} + \frac{\delta_{ab} I(w)}{z-w} + \reg. \label{eq: gen NW OPE for P+P-}\\
     J(z) P^{\pm}_a(w) &= \frac{\pm P^{\pm}_a(w)}{z-w} + \reg. \label{eq: gen NW OPE for J P+-} \\
     J(z) I(w) &= \frac{\mathbbm{1}(w)}{(z-w)^2} + \reg. \label{eq: gen NW OPE for JI}
\end{align}
As usual, the modes of each of the weight 1 fields
$P^{\pm}_1(z),\dots, P^{\pm}_{2n}(z), I(z), J(z)$ obey the
affinisation $\widehat{\mathfrak{nw}}_{2n+2}$ of
$\mathfrak{nw}_{2n+2}$. Hence, we may build a Virasoro element via the
Sugawara construction (as done in \cite{Nappi:1993ie} for $n=1$)
\begin{equation} \label{eq: Tsug BMS3 from NW gen}
  T^\text{sug} = \sum_{a=1}^n\normalord{P^+_a P^-_a} + \normalord{IJ} -\frac{n}{2}\partial I - \frac{n}{2}\normalord{II},
\end{equation}
with central charge $2n + 2$. 
Likewise, one can also construct a weight 2 primary 
\begin{equation} \label{eq: Msug BMS3 from NW gen}
  M^\text{sug} = \normalord{II}.
\end{equation}
$M^\text{sug}$ from weight one currents as well. Thus, $(T^\text{sug},
M^\text{sug})$ given by \eqref{eq: Tsug BMS3 from NW gen} and
\eqref{eq: Msug BMS3 from NW gen} give a realisation of a
BMS\textsubscript{3} field theory via the Sugawara construction
applied to the higher dimensional generalisation of the Nappi--Witten
algebra. This realisation has central charges $(c_L,c_M) = (2n+2,0)$. 
Hence, setting $n=25$ indeed gives a BMS\textsubscript{3} field theory
of central charge $c_L = 52$. Of course, one could also pick any $n\in
\NN$ and tensor this theory with another CFT of appropriate central
charge to give a total matter sector Virasoro element with $c_L=52$.

As an aside, it is interesting to note that the generalised
Nappi--Witten algebras $\mathfrak{nw}_{2n+2}$ are bargmannian
\cite{Figueroa-OFarrill:2022pus}, and sigma models constructed from
these are WZW models for strings propagating on bargmannian Lie
groups. Gauging the symmetry generated by the null element, $I$,
yields a new class of non-relativistic string models where the string
propagates on a Lie group with a bi-invariant galilean structure. The
full BRST quantisation of such string theories would then require the
gauging of the extension of the Virasoro algebra by this weight 1
primary field $I(z)$. This is precisely the algebra
$\hat{\g}_{\lambda=0}$, with $I(z)$ taking the role of $M(z)$. The
construction of such non-relativistic string models is part of ongoing
work.

By staring at \eqref{eq: Msug BMS3 from NW gen}, one might easily infer that it is actually possible to obtain realisations of $\hat{\g}_\lambda$ for all $\lambda\leq-1$ from $\widehat{\mathfrak{nw}}_{2n+2}$. Explicitly, this realisation is given by 
\begin{equation}
    T=T^{\text{sug}}, \quad M = I^{1-\lambda} \eqdef \underbrace{\normalord{I\dots I}}_{1-\lambda\text{ times}}.
\end{equation}
Naturally, one could also do this with a weight $(1,0)$ or $(0,1)$
$\beta\gamma$-systems and take normal-ordered products of the weight 1
field to construct $M$. Hence, in general, we can construct
$\g_{\lambda\leq -1}$ field theories out of $\g_{\lambda=0}$ field
theories. These are summarised in the following embedding diagram.
\begin{figure}[h!]
    \centering
        \begin{tikzcd}
                                &\shortstack{Weight $(1,0)$ \\ $(\beta, \gamma)$}&    \\
        \shortstack{$\hat{\mathfrak{g}}_{\lambda\leq-1}$ \\ $(T,M)$} \arrow[rr, "\big(T{,}\,\phi^{1-\lambda}\big)"] \arrow[ru, "\big(T^{\beta\gamma}{,}\,\beta^{1-\lambda}\big)"] \arrow[rd, "\big(T^\text{sug}{,}\,I^{1-\lambda}\big)",swap] &                            & \shortstack{$\hat{\mathfrak{g}}_{\lambda=0}$ \\ $(T,\phi)$}\arrow[lu, "\big(T^{\beta\gamma}{,}\,\beta\big)", swap]\arrow[ld, "\big(T^\text{sug}{,}\,I\big)"] \\
                                                          & \shortstack{$\widehat{\mathfrak{nw}}_{2n+2}$ \\ $(P^\pm_a,I,J)$} &        
    \end{tikzcd} 
    \caption{A diagram summarising the different explicit constructions of $\hat{g}_{\lambda\leq 0}$ field theories from weight $(1,0)$ $\beta\gamma$-systems and field theories with $\widehat{\mathfrak{nw}}_{2n+2}$ symmetry.}
    \label{fig: embedding summary}
\end{figure}
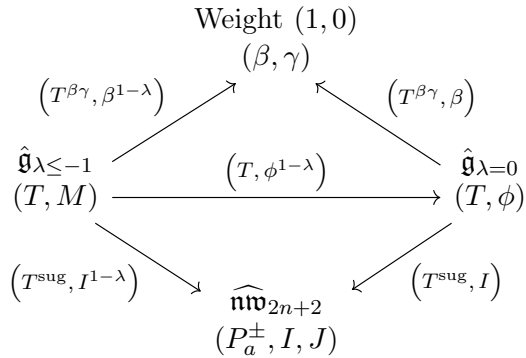

Coming back to the cases $\lambda=-1$ and $\lambda=0$, there exists a construction of a BMS\textsubscript{3} field theory with $c_M\neq 0$ out of a central extension of $\hat{\g}_{\lambda=0}$, given in \cite[Theorem 7.1]{Adamovic_2016}.

\section{Conclusions and future work}\label{sec: Conclusion}

We have shown that for any chiral $\hat{\g}_\lambda$ field theory,
\begin{enumerate}
\item There exists a free-field realisation in terms of a weight
  $(1-\lambda,\lambda)$ $\beta\gamma$-system with central charge $c_L
  = 2(6\lambda^2 - 6\lambda + 1)$.
\item There exists a free-field realisation in terms of a weight
  $(2,-1)$ $bc$-system and a weight $(1-\lambda,\lambda)$
  $BC$-system with central charge $c_L = - 26 - 2(6\lambda^2 -
  6\lambda + 1)$. This free-field realisation is the
  field-theoretic formulation of the semi-infinite wedge
  representation of $\hat{\g}_\lambda$ and is the ghost sector of
  the $\hat{\g}_\lambda$ field theory.
\item There exists a square-zero BRST operator if and only if the
  central of the matter sector $c_L = 26 + 2(6\lambda^2-6\lambda+1)$.
\end{enumerate}
Taking a closer at the case $\lambda=-1$, where $\hat{\g}_\lambda$ may
be further centrally extended to the BMS\textsubscript{3} algebra, we
have shown that for any extended \emph{chiral} CFT which admits
BMS\textsubscript{3} algebra symmetry, the BRST quantisation of such
field theories demands that the theory have central charge $(c_L,
c_M)=(52,0)$. This can be proved in two ways:
\begin{enumerate}
\item Algebraically formulate the BRST operator and Faddeev-Popov
  ghosts as the semi-infinite differential and the semi-infinite wedge
  representation of the BMS\textsubscript{3} algebra. The resulting
  ghost system forms a chiral BMS\textsubscript{3} field theory with
  $(c_L, c_M)=(-52,0)$. The vanishing of the total central charges,
  required for the BRST operator to be square-zero fixes the matter
  sector of the BMS\textsubscript{3} field theory to have
  $(c_L, c_M)=(52,0)$.
\item Take a completely field-theoretic approach and construct the
  most general chiral BMS\textsubscript{3} field theory from the
  $bc$-systems that appear as ghosts in the gauging procedure of
  BMS\textsubscript{3} symmetry. Doing so, one finds that obtaining a
  $c_M \neq 0$ realisation gives rise to quartic term that prevents
  the resulting BRST operator from being square-zero. It also forbids
  $M^\text{tot}$ from being BRST-exact, for all possible BRST
  operators. This once again forces the ghost sector to admit
  BMS\textsubscript{3} symmetry with $(c_L, c_M) = (-52,0)$, and we
  obtain the same conclusion as in the first approach.
\end{enumerate}

Now, can we still consider the notion of BRST quantisation of field
theories that admit BMS\textsubscript{3} algebra symmetry with
$c_M\neq 0$? As it stands, the answer is \emph{yes}. The findings here
only rule out the situations in which $c_M \neq 0$ is not possible,
namely chiral BMS\textsubscript{3} theories. Field theories which
admit BMS\textsubscript{3} symmetry in a manner that is not
holomorphically factorisable may still admit some consistent notion of
BRST quantisation with $c_M\neq 0$ through the formalism of
\subdefhighlight{full CFTs} \cite{MORIWAKI2023109125}. In particular,
we must consider that the BRST cohomology of such theories is not just
the semi-infinite cohomology of the underlying symmetry algebra of the
theory.

Another point to consider is the notion of ``flipped'' vacua in CFT
(e.g. \cite{Casali:2016atr, Hao:2021urq, Hwang:1998gs}). Such vacua
can be the starting points of physically valid constructions of
tensionless string spectra, as argued by the authors of
\cite{Bagchi:2020fpr}. More specifically, we need to pay attention to
the fact that the Virasoro automorphism $L_n \to -L_{-n},\ c\to -c$
does not lift to a VOA automorphism. Hence, without any further
assumptions, theories constructed as a result cannot be studied in a
rigorous manner using existing algebraic 2d CFT techniques.\footnote{In particular, CFTs with a normal vacuum in one sector and a flipped one in the other cannot be probed in this way. Intuitively, this is because boundedness conditions that naturally occur (e.g. highest weight or smooth modules of the underlying symmetry algebra) no longer exist when both of these vacua are put together in the same system.} We need an
alternative formalism (i.e., some sort of ``flipped'' VOA) to
rigorously encapsulate the modified normal-ordering with respect to
these flipped vacua. Perhaps such a formalism exists, using which one
can write down a different ``BRST quantisation'' procedure which
admits the existence of a square-zero BRST operator for $c_M \neq
0$. This would be particularly relevant to the case of tensionless
strings, because the BMS\textsubscript{3} symmetry not only emerges in
a non-chiral manner, but also in a way that mixes the positive and
negative modes of the two copies of the Virasoro algebra that appears
in the parent tensile closed string theory (i.e., an
``ultra-relativistic'' contraction \cite{Bagchi:2020fpr}).

Despite the aforementioned caveats preventing us from directly
applying our results to tensionless string theory, there do exist
string-theoretic realisations of the chiral BMS\textsubscript{3}
algebra. We have highlighted two such realisations in this paper:
\begin{enumerate}
    \item The ambitwistor string, given by \eqref{eq: ambitwistor Vir
        element} and \eqref{eq: ambitwistor M},
    \item The Nappi--Witten string, given by \eqref{eq: Tsug BMS3 from
        NW gen} and \eqref{eq: Msug BMS3 from NW gen}.
\end{enumerate}
A logical next step would be to seek other physical realisations of
this BMS\textsubscript{3} algebra, such as in terms of free bosons and
fermions. These would be intrinsic constructions, rather than as
limits of parent theories such as those considered in
\cite{Hao:2021urq, Yu:2022bcp, Hao:2022xhq}.

Naturally, one could also consider realising BMS\textsubscript{3}
using affine Kac-Moody currents. Sugawara constructions which are
compatible with Galilean contractions have been explored in
\cite{Rasmussen:2017eus, Ragoucy:2021iyo}, but again, we can look for
more general ones that need not necessarily be compatible with
contraction procedures. Doing so, one finds that although the end
product is a field-theoretic description of a Lie algebra
(i.e. BMS\textsubscript{3}), the conditions coming from this
construction are not Lie algebraic. In particular, $M$ need not be
built from an invariant tensor. Nonetheless, one could impose this as
a condition and then try classifying all the Lie algebras from which
one could build the BMS\textsubscript{3} algebra via the Sugawara
construction as a result. Some early progress in this regard, such as
the construction from the (generalised) Nappi--Witten algebra, looks
promising.

Finally, one could consider the BRST quantisation of
super-BMS\textsubscript{3} field theories. This could mean either the
minimally supersymmetric extension of a BMS\textsubscript{3} field
theory by a spin-$\nicefrac{3}{2}$ fermionic field or the algebra
obtained from the contraction of two copies of $N=1$ super-Virasoro
algebras \cite{Mandal:2016wrw}.

\section*{Acknowledgements}
JMF would like to acknowledge a useful conversation with Tim Adamo
about the ambitwistor string. JMF has spoken about this work in the
Erwin Schrödinger Institute, during his participation at the Thematic
Programme ``Carrollian Physics and Holography''. He would like to
thank the organisers, particularly Stefan Prohazka, for the invitation
and the hospitality. JMF has also spoken about this at the Niels Bohr
Institute and he would also like to thank Niels Obers and Emil Have for the
invitation and their hospitality. GSV would like to thank Arjun
Bagchi and David Ridout for insightful comments on an earlier draft of
this paper. GSV would also like to thank Ross McDonald, Christopher
Raymond and Ziqi Yan for thought-provoking discussions. GSV is
supported by a Science and Technologies Facilities Council studentship [grant
number 2615874].

\begin{appendices}

  \section{Proofs and calculations} \label{app: proofs and calculations}

  \subsection{Properties of meromorphic 2d CFTs}
  \label{app:prop-merom-2d}

Listed below are some key properties:
  \begin{enumerate}[label=(P\arabic*)]
    \item For all $A,B,C\in V$,
    \begin{equation*}
        \Bracket{A}{\Bracket{B}{C}{p}}{q} = \Koszulsign{A}{B} [B\Bracket{A}{C}{q}]_p + \sum_{l \geq 1} \binom{q-1}{l-1}\Bracket{\Bracket{A}{B}{l}}{C}{p+q-l} 
    \end{equation*}
    \label{item: Jac-identity 1}
    \item For all $A\in V$, $[A,-]_1$ is a super-derivation over all other $[-,-]_n$. That is,
    $$[A[BC]_n]_1 = [[AB]_1 C]_n  + \Koszulsign{A}{B}[B [AC]_1]_n.$$
    A special case of this is the derivation $[T,-]_1=\partial$
    \label{item: [A,-]_1 is a derivation over [-,-]_n}
    \item $[\partial A B]_n = -(n-1) [AB]_{n-1}$ and $[A \partial B]_n = (n-1)[AB]_{n-1} + \partial [AB]_{n}$. \label{item: proporties of partial in CFT}
    \item $(\partial A)_n = -(n+h_A) A_n$, where $A(z) = \sum_n A_n z^{-n-h_A}$ \label{item: modes of partial A}
    \item The brackets $[-,-]_n$ have conformal weight $-n$. That is,
    $$[-,-]_n\colon V_i \medotimes V_j \rightarrow V_{i+j-n}.$$ 
    \label{item: [-,-]_n weight -n}
    \item $[AB]_{n} = A_{n-h_A} B.$ \label{item: OPE to EndV acting on a state}
    \item The modes $\normalord{AB}_n$ in the expansion of the normal-ordered product of $A\in V_{h_A}$ and $B\in V_{h_B}$,   $\normalord{AB}(z) = \sum_n \normalord{AB}_n z^{-n-h_{A}-h_{B}}$, are given by
        \begin{equation*}
            \normalord{AB}_n = \sum_{l\leq -h_A} A_l B_{n-l} + \Koszulsign{A}{B} \sum_{l>-h_A} B_{n-l} A_l.
        \end{equation*}
        \label{item: Normal-ordered product mode exp}
    \item There exists a Lie superalgebra structure on the modes, given by
    \begin{equation*}
         [A_m, B_n] \defeq  A_m B_n - \Koszulsign{A}{B} B_n A_m = \sum_{l\geq 1} \binom{m+h_A-1}{l-1} \big([AB]_l\big)_{m+n}
    \end{equation*}
    \label{item: Mode algebra OPE relation}
  \end{enumerate}

\begin{proof}[Proof of \ref{item: Jac-identity 1} and \ref{item: [A,-]_1 is a derivation over [-,-]_n}]
First, we relabel $l\to l+q$ in the first summation of \ref{item: associativity CFT} and $l\to l+1$ in the second summation to rewrite the \ref{item: associativity CFT} as
    \begin{equation}
    \label{eq: relabelled associativity}
        [[AB]_p C]_q= \sum_{l\geq 0} (-1)^l \binom{p-1}{l}\left([A[BC]_{q+l}]_{p-l} + \Koszulsign[+p]{A}{B}[B[AC]_{l+1}]_{p+q-l-1} \right).
    \end{equation}
    Now consider the sum
    \begin{equation}
    \label{eq: associativity identity summed over}
        \sum_{l=0}^{p-1} \binom{p-1}{l}[[AB]_{p-l} C]_{q+l} = [[AB]_p C]_q + \binom{p-1}{1}[[AB]_{p-1} C]_{q+1} + \dots + [[AB]_1 C]_{p+q-1}.
    \end{equation}
    Using \eqref{eq: relabelled associativity}, we may write each term in the sum above as follows:
    \begin{equation*}
        \begin{split}
            [[AB]_p C]_q&= \sum_{k\geq 0} (-1)^k \binom{p-1}{k}\left([A[BC]_{q+k}]_{p-k} + \Koszulsign[+p]{A}{B}[B[AC]_{k+1}]_{p+q-k-1} \right)\\
            [[AB]_{p-1} C]_{q+1}&= \sum_{k\geq 0} (-1)^k \binom{p-2}{k}\left([A[BC]_{q+1+k}]_{p-1-k} - \Koszulsign[+p]{A}{B}[B[AC]_{k+1}]_{p+q-k-1} \right)\\
            &\vdots\\
            [[AB]_1 C]_{p+q-1}&=
            \left([A[BC]_{p+q-1}]_{1} - \Koszulsign{A}{B}[B[AC]_{1}]_{p+q-1} \right).
        \end{split}
    \end{equation*}
    Adding each term above after multiplying with the appropriate factor given in \eqref{eq: associativity identity summed over}, we notice that all terms, except for the $k=0$ and $k=p-1$ terms in the expansion of $[[AB]_p C]_q$, cancel out. Hence, we are left with
    \begin{equation}
        \sum_{l=0}^{p-1} \binom{p-1}{l}[[AB]_{p-l} C]_{q+l} = [A[BC]_q]_p - \Koszulsign{A}{B}[B[AC]_p]_q.
    \end{equation}
    Rearranging, we obtain \ref{item: Jac-identity 1}. Setting $q=1$ then proves \ref{item: [A,-]_1 is a derivation over [-,-]_n} right away.
\end{proof}

\begin{proof}[Proof of \ref{item: proporties of partial in CFT}]
From the definition of $\partial$,
\begin{align*}
    (\partial A)(z) B(w) \defeq \frac{d}{dz}A(z) B(w) 
    = \frac{d}{dz} \sum_{n\ll\infty} \frac{[AB]_n (w)}{(z-w)^n} 
    = \sum_{n\ll\infty} \frac{-n[AB]_n (w)}{(z-w)^{n+1}}.
\end{align*}
 But $(\partial A)(z) B(w)$ itself admits an OPE \[(\partial A)(z) B(w) = \sum_{n\ll\infty} \frac{[\partial A B]_n (w)}{(z-w)^n}.\] 
 Equating equal powers of $z-w$ gives \[ [\partial A B]_{n+1} = -n[AB]_n \iff [\partial A B]_n = -(n-1) [AB]_{n-1}.\]
 In a similar manner, or by using \ref{item: [A,-]_1 is a derivation over [-,-]_n} for $\partial = [T,-]_1$, we get $$[A \partial B]_n = (n-1)[AB]_{n-1} + \partial [AB]_{n}.$$
\end{proof}

\begin{proof}[Proof of \ref{item: modes of partial A}]
Using \ref{item: [A,-]_1 is a derivation over [-,-]_n} for $[T,-]_1$, we have
$$[T\partial A]_2=[TA]_1+\partial [TA]_2 = (h_A + 1)\partial A.$$
Hence, $(\partial A)(z)$ admits a mode expansion
\begin{align*}
    (\partial A)(z) = \sum_n (\partial A)_n z^{-n-(h_A + 1)}.
\end{align*}
On the other hand, \[(\partial A)(z) \defeq \frac{d}{dz} A(z)
    = \frac{d}{dz} \sum_n A_n z^{-n-h_A}
    = \sum_n -(n+h_A) A_n z^{-n-h_A-1}.\]
Equating the two mode expansions gives $(\partial A)_n = -(n+h_A)A_n.$
\end{proof}

\begin{proof}[Proof of \ref{item: [-,-]_n weight -n}]
Let $A\in V_{h_{A}}$ and $B\in V_{h_{B}}$. By \ref{item: D5 Vir element}, this means
$[TA]_2 = h_A A$ and $[TB]_2 = h_B B$. Thus,
\begin{align*}
    [T[AB]_n]_2 &= [A[TB]_2]_n + [[TA]_1 B]_{n+1} + [[TA]_2 B]_n  \tag*{by \ref{item: Jac-identity 1}}\\
    &= h_B [AB]_n + [\partial A B]_{n+1} + h_A [AB]_n \tag*{by \ref{item: D5 Vir element}}\\
    &= (h_A + h_B - n)[AB]_n \tag*{by \ref{item: proporties of partial in CFT}}.
\end{align*}
This proves that $[-,-]_n$ indeed has conformal weight $-n$. 
\end{proof}

\begin{proof}[Proof of \ref{item: OPE to EndV acting on a state}]
We have, by \ref{item: Identity axiom CFT},
\[\lim_{w\to0} A(z) B(w) \mathbbm{1} = A(z) B = \sum_n z^{-n-h_A} A_n B.\] 
At the same time,
\[\lim_{w\to0} \sum_n \frac{[AB]_n(w)}{(z-w)^n} \mathbbm{1} = \sum_n z^{-n} [AB]_n.\]
Equating equal powers of $z$ in the two expressions above gives $[AB]_n = A_{n-h_A} B$ as desired.
\end{proof}

\begin{proof}[Proof of \ref{item: Normal-ordered product mode exp}]
Using \ref{item: associativity CFT}, we can write $[\normalord{AB}C]$ as
\begin{equation}
\label{eq: bracket endo step 1}
    [\normalord{AB} C]_q = \sum_{l\geq q} [A[BC]_l]_{q-l} + \Koszulsign{A}{B} \sum_{l\geq1} [B[AC]_l]_{q-l}.
\end{equation}
Using properties \ref{item: [-,-]_n weight -n} and \ref{item: OPE to EndV acting on a state},
\begin{equation*}
\begin{split}
    &[\normalord{AB} C] = \normalord{AB}_{q-h_A-h_B} C.\\
    &[A[BC]_l]_{q-l} = A_{q-l-h_A}B_{l-h_B}C.\\
    &[B[AC]_l]_{q-l} = B_{q-l-h_b} A_{l-h_A}C.
\end{split}
\end{equation*}
Substituting these back into \eqref{eq: bracket endo step 1} gives
\begin{equation*}
    \normalord{AB}_{q-h_A-h_B} C = \left(\sum_{l\geq q} A_{q-l-h_A}B_{l-h_B} + \Koszulsign{A}{B} \sum_{l\geq 1} B_{q-l-h_b} A_{l-h_A} \right)C, \quad \forall C \in \M.
\end{equation*}
We may abstract $C$ since it holds true $\forall C \in \M$. Relabelling the first summation with $m=q-l-h_A$ and letting $n\defeq q-h_A-h_B$ gives
\[\sum_{l\geq q} A_{q-l-h_A}B_{l-h_B} = \sum_{m\leq -h_A} A_{m} B_{n-m} = \sum_{l\leq -h_A} A_{l} B_{n-l} .\]
Relabelling the second summation with $m = l-h_A$ gives
\[ \sum_{l\geq 1} B_{q-l-h_b} A_{l-h_A} = \sum_{m\geq -h_A+1} B_{n-m}A_m = \sum_{l> -h_A} B_{n-l} A_l.\]
Putting them back together gives us the desired result.
\end{proof}

\begin{proof}[Proof of \ref{item: Mode algebra OPE relation}]
For all $C\in V$, we may act $[A_m, B_n]\in \End V$ to get
\begin{align*}
    [A_m, B_n] C &\defeq  A_m (B_n C )- \Koszulsign{A}{B} B_n (A_m C)\\
    &= A_m \Bracket{B}{C}{n+h_B} - \Koszulsign{A}{B} B_n \Bracket{A}{C}{m+h_A} \tag*{by \ref{item: OPE to EndV acting on a state}}\\
    &= \Bracket{A}{\Bracket{B}{C}{n+h_B}}{m+h_A} - \Koszulsign{A}{B} \Bracket{B}{\Bracket{A}{C}{m+h_A}}{n+h_B}\tag*{by \ref{item: OPE to EndV acting on a state}}\\
    &= \sum_{l\geq  1} \binom{m+h_A - 1}{l-1} \Bracket{\Bracket{A}{B}{l}}{C}{m+n+h_A + h_B -l} \tag*{by \ref{item: Jac-identity 1}} \\
    &= \sum_{l\geq  1} \binom{m+h_A - 1}{l-1} \big(\Bracket{A}{B}{l}\big)_{m+n}C \tag*{by \ref{item: [-,-]_n weight -n} and \ref{item: OPE to EndV acting on a state}}. 
\end{align*}
Since it holds for any $C \in V$, we obtain the desired result. 
\end{proof}

\subsection{Some proofs in semi-infinite cohomology}
\label{app:some-proofs-semi}

\begin{proof}[Proof of Proposition~\ref{prop:cliff-module}]
Define $\kappa: \text{Cl}(\g\medoplus\g') \rightarrow \End(\semiinfforms)$ via $\kappa(x+x') \defeq \iota(x) +\varepsilon(x')$. We need to show that $(\kappa(x+x'))^2 = (x+x')\cdot (x+x') = \langle x',x \rangle \Id_{\semiinfforms}$. 
\begin{equation*}
         (\kappa(x+x'))^2 = \iota(x)^2 +\varepsilon(x')^2 + \iota(x)\varepsilon(x') + \varepsilon(x')\iota(x) =[\iota(x),\varepsilon(x')] = \langle x', x \rangle \Id_{\semiinfforms},
\end{equation*}
where the last equality follows from lemma \ref{lem: fundamental_anticomms}.
\end{proof}

\begin{proof}[Proof of Proposition \ref{prop: rho_intext_commrels}]
We will perform calculations on monomials and the argument extends to all semi-infinite forms by $\mathbb{C}$-linearity.
\begin{equation*}
    \begin{aligned}
        [\rho(x), \varepsilon(y')] e'_{i_1} \wedge e'_{i_2} \wedge \dots 
          &=  \ad'_x y' \wedge e'_{i_1} \wedge e'_{i_2} \wedge \dots  + \sum_{k\geq 1} y'\wedge e'_{i_1}\wedge e'_{i_2} \wedge \dots \wedge \ad'_x e'_{i_k} \wedge \dots\\
          &\ - y' \wedge \sum_{k\geq 1} e'_{i_1} \wedge e'_{i_2} \wedge \dots \wedge \ad'_x e'_{i_k} \wedge \dots \\
          &= \ad'_x y'\wedge e'_{i_1} \wedge e'_{i_2} \wedge \dots\\
          &= \varepsilon(\ad'_x y') e'_{i_1} \wedge e'_{i_2} \wedge \dots \\[1em]
         [\rho(x), \iota(y)] e'_{i_1} \wedge e'_{i_2} \wedge \dots 
            &= \rho(x) \sum_{k\geq 1} (-1)^{k-1} \langle y,e'_{i_k} \rangle e'_{i_1} \wedge e'_{i_2} \wedge \dots \wedge \widehat{e'_{i_k}} \wedge \dots\\
            &\ - \iota(y) \sum_{k\geq1} e'_{i_1} \wedge e'_{i_2} \wedge \dots \wedge \ad'_x e'_{i_k} \wedge \dots\\
            &= \sum_{k\geq 1} (-1)^{k-1} -\langle y, \ad'_x e'_{i_k} \rangle e'_{i_1} \wedge e'_{i_2} \wedge \dots \wedge \widehat{\ad'_x e'_{i_k}} \wedge \dots \\
            &= \sum_{k\geq 1} (-1)^{k-1} \langle \ad_x y, e'_{i_k} \rangle e'_{i_1} \wedge e'_{i_2} \wedge \dots \wedge \widehat{e'_{i_k}} \wedge \dots\\
            &= \iota(\ad_x y) e'_{i_1} \wedge e'_{i_2} \wedge \dots
    \end{aligned}
\end{equation*}
\end{proof}

\begin{proof}[Proof of Proposition \ref{prop: FGZ_prop_1.1}]
Naively, the failure of $\rho\colon\g\rightarrow\semiinfforms$ to be a representation, given by $[\rho(x), \rho(y)] - \rho([x,y]) = \gamma(x,y)$, is encapsulated by some bilinear form $\gamma\colon\g\times\g \to \CC$ which is non-zero only if $x\in\g_n$ and $y\in\g_n$, for all $n\in\ZZ$. We deduce the fact that $\gamma(x,y)=\gamma(y,x)$ from the antisymmetry of the commutator and Lie brackets on the LHS. The fact that  is encapsulated by a 2-cocycle in $\gamma$ follows from the Jacobi identity of both $\g$ and the associative algebra on $\End\semiinfforms$ given by the commutator bracket. 
    \begin{equation}
    \begin{split}
        0&= [\rho(x), [\rho(y),\rho(z)]] - \rho([x,[y,z]]) +  [\rho(y), [\rho(z),\rho(x)]] - \rho([y,[z,x]])\\
         &+  [\rho(z), [\rho(x),\rho(y)]] - \rho([z,[x,y]]) \\
         &= \gamma(x,[y,z]) + \gamma(y,[z,x]) + \gamma(z,[x,y])\\
    \end{split}
    \end{equation}
    Hence, $\gamma$ obeys the cocycle condition.
    The fact that this failure is only important up to an equivalence class in cohomology is reinforced in the second statement of proposition \ref{prop: FGZ_prop_1.1}, proved as follows:
    If $\gamma$ is a coboundary, there exists $\alpha\in\g'$ such that $\gamma = \partial\alpha$ and recall that $\gamma(x,y) = \partial\alpha (x,y) = -\alpha([x,y])$. In particular, this means that $\alpha\in\g'_0$. We currently have a $\rho\colon \g \rightarrow \End(\semiinfforms)$ that obeys \ref{prop: FGZ_prop_1.1}. Let us define a new representation $\Tilde{\rho}\colon \g \rightarrow \End(\semiinfforms)$ given by $\Tilde{\rho}(x) \defeq \rho(x) - \langle \alpha, x \rangle$. Then
\begin{align*}
    [\Tilde{\rho}(x), \Tilde{\rho}(y)] 
        &\defeq [\rho(x) - \langle \alpha, x \rangle, \rho(y) - \langle \alpha, y \rangle]\\
        &\ =[\rho(x), \rho(y)]\\
        &\ =\rho([x,y]) + \gamma(x,y) \tag*{(by proposition \ref{prop: FGZ_prop_1.1})} \\
        &\ =\rho([x,y]) + \partial\alpha(x,y) \tag*{(by definition of $\gamma$)} \\
        &\ =\rho([x,y]) - \langle \alpha, [x,y] \rangle \\  
        &\ =\Tilde{\rho}([x,y]) \tag*{(by definition of $\Tilde{\rho}$)}.
\end{align*}
Any $\omega_0$ defines $\rho\colon \g \rightarrow \semiinfforms$ satisfying proposition \ref{prop: FGZ_prop_1.1} using some choice of $\beta$. If $\exists \alpha \in \g_0$ such that $\gamma = \partial \alpha$, then one can make the modification $\beta \rightarrow \Tilde{\beta} \defeq \beta - \alpha$ so that $\gamma = 0$.
\end{proof}

\begin{proof}[Proof of Theorem \ref{thm: semiinfrep of g_lambda central charge}]
We first recall our choice of basis \eqref{eq: g_lambda basis e_i}, vacuum \eqref{eq: g_lambda vacuum} and the statement of lemma \ref{lem: rho(L_n) and rho(M_n) semiinfforms}. Choosing $n>0$, it follows that
\begin{equation*}
        [\rho(L_n), \rho(L_{-n})]\omega_0 = \big(\rho(L_n) \rho(L_{-n}) - \rho(L_{-n}) \rho(L_n)\big)\omega_0 = \rho(L_n) \rho(L_{-n}) \omega_0.
\end{equation*}
First, we simplify $\rho(L_{-n})\omega_0$ using \eqref{eq: normal-ordered product SIC} as follows:
\begin{equation*}
\begin{split}
    \rho(L_{-n})\omega_0 = \bigg(-&\sum_{i\leq 1} (n+i) \iota(L_{i-n}) \varepsilon(L'_{i}) - \sum_{i\leq 1} (i-\lambda n) \iota(M_{i-n})\varepsilon (M'_i)  \\
    +&\sum_{i>1} (n+i) \varepsilon(L'_{i}) \iota(L_{i-n}) + \sum_{i>1} (i-\lambda n) \varepsilon(M'_i)\iota(M_{i-n})\bigg)\omega_0 \\
    =&\bigg(\sum_{i=2}^{n+1} (n+i) \varepsilon(L'_{i})\iota(L_{i-n}) + \sum_{i=2}^{n+1} (i-\lambda n) \varepsilon(M'_i)\iota(M_{i-n})\bigg)\omega_0
\end{split}
\end{equation*}
Thus, as expected, the infinite sums in $\rho(L_{-n})$ truncate to finite ones when acting on the vacuum $\omega_0$.
Splitting each normal-ordered term in $\rho(L_n)$ according to the normal-ordering prescription \eqref{eq: normal-ordered product SIC}, we may expand $\rho(L_n)\rho(L_{-n})\omega_0$ into the 8 terms as done below:
\begin{equation*}
    \begin{split}
        \rho(L_n)\rho(L_{-n})\omega_0 &= \sum_{j\leq 1} \sum_{i=2}^{n+1} (n-j)(n+i) \iota(L_{j+n})\varepsilon(L'_j) \varepsilon(L'_{i})\iota(L_{i-n}) \omega_0\\
        &+\sum_{j\leq 1} \sum_{i=2}^{n+1} (n-j)(i-\lambda n)  \iota(L_{j+n})\varepsilon(L'_j) \varepsilon(M'_i)\iota(M_{i-n})\omega_0\\
        &-\sum_{j\leq 1} \sum_{i=2}^{n+1} (j+\lambda n) (n+i)\iota(M_{j+n})\varepsilon (M'_j) \varepsilon(L'_{i})\iota(L_{i-n})\omega_0\\
        &-\sum_{j\leq 1} \sum_{i=2}^{n+1} (j+\lambda n)(i-\lambda n) \iota(M_{j+n})\varepsilon (M'_j) \varepsilon(M'_i)\iota(M_{i-n})\omega_0\\
        &-\sum_{j>1}\sum_{i=2}^{n+1} (n-j)(n+i) \varepsilon(L'_j)\iota(L_{j+n})  \varepsilon(L'_{i})\iota(L_{i-n}) \omega_0\\
        &-\sum_{j>1}\sum_{i=2}^{n+1} (n-j)(i-\lambda n) \varepsilon(L'_j)\iota(L_{j+n}) \varepsilon(M'_i)\iota(M_{i-n})\omega_0\\
        &+\sum_{j>1}\sum_{i=2}^{n+1} (j+\lambda n)(n+i) \varepsilon(M'_j)\iota(M_{j+n}) \varepsilon(L'_{i})\iota(L_{i-n})\omega_0\\
        &+\sum_{j>1}\sum_{i=2}^{n+1} (j+\lambda n)(i-\lambda n) \varepsilon(M'_j)\iota(M_{j+n}) \varepsilon(M'_i)\iota(M_{i-n})\omega_0.
    \end{split}
\end{equation*}
Lines 2, 3, 6 and 7 vanish because the annihilation operators coming from the expansion of $\rho(L_n)$ can freely (up to a sign) commute past the other operators to act on the vacuum without the addition of any other non-trivial terms. Lines 5 and 8 are also vanishing because the non-trivial contribution we get from commuting the annihilation operators past the others is a $\delta_{j,-n+i}$ term, which is never non-zero for the values $i$ and $j$ take in those sums. Thus, the only non-zero contributions are from lines 1 and 4. After perfoming the necessary commutations, we are left with
\begin{equation*}
\begin{split}
    \rho(L_n)\rho(L_{-n})\omega_0 &= \sum_{j\leq 1} \sum_{i=2}^{n+1} -(n-j)(n+i) \iota(L_{j+n})\varepsilon(L'_{i})\delta_{j,i-n} \omega_0\\
    &+\sum_{j\leq 1} \sum_{i=2}^{n+1} (j+\lambda n)(i-\lambda n)\iota(M_{j+n}) \varepsilon(M'_i)\delta_{j,i-n}\omega_0\\
    &=\sum_{i=2}^{n+1} \Big((i-2n)(n+i)\iota(L_i)\varepsilon(L'_i)+\big(i+(\lambda-1) n\big)(i-\lambda n)\iota(M_{i}) \varepsilon(M'_i)\Big)\omega_0.\\
    &=\sum_{i=2}^{n+1} \Big((i-2n)(n+i)+\big(i+(\lambda-1) n\big)(i-\lambda n)\Big)\omega_0.
\end{split}
\end{equation*}
On the other hand, $\rho([L_n, L_{-n}])\omega_0=2n\rho(L_0)\omega_0=0$. Thus, any non-zero contribution to $[\rho(L_n), \rho(L_{-n})]\omega_0$ is either from a coboundary term (i.e. a different choice of $\beta$) which, according to the form of \eqref{eq: rho_semiinfforms}, must be proportional to $L'_0$ or a cohomologically non-trivial cocycle term which implies that we would need to centrally extend our Lie algebra to make $\rho$ a representation.
This non-zero contribution is precisely the finite sum we have obtained above, which we now evaluate:
\begin{equation*}
\begin{split}
    \sum_{i=2}^{n+1} \Big((i-2n)(n+i)+\big(i+(\lambda-1) n\big)(i-\lambda n)\Big)
    =&\sum_{i=2}^{n+1}\Big(2i^2-2ni-\big(2+\lambda(\lambda-1)\big)n^2\Big)\\
    =&\frac{2}{3}n^3+3n^2+\frac{13}{3}n-n^3-3n^2-\big(2+\lambda(\lambda-1)\big)n^3\\
    =&\left(-\frac{1}{3}-2-\lambda^2+\lambda\right)n^3+\frac{13}{3}n\\
    =&\left(-\frac{7}{3}-\lambda^2+\lambda\right)n^3+\frac{13}{3}n\\
    =&\left(-\frac{7}{3}+\frac{1}{6}-\frac{1}{6}-\lambda^2+\lambda\right)n^3+\frac{13}{6}n+\frac{13}{6}n\\
    =&-\frac{13}{6}(n^3-n)+\frac{1}{6}(-1-6\lambda^2+6\lambda)n^3+\frac{13}{6}n\\
    =&-\frac{13}{6}(n^3-n)-\frac{1}{12}\big(2(6\lambda^2-6\lambda+1)\big)n^3+\frac{13}{6}n\\
    &+\frac{1}{12}\big(2(6\lambda^2-6\lambda+1)\big)n-\frac{1}{12}\big(2(6\lambda^2-6\lambda+1)\big)n\\
    =&-\frac{1}{12}\big(26+2(6\lambda^2-6\lambda+1)\big)(n^3-n)\\
    &+ \frac{1}{12}(26-2(6\lambda^2-6\lambda+1)n.
\end{split}
\end{equation*}
Thus, we have manipulated $[\rho(L_n),\rho(L_{-n})]\omega_0$ into the above form containing two terms. The first terms is proportional to the Gelfand-Fuks cocycle given by \ref{eq: Gelfand-Fuks cocycle}, while the second term is proportional to $n$. The presence of the first term indicates that we need to centrally extend $\g_\lambda$ to $\hat{\g}_\lambda$ using the Gelfand-Fuks cocycle in order to make $\rho$ a Lie algebra representation on $\semiinfforms$. The proportionality factor is precisely the action of the new central charge on $\semiinfforms$, i.e. 
\begin{equation}
\label{eq: rho(c_L) computed}
    \rho(c_L) = -\big(26+2(6\lambda^2-6\lambda+1)\big)\Id_{\semiinfforms}.
\end{equation}
The contribution proportional to $n$ can be absorbed by modifying our initial naive choice of $\beta=0$ to
\begin{equation}
\label{eq: new_beta}
    \beta = \frac{1}{12}(13-6\lambda^2+6\lambda-1)L'_0.
\end{equation}
For brevity, we reiterate the fact that $\rho\colon\hat{\g}_\lambda\rightarrow\semiinfforms$ defines a representation of $\hat{\g}_\lambda$ on $\semiinfforms$:
\begin{equation*}
\begin{split}
    &[\rho(L_n),\rho(L_{-n})]\omega_0 = \rho([L_n, L_{-n}])\omega_0\\
    \iff &-\frac{1}{12}\big(26+2(6\lambda^2-6\lambda+1)\big)(n^3-n)+ \frac{1}{12}(26-2(6\lambda^2-6\lambda+1)n = 2n\rho(L_0)\omega_0 +\frac{1}{12}n(n^2-1)\rho(c_L).
\end{split}
\end{equation*}
Substituting \eqref{eq: rho(c_L) computed} and \eqref{eq: new_beta} above, we see an agreement of the LHS and RHS, demonstrating the validity of our calculations. For generic $\lambda\in\ZZ$, in particular, for $\lambda \neq -1,0,1$, $\g_\lambda$ has no other cohomologically non-trivial 2-cocycles. Hence, it suffices to check the failure of $\rho$ on just the $L_n$ generators.

On the other hand, when $\lambda=-1,0,1$, we need to perform additional checks on other pairs of generators. Taking $\lambda=-1$, we may repeat the exact calculation above for the centreless BMS\textsubscript{3} algebra. As expected, we get $\rho(c_L) = -52$, which is in agreement with the general case \eqref{eq: rho(c_L) computed}. However, $\g_{\lambda=-1}$ admits a second cohomologically non-trivial 2-cocycle 
\begin{equation*}
    \gamma_M(L_n, M_m) = \frac{1}{12}n(n^2-1)\delta^0_{m+n}.
\end{equation*}
Hence, we need to check if this 2-cocycle $\gamma_M$ is required to ensure that $\rho$ does not fail as a representation on $\semiinfforms$. Once again, choosing the same basis \eqref{eq: g_lambda basis e_i}, vacuum \eqref{eq: g_lambda vacuum} and $\beta$ as given in \eqref{eq: new_beta} with $\lambda=-1$, 
\begin{equation*}
    [\rho(L_n), \rho(M_{-n})]\omega_0 = \big(\rho(L_n) \rho(M_{-n}) - \rho(M_{-n}) \rho(L_n)\big)\omega_0 = \rho(L_n) \rho(M_{-n}) \omega_0
\end{equation*}
Using lemma \ref{lem: rho(L_n) and rho(M_n) semiinfforms} and \eqref{eq: normal-ordered product SIC},
\begin{equation*}
\begin{split}
    \rho(M_{-n})\omega_0 
        &= \left(-\sum_{i\leq 1} (n+i) \iota(M_{i-n}) \varepsilon(L'_i) + \sum_{i>1} (n+i)\varepsilon(L'_i)\iota(M_{i-n}) \right)\omega_0\\
        &= \sum^{n+1}_{i=2}(n+i)\varepsilon(L'_i)\iota(M_{i-n}) \omega_0.
\end{split}
\end{equation*}
For clarity, we explicitly write out the 4 terms in $\rho(L_n)\rho(M_{-n})\omega_0$:
\begin{equation*}
    \begin{split}
        \rho(L_n)\rho(M_{-n})\omega_0 &= \sum_{j\leq1}\sum^{n+1}_{i=2}(n-j)(n+i) \iota(L_{j+n})\varepsilon(L'_j) \varepsilon(L'_i)\iota(M_{i-n})\omega_0\\
        &-\sum_{j>1}\sum^{n+1}_{i=2} (n-j)(n+i) \varepsilon(L'_j)\iota(L_{j+n})\varepsilon(L'_i)\iota(M_{i-n})\omega_0\\
        &-\sum_{j\leq1}\sum^{n+1}_{i=2}(j-n)(n+i)\iota(M_{j+n})\varepsilon (M'_j)\varepsilon(L'_i)\iota(M_{i-n})\omega_0\\
        &+\sum_{j>1} \sum^{n+1}_{i=2}(j-n)(n+i) \varepsilon (M'_j)\iota(M_{j+n})\varepsilon(L'_i)\iota(M_{i-n})\omega_0
    \end{split}
\end{equation*}
By the same arguments, we are only left with line 3:
\begin{equation*}
    \rho(L_n)\rho(M_{-n})\omega_0 =\sum^{n+1}_{i=2}\big(i-2n\big)(n+i)\iota(M_i)\varepsilon(L'_i)\omega_0.
\end{equation*}
However, this time, the RHS is zero since the contraction and wedge operations that appear in the RHS are not canonically dual to each other, and thereby anti-commute freely. Thus, we do not need to modify $\g_{\lambda=-1}$ through the addition of the second non-trivial 2-cocycle $\gamma_M$ to make $\rho$ a representation on $\semiinfforms$. One can perform identical calculations with $\lambda=0$ and $\lambda=1$ as well, since those are the only other values for $\lambda$ which $\dim H^2(\g_\lambda)>1$. It then follows that $\rho\colon\hat{\g}_\lambda\rightarrow\End\semiinfforms$ indeed defines a Lie algebra representation for all $\lambda\in\ZZ$.
\end{proof}

\begin{proof}[Proof of Theorem \ref{thm: BRST quantisation of ghat_lambda field theories}]
  As mentioned earlier, the computation of the square of the BRST
  operator is one of the most prominent examples of the computational
  power of the field-theoretic formulation of semi-infinite
  cohomology. To compute $d^2$ would require the simplification of the
  product of two infinite sums, each of which is a nested infinite sum
  of products of the modes $b_n,\ c_n,\ B_n$ and $C_n$. Such an
  immensely tedious calculation is greatly simplified as follows. We
  first notice that for any $Y \in \M \medotimes \semiinfforms$,
  \begin{align*}
      d^2 Y &= [\mathfrak{J}[\mathfrak{J} Y]_1]_1 \tag*{by \ref{item: OPE to EndV acting on a state}} \\
        &=-[\mathfrak{J}[\mathfrak{J} Y]_1]_1 + \sum_{l\geq 1}\binom{0}{l-1}[[\mathfrak{J} \mathfrak{J}]_l Y]_{2-l} \tag*{by \ref{item: Jac-identity 1}}
  \end{align*}
  and hence $d^2 Y  = \tfrac12 [[\mathfrak{J}\, \mathfrak{J}]_1\,
  Y]_1$. 
  Computing the OPE of $\mathfrak{J}$ with itself, we get
    \begin{equation}
        \label{eq: jBRSTjBRST first order pole}        [\mathfrak{J}\mathfrak{J}]_1=\frac{1}{2}(1+\lambda)\partial\normalord{McC}+\left(\frac{7}{4}-\frac{3\lambda}{4}+\frac{\lambda^2}{2}\right)\normalord{\partial^2 c \partial c} + \left(-\frac{7}{12}+\frac{c^{mat}}{12}+\frac{\lambda}{4}-\frac{\lambda^2}{2}\right)\normalord{\partial^3 c c}.
    \end{equation}
    We demand that $\ker\big([[\mathfrak{J}\mathfrak{J}]_1,-]_1\big)=\M \medotimes \semiinfforms$ by enforcing that this map is zero on all generators of the $\hat{\g}_\lambda$ field theory. Doing so enforces $c^{mat}=4(7-3\lambda+3\lambda^2)=26+2(6\lambda^2-6\lambda+1)$. This is exactly in agreement with the statement of lemma \ref{lem: rel. SIC requirement}. Notice that this makes $[\mathfrak{J}\mathfrak{J}]_1$ a total derivative, so that by \ref{item: proporties of partial in CFT}, $d^2 Y = [[\mathfrak{J}\mathfrak{J}]_1 Y]_1 = 0$ for all $Y\in \M\medotimes\semiinfforms$ indeed.
\end{proof}
  
\end{appendices}

\printbibliography

\end{document}